\definecolor{dgreyblue}{rgb}{0.26,0.3,0.46}             %89.25 102 158.10
\newcommand{\cA}{\mathcal{A}}
\newcommand{\cB}{\mathcal{B}}
\newcommand{\cE}{\mathcal{E}}
\newcommand{\cI}{\mathcal{I}}
\newcommand{\cM}{\mathcal{M}}
\newcommand{\mopt}{{m}_{\mathrm{opt}}}
\newcommand{\Mopt}{{M}_{\mathrm{opt}}}
\newcommand{\cT}{\mathcal{T}}
\newcommand{\R}{{\mathbb R}}  %ams bold
\newcommand{\PP}{{\mathbb P}}  %ams bold
\renewcommand{\text}[1]{\hbox{\rm \ #1\ \/}}
\newcommand{\be}[1]{\begin{equation}\label{#1}}
\newcommand{\ee}{\end{equation}}
\newcommand{\beqn}{\begin{eqnarray*}}
\newcommand{\eeqn}{\end{eqnarray*}}
\newcommand{\beq}{\begin{eqnarray}}
\newcommand{\eeq}{\end{eqnarray}}
\newcommand{\ben}{\begin{enumerate}}
\newcommand{\een}{\end{enumerate}}
\newcommand{\bi}{\begin{itemize}}
\newcommand{\ei}{\end{itemize}}
\newcommand{\eps}{\varepsilon}
\newcommand{\IE}{{\em i.e.}\xspace}
\newcommand{\tx}{^{\rm th}}
\newcommand{\Ave}[1]{\mathbb{E}\left[ #1\right]}
\newtheorem{fact}{Fact}
\newtheorem{theorem}{Theorem}
\newtheorem{remark}{Remark}
\newtheorem{lemma}[theorem]{Lemma}
\newtheorem{corollary}[theorem]{Corollary}
\newtheorem{definition}[theorem]{Definition}
\newenvironment{proof-sketch}{{\noindent\bf Sketch of Proof.\ }}{\hfill{\Pisymbol{pzd}{113}}\vspace{0.1in}}
\newcommand{\LP}{\mathsf{LP}}
\newcommand{\cS}{\mathcal{S}}
\newcommand{\M}{\mathcal{M}}
\newcommand{\nbr}{\mathsf{Nbr}}
\newcommand{\EA}{{\em et al.}\xspace}
\newcommand{\TB}{\vspace{-0.1ex}}\newcommand{\TiE}{\setlength{\itemsep}{-1ex}}
\newtheorem{proposition}{Proposition}
\newcommand{\eX}[1]{{\mathbb{E}}[#1]}
\newcommand{\EG}{{\it e.g.}\xspace}
\newcommand{\FI}[1]{Fig.~\ref{#1}\xspace}
\newcommand{\cQ}{\mathcal{Q}}
\newcommand{\dist}{\mathrm{dist}}
\newcommand{\ddist}{\mathrm{dist}}
\newcommand{\G}{\mathcal{G}}
\newcommand{\eqdef}{\stackrel{\mathrm{def}}{=}}
\definecolor{columbiablue}{rgb}{0.61, 0.87, 1.0}
\newcommand{\opt}{{\mathsf{OPT}}}
\newcommand{\emd}{{\sc Emd}}
\newcommand{\aaa}{\mathfrak{A}}
\newcommand{\wdeg}{\mbox{wt-}\deg}
\newtheorem{observation}{Observation}
\journal{Theoretical Computer Science}
\begin{document}

\begin{frontmatter}

%--------------------------------------------------------------------------------------
\title{On computing Discretized Ricci curvatures of graphs: local algorithms and (localized) fine-grained reductions}
%--------------------------------------------------------------------------------------

%--------------------------------------------------------------------------------------
\author[1]{Bhaskar DasGupta\corref{cor1}\fnref{fn1}}
\ead{bdasgup@uic.edu}

\fntext[fn1]{Supported by NSF grant IIS-1814931.}

\cortext[cor1]{Corresponding author.}

\author[2]{Elena Grigorescu\fnref{fn2}}
\ead{elena-g@purdue.edu}

\fntext[fn2]{Supported in part by NSF grants CCF-1910659 and CCF-1910411.}

\author[2]{Tamalika Mukherjee\fnref{fn2}}
\ead{tmukherj@purdue.edu}

\affiliation[1]{organization={Department of Computer Science, University of Illinois Chicago},
            city={Chicago},
            postcode={60607}, 
            state={IL},
            country={USA}}

\affiliation[2]{organization={Department of Computer Science, Purdue University},
            city={West Lafayette},
            postcode={47907}, 
            state={IN},
            country={USA}}

%--------------------------------------------------------------------------------------

\begin{abstract}
Characterizing shapes of high-dimensional objects via Ricci curvatures plays a critical role in 
many research areas in mathematics and physics. However, even though several discretizations of 
Ricci curvatures for discrete combinatorial objects such as networks have been proposed and 
studied by mathematicians, the computational complexity aspects of these discretizations have 
escaped the attention of theoretical computer scientists to a large extent. In this paper, we 
study one such discretization, namely the Ollivier-Ricci curvature, from the perspective 
of efficient computation by fine-grained reductions and local query-based algorithms.
Our main contributions are the following.
%%%%%%%%%%%%%%%%%%%%%%%%%%%%%%%%%%%%%%%%%%%%%%%%%%%%%%%%%%%%%%%%%%%%%%%%%%%%%%%%%%%%%%%%%%%%%
\begin{enumerate}[label=$\triangleright$]
\item
We relate our curvature computation problem to 
minimum weight perfect matching problem on complete bipartite graphs
via fine-grained reduction.
%%%%%%%%%%%%%%%%%%%%%%%%%%%%%%%%%%%%%%%%%%%%%%%%%%%%%%%%%%%%%%%%%%%%%%%%%%%%%%%%%%%%%%%%%%%%%
\item
We formalize the computational aspects of the
curvature computation problems in suitable frameworks so that they can be studied by 
researchers in local algorithms. 
%%%%%%%%%%%%%%%%%%%%%%%%%%%%%%%%%%%%%%%%%%%%%%%%%%%%%%%%%%%%%%%%%%%%%%%%%%%%%%%%%%%%%%%%%%%%%
\item
We provide the first known lower and upper bounds 
on queries for query-based algorithms for 
the curvature computation problems
in our local algorithms framework. 
%%%
\emph{En route}, 
we also illustrate a localized version of our fine-grained reduction.
%%%%%%%%%%%%%%%%%%%%%%%%%%%%%%%%%%%%%%%%%%%%%%%%%%%%%%%%%%%%%%%%%%%%%%%%%%%%%%%%%%%%%%%%%%%%%
\end{enumerate}
We believe that our results bring forth an intriguing set of research questions,
motivated both in theory and practice,
regarding designing efficient algorithms for curvatures of geometrical objects.
\end{abstract}

%%Research highlights
%%%%%%%%%%%%%%%%%%%%%%%%%%%%%%%%%%%%%%%%%%%%%%%%%%%%%%%%%%%%%%%%%%%%%%%%%%%%%%%%%%%%%%%%%%%%%
\begin{highlights}
\item 
We relate our curvature computation problem to 
minimum weight perfect matching problem on complete bipartite graphs
via fine-grained reduction.
%%%%%%%%%%%%%%%%%%%%%%%%%%%%%%%%%%%%%%%%%%%%%%%%%%%%%%%%%%%%%%%%%%%%%%%%%%%%%%%%%%%%%%%%%%%%%
\item 
We formalize the computational aspects of the
curvature computation problems in suitable frameworks so that they can be studied by 
researchers in local algorithms. 
%%%%%%%%%%%%%%%%%%%%%%%%%%%%%%%%%%%%%%%%%%%%%%%%%%%%%%%%%%%%%%%%%%%%%%%%%%%%%%%%%%%%%%%%%%%%%
\item 
We provide the first known lower and upper bounds 
on queries for query-based algorithms for
the curvature computation problems
in our local algorithms framework. 
%%%
\emph{En route}, 
we also illustrate a localized version of our fine-grained reduction.
\end{highlights}
%%%%%%%%%%%%%%%%%%%%%%%%%%%%%%%%%%%%%%%%%%%%%%%%%%%%%%%%%%%%%%%%%%%%%%%%%%%%%%%%%%%%%%%%%%%%%

\begin{keyword}
%% keywords here, in the form: keyword \sep keyword
Network shape \sep discrete Ricci curvature \sep query-based local algorithms
%%
%% PACS codes here, in the form: \PACS code \sep code
%%
%% MSC codes here, in the form: \MSC code \sep code
\MSC 68Q25 \sep 68Q17 \sep 68W25 \sep 68W20 \sep 68W40
%% or \MSC[2008] code \sep code (2000 is the default)
\end{keyword}

\end{frontmatter}

%%\linenumbers

%--------------------------------------------------------------------------------------
%--------------------------------------------------------------------------------------
%% main text

\section{Introduction}

A suitable notion of ``shape'' plays a critical role in investigating objects in mathematics, 
mathematical physics and other research areas.
Various kinds of curvatures are very natural measures of shapes of higher dimensional objects in 
mainstream physics and mathematics~\cite{book99,Berger12}.
To quantify the shape of a higher-dimensional geometric object, one often fixes shapes of objects with specific 
properties as the ``baseline shape'' 
and then quantifies the shape of a given object \emph{with respect to} these baseline shapes.
For example, consider the case of the two-dimensional metric space. For this space, a baseline could be selected as the standard
\emph{Euclidean plane} in which the three angles of a triangle sum up to \emph{exactly} $180^{\circ}$, 
and then one can quantify the shape 
of the given two-dimensional space by the \emph{deviations} of the sum of the three angles of triangles
in this space from the baseline of $180^\circ$.
An alternative approach is to avoid selecting baseline shapes \emph{explicitly} and instead \emph{directly} quantify the shape
of a given geometric object. 
Quantification of shape is often referred to as the \emph{curvature} of the corresponding object.
Quantification of shapes can be either \emph{local} or \emph{global}. 
A local shape of the object is usually computed for a specific \emph{local neighborhood} of the object (\EG, the Ricci curvature).
In contrast, a global shape of the object is usually computed over the entire object (\EG, the Gromov-hyperbolicity measure). 
Any attempt to extend notions of curvature measures from non-network domains to networks\footnote{In this paper, we will use the 
two terms ``graph'' and ``network'' \emph{interchangeably}.}
(and other discrete combinatorial structures)
need to overcome at least three key challenges, namely that 
{(\emph{a})}
networks are \emph{discrete} (non-continuous) combinatorial objects,
{(\emph{b})}
networks may \emph{not} necessarily have an associated natural geometric embedding,
and 
{(\emph{c})}
the extension need to be useful and non-trivial, \IE, 
a network curvature measure
should saliently encode non-trivial higher-order correlations among nodes and edges that
\emph{cannot} be obtained by other popular network measures.

%%%%%%%%%%%%%%%%%%%%%%%%%%%%%%%%%%%%%%%%%%%%%%%%%%%%%%%%%%%%%%%%%%%%%%%%%%%%%%%%%%%%%%%%%%%%%%%%%%%%%%%%%%%%%%%%
\subsection{Motivations behind studying shapes of networks}
%%%%%%%%%%%%%%%%%%%%%%%%%%%%%%%%%%%%%%%%%%%%%%%%%%%%%%%%%%%%%%%%%%%%%%%%%%%%%%%%%%%%%%%%%%%%%%%%%%%%%%%%%%%%%%%%

Although studying measures of shapes of networks (and hypergraphs) is mathematically intriguing, it is natural 
to ask if there are other valid reasons for such studies. 
Network shape measures 
\emph{can} encode 
non-trivial topological properties
that are \emph{not} expressed by more 
established network-theoretic measures such as
degree distributions, 
clustering coefficients or 
betweenness centralities (\EG, see~\cite{ADM14,CATAD21}). 
Moreover, 
these shape measures can explain many phenomena one frequently encounters in real network-theoretic applications,
such as 
\textbf{(\emph{i})}
paths mediating up- or down-regulation of a target node starting from the same regulator node in 
\emph{biological regulatory networks} often have many small crosstalk paths~\cite{ADM14} and
\textbf{(\emph{ii})}
existence of congestions in a node that is not a hub in \emph{traffic networks}~\cite{ADM14,JLBB11},
{that are \emph{not} easily explained by other non-shape measures}.
Recently, shape measures have also found applications in traditional social networks applications 
such as community finding~\cite{SJB21}, and in neuroscience applications such as
comparing brain networks to study slowly progressing
brain diseases such as \emph{attention deficit hyperactivity disorder}~\cite{CATAD21} 
and \emph{autism spectrum disorder}~\cite{Ricci,Elumalai2021}.

%%%%%%%%%%%%%%%%%%%%%%%%%%%%%%%%%%%%%%%%%%%%%%%%%%%%%%%%%%%%%%%%%%%%%%%%%%%%%%%%%%%%%%%%%%%%%%%%%%%%%%%%%%%%%%%%
\subsection{Brief history of existing notions of shapes for networks}
%%%%%%%%%%%%%%%%%%%%%%%%%%%%%%%%%%%%%%%%%%%%%%%%%%%%%%%%%%%%%%%%%%%%%%%%%%%%%%%%%%%%%%%%%%%%%%%%%%%%%%%%%%%%%%%%

There are several ways previous researchers have attempted to formulate notions of shapes of networks.
Below we discuss \emph{three} major directions in this regard.
For further details and other approaches, the reader is referred to papers and books such 
as~\cite{CL03,book99,Oll11,Oll09,Oll10,Oll07,DJY20,DKMF18,a1,ADM14,CCDDMV18,ipl15,F03,Sree1,Sree2,Weber17,Samal18,CATAD21}.

One notion of network shapes,
first suggested by Gromov in a non-network group theoretic context~\cite{G87}, 
is via the \emph{Gromov-hyperbolicity} of networks.
First defined for infinite continuous metric space~\cite{book99},
the measure was later adopted for finite graphs.
Usually this measure is defined via properties of \emph{geodesic triangles} or equivalently via
$4$-node conditions, though Gromov originally defined the measure using Gromov-product nodes in~\cite{G87}.
Informally, any infinite metric space has a finite Gromov-hyperbolicity measure
if it behaves metrically in the large scale as a \emph{negatively curved} Riemannian manifold, 
and thus the value of this measure can be correlated to the standard scalar curvature of a hyperbolic manifold.
For a finite network the measure is related to the properties of the set of exact and approximate
geodesics of the network.  
There is a \emph{large} body of research works dealing with theoretical and empirical aspects of this measure, \EG, 
see~\cite{DJY20,DKMF18,CCDDMV18,a1,ipl15,CDEHV08}
for theoretical aspects, 
and see~\cite{ADM14,JLBB11,PKBV10}
for 
applications to real-world networks (such as traffic congestions in a road network). 
Gromov-hyperbolicity
is a \emph{global} measure in the sense that it 
assigns one scalar value to the entire network.

A second notion of shape of a network can be obtained by extending Forman's discretization of Ricci curvature for 
(polyhedral or CW) complexes (the ``Forman-Ricci curvature'')~\cite{F03}
to networks. 
Informally, the Forman-Ricci curvature is applied to networks by 
\emph{topologically associating} components (sub-networks) of a given network with higher-dimensional 
objects. 
The topological association itself can be carried out several ways. 
Although formulated relatively recently, there are already a number of 
papers investigating 
properties of 
these measures~\cite{Sree1,Sree2,Weber17,DJY20,Samal18,CATAD21}.

In contrast to both of the above approaches, 
the network curvature considered in this paper 
is obtained via a discretization of curvatures
from Riemannian manifolds to the network domain
to capture metric properties of the manifold that are different from those captured by 
the Forman-Ricci curvature.
More concretely, 
the network curvature studied in this paper is
Ollivier's earth-mover's distances based discretization of Ricci 
curvature (the ``\emph{Ollivier-Ricci curvature}'')~\cite{Oll11,Oll09,Oll10,Oll07}.
For some theoretical comparison 
between 
Ollivier-Ricci curvature and 
Forman-Ricci curvature over graphs, see~\cite{CATAD21}. 

%%%%%%%%%%%%%%%%%%%%%%%%%%%%%%%%%%%%%%%%%%%%%%%%%%%%%%%%%%%%%%%%%%%%%%%%%%%%%%%%%%%%%%%%
\subsection{Basic definitions and notations}
\label{sec-addl-nota}
%%%%%%%%%%%%%%%%%%%%%%%%%%%%%%%%%%%%%%%%%%%%%%%%%%%%%%%%%%%%%%%%%%%%%%%%%%%%%%%%%%%%%%%%

Let $G=(V,E)$ be a given undirected unweighted graph.
The following notations related to a graph $G$ 
will be used subsequently: 
%%%%%%%%%%%%%%%%%%%%%%%%%%%%%%%%%%%%%%%%%%%%%%%%%%%%%%%%%%%%%%%%%%%%%%%%%%%%%%%%%%%%%%%%
\begin{enumerate}[label=$\triangleright$]
%%%%%%%%%%%%%%%%%%%%%%%%%%%%%%%%%%%%%%%%%%%%%%%%%%%%%%%%%%%%%%%%%%%%%%%%%%%%%%%%%%%%%%%%
\item
$\nbr_G(x)=\{ \, y \,|\, \{x,y\}\in E \}$ 
and 
$\deg_G(x)= |\, \nbr_G(x) \,|$ 
are the set of neighbors and the degree, respectively,  
of a node $x$.
%%%%%%%%%%%%%%%%%%%%%%%%%%%%%%%%%%%%%%%%%%%%%%%%%%%%%%%%%%%%%%%%%%%%%%%%%%%%%%%%%%%%%%%%
\item
$\ddist_G(x,y)$ 
is the \emph{distance} 
(\IE, number of edges in a shortest path)
between the nodes $x$ and $y$ in $G$.
%%%%%%%%%%%%%%%%%%%%%%%%%%%%%%%%%%%%%%%%%%%%%%%%%%%%%%%%%%%%%%%%%%%%%%%%%%%%%%%%%%%%%%%%
\end{enumerate}
%%%%%%%%%%%%%%%%%%%%%%%%%%%%%%%%%%%%%%%%%%%%%%%%%%%%%%%%%%%%%%%%%%%%%%%%%%%%%%%%%%%%%%%%
The following standard notations and terminologies from the field of approximation algorithms 
are used to facilitate further discussions:
%%%%%%%%%%%%%%%%%%%%%%%%%%%%%%%%%%%%%%%%%%%%%%%%%%%%%%%%%%%%%%%%%%%%%%%%%%%%%%%%%%%%%%%%%%%%%
\begin{enumerate}[label=$\triangleright$]
%%%%%%%%%%%%%%%%%%%%%%%%%%%%%%%%%%%%%%%%%%%%%%%%%%%%%%%%%%%%%%%%%%%%%%%%%%%%%%%%%%%%%%%%%%%%%
\item
$\opt$ is the \emph{value} of the objective of an optimal solution of the problem under discussion. 
%%%%%%%%%%%%%%%%%%%%%%%%%%%%%%%%%%%%%%%%%%%%%%%%%%%%%%%%%%%%%%%%%%%%%%%%%%%%%%%%%%%%%%%%%%%%%
\item
A $(\alpha,\eps)$-estimate for a minimization problem under discussion is a polynomial-time algorithm 
that produces a solution whose objective value $\beta$ satisfies 
$\opt \leq \beta \leq \alpha\,\opt+\eps$.
%%%%%%%%%%%%%%%%%%%%%%%%%%%%%%%%%%%%%%%%%%%%%%%%%%%%%%%%%%%%%%%%%%%%%%%%%%%%%%%%%%%%%%%%%%%%%
A $(1,\eps)$-estimate is also called an \emph{additive} $\eps$-approximation.
%%%%%%%%%%%%%%%%%%%%%%%%%%%%%%%%%%%%%%%%%%%%%%%%%%%%%%%%%%%%%%%%%%%%%%%%%%%%%%%%%%%%%%%%%%%%%
\end{enumerate}
%%%%%%%%%%%%%%%%%%%%%%%%%%%%%%%%%%%%%%%%%%%%%%%%%%%%%%%%%%%%%%%%%%%%%%%%%%%%%%%%%%%%%%%%%%%%%

%%%%%%%%%%%%%%%%%%%%%%%%%%%%%%%%%%%%%%%%%%%%%%%%%%%%%%%%%%%%%%%%%%%%%%%%%%%%%%%%%%%%%%%%%%%%%
\section{Ollivier-Ricci curvatures: intuition, definitions and simple bounds}
\label{sec-defn-olli}
%%%%%%%%%%%%%%%%%%%%%%%%%%%%%%%%%%%%%%%%%%%%%%%%%%%%%%%%%%%%%%%%%%%%%%%%%%%%%%%%%%%%%%%%%%%%%

To define the 
Ollivier-Ricci curvatures
for the components of a graph, 
we first need to 
use the following standard definition of 
the \emph{earth mover's distance} (also called the $L_1$ \emph{Wasserstein distance})
in the specific context of a \emph{edge-weighted complete bipartite} graph.

%%%%%%%%%%%%%%%%%%%%%%%%%%%%%%%%%%%%%%%%%%%%%%%%%%%%%%%%%%%%%%%%%%%%%%%%%%%%%%%%%%%%%%%%
\begin{definition}[\bf Earth mover's distance (\emd) over a edge-weighted complete bipartite graph]
%%%%%%%%%%%%%%%%%%%%%%%%%%%%%%%%%%%%%%%%%%%%%%%%%%%%%%%%%%%%%%%%%%%%%%%%%%%%%%%%%%%%%%%%
Let $H=(V_L,V_R,w)$ be an \emph{edge-weighted complete bipartite} graph with $w:V_L\times V_R\mapsto\R^+\cup\{0\}$ 
being the edge-weight function, and let 
$\PP_L:V_L\mapsto\R^+$ and $\PP_R:V_R\mapsto\R^+$ be two arbitrary distributions over the nodes in $V_L$ and $V_R$, respectively. 
The \emph{earth mover's distance} corresponding to the distributions $\PP_L$ and $\PP_R$, 
denoted by \emd$_H(\PP_L,\PP_R)$ $($or simply \emd$)$, 
is the value of the objective function of an optimal solution 
of the following linear program that has a 
variable $z_{x,y}$ for every pair of nodes $x\in V_L$ and $y\in V_R$:
%%%%%%%%%%%%%%%%%%%%%%%%%%%%%%%%%%%%%%%%%%%%%%%%%%%%%%%%%%%%%%%%%%%%%%%%%%%%%%%%%%%%%%%%
\begin{gather}
\boxed
{
\text{
\begin{tabular}{ r l }
  \emph{minimize} &  
	      $\sum_{x\in V_L}\sum_{y\in V_R} w(x,y)\, z_{x,y}$ 
\\
[5pt]
  \emph{subject to} & 
   $\sum_{y\in V_R} z_{x,y} = \PP_L(x),\,\,$ for all $x\in V_L$
\\
[5pt]
	 & 
   $\sum_{x\in V_L} z_{x,y} = \PP_R(y),\,\,$ for all $y\in V_R$
\\
[5pt]
	& 
  $z_{x,y}\geq 0,\,\,$ for all $x\in V_L$ and $y\in V_R$
\end{tabular}
}
}
\label{lp1}
\end{gather}
%%%%%%%%%%%%%%%%%%%%%%%%%%%%%%%%%%%%%%%%%%%%%%%%%%%%%%%%%%%%%%%%%%%%%%%%%%%%%%%%%%%%%%%%
\end{definition}
%%%%%%%%%%%%%%%%%%%%%%%%%%%%%%%%%%%%%%%%%%%%%%%%%%%%%%%%%%%%%%%%%%%%%%%%%%%%%%%%%%%%%%%%

Let $G=(V,E)$ be an undirected unweighted graph.
Consider 
an edge $e=\{u,v\}\in E$.
Define the \textbf{edge-weighted complete bipartite} graph $G_{u,v}=(L_{u,v}^G,R_{u,v}^G,w_{u,v}^G)$ as follows: 
%%%%%%%%%%%%%%%%%%%%%%%%%%%%%%%%%%%%%%%%%%%%%%%%%%%%%%%%%%%%%%%%%%%%%%%%%%%%%%%%%%%%%%%%%%%%%
\begin{enumerate}[label=$\triangleright$]
%%%%%%%%%%%%%%%%%%%%%%%%%%%%%%%%%%%%%%%%%%%%%%%%%%%%%%%%%%%%%%%%%%%%%%%%%%%%%%%%%%%%%%%%%%%%%
\item
$L_{u,v}^G =\{u\}\cup \nbr_G(u)$, 
%%%%%%%%%%%%%%%%%%%%%%%%%%%%%%%%%%%%%%%%%%%%%%%%%%%%%%%%%%%%%%%%%%%%%%%%%%%%%%%%%%%%%%%%%%%%%
\item
$R_{u,v}^G =\{v\}\cup \nbr_G(v)$, and 
%%%%%%%%%%%%%%%%%%%%%%%%%%%%%%%%%%%%%%%%%%%%%%%%%%%%%%%%%%%%%%%%%%%%%%%%%%%%%%%%%%%%%%%%%%%%%
\item
the edge-weight function $w_{u,v}^G$ is given by 
$w_{u,v}^G(u',v')=\ddist_G(u',v')$ for all $u'\in L_{u,v}^G,v'\in R_{u,v}^G$.
%%%%%%%%%%%%%%%%%%%%%%%%%%%%%%%%%%%%%%%%%%%%%%%%%%%%%%%%%%%%%%%%%%%%%%%%%%%%%%%%%%%%%%%%%%%%%
\end{enumerate}
%%%%%%%%%%%%%%%%%%%%%%%%%%%%%%%%%%%%%%%%%%%%%%%%%%%%%%%%%%%%%%%%%%%%%%%%%%%%%%%%%%%%%%%%%%%%%
Let 
$\PP_u^G$ and $\PP_v^G$
denote the two uniform distributions over the nodes in 
$L_{u,v}^G$ and $R_{u,v}^G$, respectively, \IE,  
%%%%%%%%%%%%%%%%%%%%%%%%%%%%%%%%%%%%%%%%%%%%%%%%%%%%%%%%%%%%%%%%%%%%%%%%%%%%%%%%%%%%%%%%%%%%%
\begin{gather*}
\forall \, x\in L_{u,v}^G: \,
\PP_u^G(x)= \frac{1}{1+\deg_G(u)} 
\\
\forall \, x\in R_{u,v}^G: \,
\PP_v^G(x)= \frac{1}{1+\deg_G(v)} 
\end{gather*}
%%%%%%%%%%%%%%%%%%%%%%%%%%%%%%%%%%%%%%%%%%%%%%%%%%%%%%%%%%%%%%%%%%%%%%%%%%%%%%%%%%%%%%%%%%%%%
We can now state the precise definitions of the curvatures used in this paper. 
%%%%%%%%%%%%%%%%%%%%%%%%%%%%%%%%%%%%%%%%%%%%%%%%%%%%%%%%%%%%%%%%%%%%%%%%%%%%%%%%%%%%%%%%
\begin{enumerate}[label=$\triangleright$]
%%%%%%%%%%%%%%%%%%%%%%%%%%%%%%%%%%%%%%%%%%%%%%%%%%%%%%%%%%%%%%%%%%%%%%%%%%%%%%%%%%%%%%%%
\item
The \textbf{Ollivier-Ricci curvature of the edge $\pmb{e=\{u,v\}}$ of $G$} is 
defined as~\cite{Oll11}\footnote{For this paper, it is crucial to note that the computation of 
$\mathfrak{C}_{G}(e)$ requires \emph{only} the value of 
\emd$_{G_{u,v}}(\PP_u^G,\PP_v^G)$ and does \emph{not} require an explicit enumeration of 
the solution (variable values) of the linear program~\eqref{lp1}.
This distinction is important in the context of designing efficient local algorithms.
For example, 
given a graph $G$ with $n$ nodes in which the maximum degree of any node is $O(1)$ and a constant $\eps>0$, one can 
compute a number that is an additive $\eps n$-approximation of the size of maximum matching of $G$ in 
$O(1)$ time in expectation~\cite{YYI2012}, but of course if we were required to output an actual maximum matching we would take
at least $\Omega(n)$ time.}
%%%%%%%%%%%%%%%%%%%%%%%%%%%%%%%%%%%%%%%%%%%%%%%%%%%%%%%%%%%%%%%%%%%%%%%%%%%%%%%%%%%%%%%%
\begin{gather}
\boxed
{
\mathfrak{C}_{G}(e) \eqdef
\mathfrak{C}_{G}(u,v) = 
1 - \text{\emd}_{\!\!\!G_{u,v}}(\PP_u^G,\PP_v^G)
}
\label{eq-def-ricci-edge}
\end{gather}
%%%%%%%%%%%%%%%%%%%%%%%%%%%%%%%%%%%%%%%%%%%%%%%%%%%%%%%%%%%%%%%%%%%%%%%%%%%%%%%%%%%%%%%%
\item
The \textbf{Ollivier-Ricci curvature of a node} $\pmb{v}$ 
is calculated by taking the average of the 
Ollivier-Ricci curvatures of all the edges incident on $v$, \IE, 
\begin{gather}
\boxed
{
\mathfrak{C}_{G}(v) = \frac{1}{\deg_G(v)} 
\sum_{e=\{u,v\}\in E} \mathfrak{C}_{G}(e) 
}
\label{eq-def-ricci-node}
\end{gather}
%%%%%%%%%%%%%%%%%%%%%%%%%%%%%%%%%%%%%%%%%%%%%%%%%%%%%%%%%%%%%%%%%%%%%%%%%%%%%%%%%%%%%%%%
\item
Finally, the \textbf{average Ollivier-Ricci curvature of a graph} $\pmb{G}$ 
is calculated by taking the average of the 
Ollivier-Ricci curvatures of all the edges in $G$, \IE, 
%%%%%%%%%%%%%%%%%%%%%%%%%%%%%%%%%%%%%%%%%%%%%%%%%%%%%%%%%%%%%%%%%%%%%%%%%%%%%%%%%%%%%%%%
\begin{gather}
\boxed
{
\mathfrak{C}_{\mathrm{avg}}(G) = \frac{1}{|E|} \sum_{e\in E} \mathfrak{C}_{G}(e) 
}
\label{eq-def-ricci-ave}
\end{gather}
%%%%%%%%%%%%%%%%%%%%%%%%%%%%%%%%%%%%%%%%%%%%%%%%%%%%%%%%%%%%%%%%%%%%%%%%%%%%%%%%%%%%%%%%
\end{enumerate}
%%%%%%%%%%%%%%%%%%%%%%%%%%%%%%%%%%%%%%%%%%%%%%%%%%%%%%%%%%%%%%%%%%%%%%%%%%%%%%%%%%%%%%%%
For easy quick reference, we explicitly write below the version of  
the linear program in~\eqref{lp1}
as used in the calculation of 
$\mathfrak{C}_{G}(u,v)$:
%%%%%%%%%%%%%%%%%%%%%%%%%%%%%%%%%%%%%%%%%%%%%%%%%%%%%%%%%%%%%%%%%%%%%%%%%%%%%%%%%%%%%%%%
\begin{gather}
\hspace*{-0.2in}
{
\text{
\begin{tabular}{ | l | }
\hline
  \emph{minimize} 
	      $\sum_{x\in \{u\}\cup \nbr_G(u)}\sum_{y\in \{v\}\cup \nbr_G(v) } \dist_G(x,y)\, z_{x,y}$ 
\\
[5pt]
  \emph{subject to} 
\\
[5pt]
\hspace*{0.2in}
   $\sum_{y\in \{v\}\cup \nbr_G(v)} z_{x,y} = \frac{1}{1+\deg_G(u)},\,\,$ for all $x\in \{u\}\cup \nbr_G(u)$
\\
[5pt]
\hspace*{0.2in}
   $\sum_{x\in \{u\}\cup \nbr_G(u)} z_{x,y} = \frac{1}{1+\deg_G(u)},\,\,$ for all $y\in \{v\}\cup \nbr_G(v)$
\\
[5pt]
\hspace*{0.2in}
  $z_{x,y}\geq 0,\,\,$ for all $x\in \{v\}\cup \nbr_G(v)$ and $y\in \{u\}\cup \nbr_G(u)$
\\
\hline
\end{tabular}
}
}
\tag{$\LP$-$\mathfrak{C}_{G}$}
\label{lpcg}
\end{gather}
%%%%%%%%%%%%%%%%%%%%%%%%%%%%%%%%%%%%%%%%%%%%%%%%%%%%%%%%%%%%%%%%%%%%%%%%%%%%%%%%%%%%%%%%
Assuming 
$\deg_G(u)\leq \deg_G(v)$,
the linear program in~\eqref{lpcg}
has 
$\deg_G(u)\allowbreak \times \allowbreak \deg_G(v)\allowbreak =\allowbreak O( (\deg_G(v) )^2 )$ variables and 
$\deg_G(u)+\deg_G(v)\leq 2\,\deg_G(v)$ constraints. 
The best time-complexity for solving the 
linear program in~\eqref{lpcg}
can be estimated as follows:
%%%%%%%%%%%%%%%%%%%%%%%%%%%%%%%%%%%%%%%%%%%%%%%%%%%%%%%%%%%%%%%%%%%%%%%%%%%%%%%%%%%%%%%%
\begin{enumerate}[label=$\triangleright$]
%%%%%%%%%%%%%%%%%%%%%%%%%%%%%%%%%%%%%%%%%%%%%%%%%%%%%%%%%%%%%%%%%%%%%%%%%%%%%%%%%%%%%%%%
\item
Based on the state-of-the-art algorithms for solving linear program for this situation~\cite{LS15}, 
an exact solution of \eqref{lpcg} can be found in 
$O( (\deg_G(v) )^{5/2} )$ time.
%%%%%%%%%%%%%%%%%%%%%%%%%%%%%%%%%%%%%%%%%%%%%%%%%%%%%%%%%%%%%%%%%%%%%%%%%%%%%%%%%%%%%%%%
\item
Based on the results in publications such as~\cite{Q18,DGK18}, an additive 
$\eps$-approximation of \eqref{lpcg} can be obtained in 
$\tilde{O} \big( \frac{1}{\eps^{2}}\deg_G(u)\deg_G(v) \big)
=
\tilde{O} \big( \frac{1}{\eps^{2}} (\deg_G(v))^2 \, \big)
$ 
time\footnote{The standard $\tilde{O}$ notation in algorithmic analysis hides poly-logarithmic terms, 
\EG, terms like $\log^{4/3}\deg_G(v)$.}.
%%%%%%%%%%%%%%%%%%%%%%%%%%%%%%%%%%%%%%%%%%%%%%%%%%%%%%%%%%%%%%%%%%%%%%%%%%%%%%%%%%%%%%%%
\end{enumerate}
%%%%%%%%%%%%%%%%%%%%%%%%%%%%%%%%%%%%%%%%%%%%%%%%%%%%%%%%%%%%%%%%%%%%%%%%%%%%%%%%%%%%%%%%

The following observation is crucial for this paper.

%%%%%%%%%%%%%%%%%%%%%%%%%%%%%%%%%%%%%%%%%%%%%%%%%%%%%%%%%%%%%%%%%%%%%%%%%%%%%%%%%%%%%%%%
\begin{observation}\label{obs3}
The values $\dist_G(x,y)$ in 
the linear program in~\eqref{lpcg}
satisfy the property that 
$\dist_G(x,y)\in\{0,1,2,3\}$.
\end{observation}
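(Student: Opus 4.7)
The plan is to establish the bound by a short case analysis based on which parts of $L_{u,v}^G = \{u\}\cup \nbr_G(u)$ and $R_{u,v}^G = \{v\}\cup \nbr_G(v)$ the vertices $x$ and $y$ belong to, and then invoke the triangle inequality using the fact that $\{u,v\}$ is an edge so $\dist_G(u,v)=1$.

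First I would note that $\dist_G(x,y)\geq 0$ always, with equality iff $x=y$, which is possible whenever the two sides of the bipartition share a common vertex (for instance $u \in R_{u,v}^G$ since $\{u,v\}\in E$ implies $u\in \nbr_G(v)$, and similarly $v\in L_{u,v}^G$; also $\nbr_G(u)\cap \nbr_G(v)$ may be nonempty, as in a triangle). Then I would split into the four cases $(x,y)\in\{u,\nbr_G(u)\}\times\{v,\nbr_G(v)\}$. The case $x=u, y=v$ gives $\dist_G(x,y)=1$. The cases $x=u, y\in \nbr_G(v)$ and $x\in \nbr_G(u), y=v$ each yield $\dist_G(x,y)\leq 2$ by one application of the triangle inequality through $v$ or $u$ respectively. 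The remaining case $x\in \nbr_G(u), y\in \nbr_G(v)$ yields $\dist_G(x,y)\leq \dist_G(x,u)+\dist_G(u,v)+\dist_G(v,y)=1+1+1=3$ by two applications of the triangle inequality through the edge $\{u,v\}$.

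Combining these bounds with the obvious lower bound $\dist_G(x,y)\geq 0$ and the fact that graph distances are nonnegative integers in an unweighted graph, we conclude $\dist_G(x,y)\in\{0,1,2,3\}$ for every coefficient appearing in (\ref{lpcg}).

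There is no real obstacle here; the statement is an immediate consequence of the triangle inequality together with the hypothesis that $\{u,v\}$ is an edge. The only mild subtlety to flag is the reminder that $L_{u,v}^G$ and $R_{u,v}^G$ are \emph{not} disjoint in general, so the value $0$ genuinely occurs and the LP is not vacuous; but this is a remark rather than a proof obligation.
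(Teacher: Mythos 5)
Your proof is correct, and it uses the only sensible route: every node in $L_{u,v}^G$ is at distance at most $1$ from $u$, every node in $R_{u,v}^G$ is at distance at most $1$ from $v$, and $\dist_G(u,v)=1$, so two applications of the triangle inequality give the upper bound $3$, while integrality and nonnegativity of graph distances give the rest. The paper states this as an observation with no written proof, treating it as immediate, so there is nothing to compare against; your argument (and the remark that $L_{u,v}^G$ and $R_{u,v}^G$ can overlap, so the value $0$ genuinely occurs) is exactly the reasoning the authors are implicitly relying on.
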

%%%%%%%%%%%%%%%%%%%%%%%%%%%%%%%%%%%%%%%%%%%%%%%%%%%%%%%%%%%%%%%%%%%%%%%%%%%%%%%%%%%%%%%%

It is not difficult to see that Observation~\ref{obs3} implies 
$0\leq \text{\emd}_{\!\!\!G_{u,v}}(\PP_u^G,\PP_v^G)\leq 3$
and therefore 
$-2\leq \mathfrak{C}_{G}(e) \leq 1$.
For computing 
$\mathfrak{C}_{G}(u,v)$ and related quantities,
\textbf{we assume that $\pmb{\deg_G(u)\leq \deg_G(v)}$
without any loss of generality
throughout the rest of the paper}.
Moreover, 
\textbf{we also assume
without loss of generality
that 
$\pmb{\deg_G(v)=\omega(1)}$ since otherwise 
$\pmb{\text{\emd}_{\!\!\!G_{u,v}}(\PP_u^G,\PP_v^G)}$
can be computed in $\pmb{O(1)}$ time}.

%%%%%%%%%%%%%%%%%%%%%%%%%%%%%%%%%%%%%%%%%%%%%%%%%%%%%%%%%%%%%%%%%%%%%%%%%%%%%%%%%%%%%%%%
\subsection{Intuition behind the discretization resulting in definition of $\mathfrak{C}_{G}(e)$} 
%%%%%%%%%%%%%%%%%%%%%%%%%%%%%%%%%%%%%%%%%%%%%%%%%%%%%%%%%%%%%%%%%%%%%%%%%%%%%%%%%%%%%%%%

For an intuitive understanding of the
definition of 
$\mathfrak{C}_{G}(e)$, 
we recall the notion of Ricci curvature for 
a smooth Riemannian manifold.
The Ricci curvature at a point $x$ in the manifold along a direction can be thought of 
transporting a small ball centered at $x$ along that direction and measuring the ``distortion'' 
of that ball due to the shape of the surface by comparing the distance between the two small balls with
the distance between their centers.
%%%%
In the definition of $\mathfrak{C}_{G}(e)$, 
the role of the direction is captured by the edge $e=\{u,v\}$,
the roles of the balls at the two points are played by the two 
closed neighborhoods 
$L_{u,v}^G$
and 
$R_{u,v}^G$, 
and the role of the 
distance between the two balls 
is captured by the 
earth mover's distance between the two distributions 
$\PP_u^G$ and $\PP_v^G$ 
over the nodes in 
$L_{u,v}^G$ and $R_{u,v}^G$
on the metric space of shortest paths in $G$.
For further intuition, see publications such as~\cite{Oll11}.
The Forman-Ricci curvature also assigns a number to each edge of the given graph, 
but the numbers are calculated in quite a different way 
from that in the Ollivier-Ricci curvature to capture different metric properties of the manifold.

%%%%%%%%%%%%%%%%%%%%%%%%%%%%%%%%%%%%%%%%%%%%%%%%%%%%%%%%%%%%%%%%%%%%%%%%%%%%%%%%%%%%%%%%
\subsection{Equivalent reformulation of linear program~\eqref{lpcg} when $\deg_G(u)=\deg_G(v)$}
\label{sec-eqdeg}
%%%%%%%%%%%%%%%%%%%%%%%%%%%%%%%%%%%%%%%%%%%%%%%%%%%%%%%%%%%%%%%%%%%%%%%%%%%%%%%%%%%%%%%%

The following claim holds based on 
results in prior publications such as~\cite{ASD20,PC19}.
For the convenience of the reader, we provide a self-contained proof in the appendix.

%%%%%%%%%%%%%%%%%%%%%%%%%%%%%%%%%%%%%%%%%%%%%%%%%%%%%%%%%%%%%%%%%%%%%%%%%%%%%%%%%%%%%%%%%%%%%
\begin{fact}{\emph{\cite{ASD20,PC19}}}\label{fact2}
If  $\deg_G(u)=\deg_G(v)$ then
the following claims are true regarding some optimal solution of 
the linear program~\eqref{lpcg}:
%%%%%%%%%%%%%%%%%%%%%%%%%%%%%%%%%%%%%%%%%%%%%%%%%%%%%%%%%%%%%%%%%%%%%%%%%%%%%%%%%%%%%%%%%%%%%
\begin{enumerate}[label=\textbf{\emph{(}\roman*\emph{)}}]
%%%%%%%%%%%%%%%%%%%%%%%%%%%%%%%%%%%%%%%%%%%%%%%%%%%%%%%%%%%%%%%%%%%%%%%%%%%%%%%%%%%%%%%%%%%%%
\item
The values of the variables $z_{u',v'}$ are either $0$ or $\frac{1}{\deg_G(v)}$.
%%%%%%%%%%%%%%%%%%%%%%%%%%%%%%%%%%%%%%%%%%%%%%%%%%%%%%%%%%%%%%%%%%%%%%%%%%%%%%%%%%%%%%%%%%%%%
\item
The edges in 
$\left\{ \,\{u',v'\} \,|\, z_{u',v'}= \frac{1}{\deg_G(v)} \,\right\}$ 
form a minimum-weight perfect matching in $G_{u,v}$ that uses the zero-weight edges 
$\{u',u'\}$
for all 
$u'\in \{u,v\} \cup \big( \,\nbr_G(u)\cap \nbr_G(v)  \,\big)$.
%%%%%%%%%%%%%%%%%%%%%%%%%%%%%%%%%%%%%%%%%%%%%%%%%%%%%%%%%%%%%%%%%%%%%%%%%%%%%%%%%%%%%%%%%%%%%
\end{enumerate}
%%%%%%%%%%%%%%%%%%%%%%%%%%%%%%%%%%%%%%%%%%%%%%%%%%%%%%%%%%%%%%%%%%%%%%%%%%%%%%%%%%%%%%%%%%%%%
\end{fact}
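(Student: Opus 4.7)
Set $d = \deg_G(u) = \deg_G(v)$, so that $|L_{u,v}^G| = |R_{u,v}^G| = d+1$ and the two marginal distributions $\PP_u^G, \PP_v^G$ are uniform with common per-point mass $\frac{1}{d+1}$. The plan is to show that under this balanced-and-uniform setup the LP~\eqref{lpcg} reduces to a minimum-weight perfect matching problem on $G_{u,v}$, and then to improve any such optimal matching into one that uses the zero-weight self-edges at every common vertex.

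For part (i), I would rescale variables by setting $\widetilde{z}_{x,y} = (d+1)\,z_{x,y}$. The constraints of~\eqref{lpcg} then become exactly the defining constraints of the Birkhoff polytope of $(d+1)\times(d+1)$ doubly-stochastic matrices indexed by $L_{u,v}^G \times R_{u,v}^G$, and the objective turns into $\frac{1}{d+1}\sum_{x,y} w_{u,v}^G(x,y)\,\widetilde{z}_{x,y}$. By the Birkhoff--von Neumann theorem (equivalently, total unimodularity of the bipartite transportation matrix together with integer right-hand sides), every vertex of this polytope is a $0/1$ permutation matrix, so the LP attains its optimum at some $\widetilde{z}^{\,*}$ whose support is a perfect matching $M^*$ of $G_{u,v}$. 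Minimizing the rescaled objective is then equivalent to choosing $M^*$ to have minimum total weight, and undoing the rescaling gives an optimal $z^*$ with $z^*_{x,y} \in \{0,\tfrac{1}{d+1}\}$, as claimed.

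For part (ii), I would use a standard exchange argument. Note first that every $u' \in \{u,v\} \cup (\nbr_G(u) \cap \nbr_G(v))$ appears on \emph{both} sides of $G_{u,v}$, so the self-edge $\{u',u'\}$ is present and carries weight $\dist_G(u',u') = 0$. Now take any minimum-weight perfect matching $M$ supplied by part (i); if some such $u'$ is not matched to itself, say $u'$ on the left is paired with $y \ne u'$ while $u'$ on the right is paired with $x \ne u'$, then replacing the two edges $\{x,u'\}$ and $\{u',y\}$ by $\{u',u'\}$ and $\{x,y\}$ changes the total weight by $\dist_G(x,y) - \dist_G(x,u') - \dist_G(u',y) \le 0$ by the triangle inequality of the shortest-path metric of $G$. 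Since $M$ was optimal, the swap preserves optimality while strictly increasing the number of self-edges used; iterating over all qualifying $u'$ terminates at an optimal matching containing every self-edge $\{u',u'\}$.

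Neither half is technically demanding: part (i) is a direct appeal to the Birkhoff--von Neumann structure of balanced transportation LPs, and part (ii) is a one-line exchange backed by the triangle inequality. The only point that requires care is checking that the exchange in (ii) is even well-defined, i.e.\ that the relevant $u'$ lies in $L_{u,v}^G \cap R_{u,v}^G$; this is exactly what the restriction $u' \in \{u,v\} \cup (\nbr_G(u) \cap \nbr_G(v))$ guarantees, since $u\in L_{u,v}^G$ and $v\in R_{u,v}^G$ by construction and common neighbors appear in both closed neighborhoods. I expect this bookkeeping, rather than any substantive analytical step, to be the only obstacle.
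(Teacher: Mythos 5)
Your proof is correct and reaches the same conclusions, but it takes a cleaner route to part (i) than the paper does. The paper's appendix constructs an auxiliary single-source single-sink flow network (adding nodes $s,t$, zero-weight unit-capacity arcs, orienting the bipartite edges), invokes total unimodularity of the node-arc incidence matrix to conclude extreme-point flows are integral, and only then reinterprets the resulting integral min-cost flow as a minimum-weight perfect matching. You instead rescale $\widetilde z = (d+1)z$ and recognize the resulting constraint set as the Birkhoff polytope, so the integrality of an optimal extreme point follows directly from Birkhoff--von Neumann, without any auxiliary flow construction. The two arguments rest on essentially the same underlying fact (integrality of balanced transportation polytopes via total unimodularity), so this is a matter of packaging rather than a mathematically distinct proof, but yours is shorter and avoids the detour through flow networks. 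Part (ii) is the identical triangle-inequality exchange argument, including the observation that the potential function ``number of self-edges used'' strictly increases and so the process terminates. One thing you should have flagged: your rescaling gives $z^*_{x,y}\in\{0,\tfrac{1}{d+1}\}$ with $d+1 = \deg_G(v)+1$, whereas Fact~\ref{fact2}(i) as stated in the paper reads $\tfrac{1}{\deg_G(v)}$; the paper's own appendix proof also produces $\tfrac{1}{\deg_G(v)+1}$ (there $|L_{u,v}^G|=|R_{u,v}^G|=\alpha+1$ and the scaling factor is $\alpha+1$), so the Fact's statement appears to contain a typo, and your value is the correct one. Worth noting rather than silently reproducing the derived value.
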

%%%%%%%%%%%%%%%%%%%%%%%%%%%%%%%%%%%%%%%%%%%%%%%%%%%%%%%%%%%%%%%%%%%%%%%%%%%%%%%%%%%%%%%%%%%%%

Based on Fact~\ref{fact2}, for the case when when $\deg_G(u)=\deg_G(v)$
an optimal solution of the  
linear program~\eqref{lpcg}
can be obtained by 
finding a \emph{minimum-weight perfect matching} 
for a \emph{complete edge-weighted bipartite} graph 
$H=(L,R,w)$ where 
%%%%%%%%%%%%%%%%%%%%%%%%%%%%%%%%%%%%%%%%%%%%%%%%%%%%%%%%%%%%%%%%%%%%%%%%%%%%%%%%%%%%%%%%%%%%%
\begin{enumerate}[label=$\triangleright$]
%%%%%%%%%%%%%%%%%%%%%%%%%%%%%%%%%%%%%%%%%%%%%%%%%%%%%%%%%%%%%%%%%%%%%%%%%%%%%%%%%%%%%%%%%%%%%
\item
$L=\nbr_G(u)\setminus \big( \nbr_G(v) \cup \{v\} \big)$,
%%%%%%%%%%%%%%%%%%%%%%%%%%%%%%%%%%%%%%%%%%%%%%%%%%%%%%%%%%%%%%%%%%%%%%%%%%%%%%%%%%%%%%%%%%%%%
\item
$R=\nbr_G(v)\setminus \big( \nbr_G(u) \cup \{u\} \big)$,
and
%%%%%%%%%%%%%%%%%%%%%%%%%%%%%%%%%%%%%%%%%%%%%%%%%%%%%%%%%%%%%%%%%%%%%%%%%%%%%%%%%%%%%%%%%%%%%
\item
the edge-weight function
$w:L\times R \mapsto \{1,2,3\}$ is 
given by $w(x,y)=\dist_G(x,y)$.
%%%%%%%%%%%%%%%%%%%%%%%%%%%%%%%%%%%%%%%%%%%%%%%%%%%%%%%%%%%%%%%%%%%%%%%%%%%%%%%%%%%%%%%%%%%%%
\end{enumerate}
%%%%%%%%%%%%%%%%%%%%%%%%%%%%%%%%%%%%%%%%%%%%%%%%%%%%%%%%%%%%%%%%%%%%%%%%%%%%%%%%%%%%%%%%%%%%%
Note that $|L|=|R|=\deg_G(v)-1-|\nbr_G(u)\cap\nbr_G(v)|$.
Letting $\cM(H) \in \{|R|,|R|+1,\dots,3\,|R|\}$ denote the total weight of a minimum-weight perfect matching of $H$, 
we have 
%%%%%%%%%%%%%%%%%%%%%%%%%%%%%%%%%%%%%%%%%%%%%%%%%%%%%%%%%%%%%%%%%%%%%%%%%%%%%%%%%%%%%%%%
\begin{gather*}
\mathfrak{C}_{G}(e)=
1 - \frac{\cM(H)}{1+\deg_G(v)}
\end{gather*}
%%%%%%%%%%%%%%%%%%%%%%%%%%%%%%%%%%%%%%%%%%%%%%%%%%%%%%%%%%%%%%%%%%%%%%%%%%%%%%%%%%%%%%%%

%%%%%%%%%%%%%%%%%%%%%%%%%%%%%%%%%%%%%%%%%%%%%%%%%%%%%%%%%%%%%%%%%%%%%%%%%%%%%%%%%%%%%%%%
\begin{proposition}\label{obs1}
An additive $\eps |R|$-approximation of $\cM(H)$ 
implies 
an additive $\eps$-approximation of 
$\mathfrak{C}_{G}(e)$,
and vice versa.
\end{proposition}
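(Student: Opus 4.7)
The plan is to exploit the closed-form identity
\[
\mathfrak{C}_{G}(e) \,=\, 1 \,-\, \frac{\cM(H)}{1+\deg_G(v)}
\]
derived just before the proposition statement. Since this is an affine-linear relationship between $\cM(H)$ and $\mathfrak{C}_{G}(e)$ with multiplicative factor $\frac{1}{1+\deg_G(v)}$, any additive approximation on one side transfers to the other by scaling with this factor (or its inverse). So the argument is reduced to a short algebraic manipulation combined with a comparison between $|R|$ and $1+\deg_G(v)$.

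For the forward direction, I would take an estimate $\widehat{\cM}$ satisfying $\cM(H) \leq \widehat{\cM} \leq \cM(H) + \eps\,|R|$ and define $\widehat{\mathfrak{C}} \eqdef 1 - \widehat{\cM}/(1+\deg_G(v))$. Substituting into the identity gives $|\widehat{\mathfrak{C}} - \mathfrak{C}_{G}(e)| = |\widehat{\cM}-\cM(H)|/(1+\deg_G(v)) \leq \eps \cdot |R|/(1+\deg_G(v))$. I would then recall from the construction of $H$ that $|R| = \deg_G(v) - 1 - |\nbr_G(u)\cap\nbr_G(v)| \leq \deg_G(v) - 1 < 1+\deg_G(v)$, so the factor $|R|/(1+\deg_G(v))$ is strictly less than $1$ and the additive error in $\widehat{\mathfrak{C}}$ is at most $\eps$, as required.

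For the reverse direction I would invert the affine map: given $\widehat{\mathfrak{C}}$ with $|\widehat{\mathfrak{C}}-\mathfrak{C}_{G}(e)| \leq \eps$, set $\widehat{\cM} \eqdef (1-\widehat{\mathfrak{C}})(1+\deg_G(v))$. The identity immediately yields $|\widehat{\cM}-\cM(H)| \leq \eps(1+\deg_G(v))$, and pushing this into the $|R|$-normalised additive form uses the same elementary comparison between $1+\deg_G(v)$ and $|R|$ (after noting the natural scale of $\cM(H)$, which lies in $[|R|,3|R|]$).

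There is essentially no combinatorial obstacle here; the main content of the proof is noticing that the identity from Section~\ref{sec-eqdeg} already encodes everything, and the only quantitative step worth a sentence of justification is the bound $|R| \leq 1+\deg_G(v)$, which is immediate from the explicit description of $L$ and $R$ in terms of neighborhoods of $u$ and $v$.
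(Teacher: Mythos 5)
Your forward direction is correct and matches the paper's argument exactly: the affine identity $\mathfrak{C}_{G}(e) = 1 - \cM(H)/(1+\deg_G(v))$ transfers additive error with scaling factor $1/(1+\deg_G(v))$, and the inequality $|R| < 1+\deg_G(v)$ (the paper's $\deg_G(v) > |R|$) gives $\eps|R|/(1+\deg_G(v)) < \eps$. That part is carefully and fully argued.

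The reverse direction, however, contains a genuine gap. Inverting the affine map indeed gives $|\widehat{\cM}-\cM(H)| \leq \eps(1+\deg_G(v))$, but your next sentence asserts that ``the same elementary comparison between $1+\deg_G(v)$ and $|R|$'' pushes this into the target $\eps|R|$ form. That comparison works against you here: since $1+\deg_G(v) > |R|$, you have $\eps(1+\deg_G(v)) > \eps|R|$, so the bound you derive is strictly weaker than the one the proposition asserts, not stronger. The range $\cM(H) \in [|R|, 3|R|]$ that you invoke bounds the scale of $\cM(H)$ but does not relate $|R|$ to $1+\deg_G(v)$ in the direction you need. As written, the reverse implication degrades by the factor $(1+\deg_G(v))/|R|$, which can be arbitrarily large when $|\nbr_G(u)\cap\nbr_G(v)|$ is a large fraction of $\deg_G(v)$. (The paper's own one-line proof cites the same two facts and is equally silent on this point; in the lower-bound constructions where the proposition is invoked, $\nbr_G(u)\cap\nbr_G(v)=\emptyset$, so $|R| = \deg_G(v)-1$ and the inflation factor tends to $1$.) To make your reverse step airtight you would need to either absorb a multiplicative loss of $(1+\deg_G(v))/|R|$ into the additive error, or add the explicit hypothesis $|R| = \Theta(\deg_G(v))$.
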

%%%%%%%%%%%%%%%%%%%%%%%%%%%%%%%%%%%%%%%%%%%%%%%%%%%%%%%%%%%%%%%%%%%%%%%%%%%%%%%%%%%%%%%%

%%%%%%%%%%%%%%%%%%%%%%%%%%%%%%%%%%%%%%%%%%%%%%%%%%%%%%%%%%%%%%%%%%%%%%%%%%%%%%%%%%%%%%%%
\begin{proof}
This follows from the facts that 
$|R|\leq \cM(H)\leq 3\,|R|$
and 
$
\deg_G(v) > |R|
$.
\end{proof}
%%%%%%%%%%%%%%%%%%%%%%%%%%%%%%%%%%%%%%%%%%%%%%%%%%%%%%%%%%%%%%%%%%%%%%%%%%%%%%%%%%%%%%%%

%%%%%%%%%%%%%%%%%%%%%%%%%%%%%%%%%%%%%%%%%%%%%%%%%%%%%%%%%%%%%%%%%%%%%%%%%%%%%%%%%%%%%%%%
\subsection{Some simple bounds for 
$\text{\emd}_{\!\!\!G_{u,v}}(\PP_u,\PP_v)$
and 
$\mathfrak{C}_{G}(e)$}
\label{sec-simple-bo}
%%%%%%%%%%%%%%%%%%%%%%%%%%%%%%%%%%%%%%%%%%%%%%%%%%%%%%%%%%%%%%%%%%%%%%%%%%%%%%%%%%%%%%%%

We use a calculation similar to the one used in~\cite{ASD20}.
Extend the distributions 
$\PP_u^G$ and $\PP_v^G$
to 
$\PP_u^{G'}$ and $\PP_v^{G'}$ over 
$L_{u,v}^G\cup R_{u,v}^G$ 
by letting 
$\PP_u^{G'}(x)=0$ for $x\in R_{u,v}\setminus L_{u,v}$ and 
$\PP_v^{G'}(x)=0$ for $x\in L_{u,v}\setminus R_{u,v}$. 
For notational simplicity, let 
$k=\nbr_G(u)\setminus \nbr_G(v)$, 
$\ell=\nbr_G(u)\cap\nbr_G(v)$, 
and
$m=\nbr_G(v)\setminus \nbr_G(u)$, 
thus 
$\deg_G(u)=k+\ell$
and 
$\deg_G(v)=m+\ell$.
By straightforward calculation, the total variation distance (TVD) 
between 
$\PP_u'$ and $\PP_v'$
is
%%%%%%%%%%%%%%%%%%%%%%%%%%%%%%%%%%%%%%%%%%%%%%%%%%%%%%%%%%%%%%%%%%%%%%%%%%%%%%%%%%%%%%%%
\begin{multline*}
\textstyle
|| \, \PP_u'-\PP_v' \, || _{ \mathrm{TVD} }
=
\frac{1}{2} \times 
\left(
\frac{k-1}{k+\ell+1}
+
\frac{m-1}{m+\ell+1}
+ 
(\ell+2) \times 
\left( 
\frac{1}{k+\ell+1}
-
\frac{1}{m+\ell+1}
\right)
\right)
\\
\textstyle
=
1 
-
\frac{\ell+2}{\deg(v)+1}
\end{multline*}
%%%%%%%%%%%%%%%%%%%%%%%%%%%%%%%%%%%%%%%%%%%%%%%%%%%%%%%%%%%%%%%%%%%%%%%%%%%%%%%%%%%%%%%%
Since 
$1\leq \ddist_G(u',v')\leq 3$ for all $u',v'\in L_{u,v}^G\cup R_{u,v}^G, u'\neq v'$, 
by standard relationships between \emd\ and TVD (\EG, see~\cite{GS02}) it follows that 
$
|| \, \PP_u'-\PP_v' \, || _{ \mathrm{TVD} }
\leq 
\text{\emd}_{\!\!\!G_{u,v}}(\PP_u,\PP_v)
\leq 
3 \times || \, \PP_u'-\PP_v' \, || _{ \mathrm{TVD} }
$, thereby giving 
%%%%%%%%%%%%%%%%%%%%%%%%%%%%%%%%%%%%%%%%%%%%%%%%%%%%%%%%%%%%%%%%%%%%%%%%%%%%%%%%%%%%%%%%
\begin{gather*}
-2 + 
\frac{3\ell+6}{\deg_G(v)+1}
\leq 
\mathfrak{C}_{G}(e)
\leq 
\frac{\ell+2}{\deg_G(v)+1}
\end{gather*}
%%%%%%%%%%%%%%%%%%%%%%%%%%%%%%%%%%%%%%%%%%%%%%%%%%%%%%%%%%%%%%%%%%%%%%%%%%%%%%%%%%%%%%%%
Furthermore, if $G$ has no cycles of length $5$ or less containing $e$ then 
$\ddist_G(u',v')=3$ for all $u',v'\in L_{u,v}^G\cup R_{u,v}^G$ and $\ell=0$ giving 
$
\mathfrak{C}_{G}(e)
= 
\frac{2}{\deg_G(v)+1}
$.

\newcommand{\mmid}{{\mathfrak{m}\mathfrak{i}\mathfrak{d}}}

%%%%%%%%%%%%%%%%%%%%%%%%%%%%%%%%%%%%%%%%%%%%%%%%%%%%%%%%%%%%%%%%%%%%%%%%%%%%%%%%%%%%%%%%
\section{Synopsis of our results}
%%%%%%%%%%%%%%%%%%%%%%%%%%%%%%%%%%%%%%%%%%%%%%%%%%%%%%%%%%%%%%%%%%%%%%%%%%%%%%%%%%%%%%%%

The main goal of this paper is to study algorithmic complexities of 
efficient computation of our network curvature measures. 
To this effect, our main contributions are threefold:
%%%%%%%%%%%%%%%%%%%%%%%%%%%%%%%%%%%%%%%%%%%%%%%%%%%%%%%%%%%%%%%%%%%%%%%%%%%%%%%%%%%%%%%%%%%%%
\begin{enumerate}[label=$\triangleright$]
\item
We relate various cases of our curvature computation problems via fine-grained reduction.
\item
We formalize the computational aspects of the
curvature computation problems in suitable frameworks so that they can be studied by 
researchers in local algorithms. 
%%%%%%%%%%%%%%%%%%%%%%%%%%%%%%%%%%%%%%%%%%%%%%%%%%%%%%%%%%%%%%%%%%%%%%%%%%%%%%%%%%%%%%%%%%%%%
\item
We provide the first known lower and upper bounds 
on queries for query-based algorithms for 
the curvature computation problems
in our local algorithms framework. 
%%%
\emph{En route}, 
we also illustrate a localized version of our fine-grained reduction.
\end{enumerate}
%%%%%%%%%%%%%%%%%%%%%%%%%%%%%%%%%%%%%%%%%%%%%%%%%%%%%%%%%%%%%%%%%%%%%%%%%%%%%%%%%%%%%%%%%%%%%
A summary of our contribution 
in the rest of this paper is the following.
%%%%%%%%%%%%%%%%%%%%%%%%%%%%%%%%%%%%%%%%%%%%%%%%%%%%%%%%%%%%%%%%%%%%%%%%%%%%%%%%%%%%%%%%
%%%%%%%%%%%%%%%%%%%%%%%%%%%%%%%%%%%%%%%%%%%%%%%%%%%%%%%%%%%%%%%%%%%%%%%%%%%%%%%%%%%%%%%%
\begin{enumerate}[label={\small\Pisymbol{pzd}{113}},leftmargin=*]
\item
In Section~\ref{sec-related-eq-noteq}
we relate  
via Theorem~\ref{thm-ineqdeg}
the minimum weight perfect matching problem on complete bipartite graphs with ternary weights 
to computing $\mathfrak{C}_{G}(e)$ via fine-grained reduction.
%%%%%%%%%%%%%%%%%%%%%%%%%%%%%%%%%%%%%%%%%%%%%%%%%%%%%%%%%%%%%%%%%%%%%%%%%%%%%%%%%%%%%%%%
\item
In Section~\ref{sec-query-all}
we present our results 
for computing $\mathfrak{C}_{G}(e)$ in the framework of local algorithms.
%%%%%%%%%%%%%%%%%%%%%%%%%%%%%%%%%%%%%%%%%%%%%%%%%%%%%%%%%%%%%%%%%%%%%%%%%%%%%%%%%%%%%%%%
\begin{enumerate}[label={\Large$\circ$},leftmargin=*]
%%%%%%%%%%%%%%%%%%%%%%%%%%%%%%%%%%%%%%%%%%%%%%%%%%%%%%%%%%%%%%%%%%%%%%%%%%%%%%%%%%%%%%%%
\item
In Sections~\ref{sec-prior-query}$\,$--$\,$\ref{sec-query-defn}
we provide details of the query models relevant to our case and prior related works on these query
models.
%%%%%%%%%%%%%%%%%%%%%%%%%%%%%%%%%%%%%%%%%%%%%%%%%%%%%%%%%%%%%%%%%%%%%%%%%%%%%%%%%%%%%%%%
\item
In Sections~\ref{suc-sub-lower}$\,$--$\,$\ref{suc-sub-upper2}
Theorem~\ref{thm-lower-all},
Theorem~\ref{thm-simple-up}
and 
Theorem~\ref{thm-trans-query}
provide
query bounds for exact or approximate calculations of the curvature 
using the query models.
The bounds are succinctly summarized in 
Section~\ref{sec-query-bounds-all} via
Table~\ref{tabb1}.
%%%%%%%%%%%%%%%%%%%%%%%%%%%%%%%%%%%%%%%%%%%%%%%%%%%%%%%%%%%%%%%%%%%%%%%%%%%%%%%%%%%%%%%%
\end{enumerate}
%%%%%%%%%%%%%%%%%%%%%%%%%%%%%%%%%%%%%%%%%%%%%%%%%%%%%%%%%%%%%%%%%%%%%%%%%%%%%%%%%%%%%%%%
\item
In Section~\ref{sec-ricci-nodes} 
Lemma~\ref{lem-ave}
provides our results for computing 
the Ollivier-Ricci curvature $\mathfrak{C}_{G}(v)$ 
for nodes and 
for computing the average Ollivier-Ricci curvature 
$\mathfrak{C}_{\mathrm{avg}}(G)$
for graphs using 
``black box'' additive approximation algorithms for $\mathfrak{C}_{G}(e)$
and 
neighbor queries.
%%%%%%%%%%%%%%%%%%%%%%%%%%%%%%%%%%%%%%%%%%%%%%%%%%%%%%%%%%%%%%%%%%%%%%%%%%%%%%%%%%%%%%%%
\item
We conclude in Section~\ref{sec-conclude}
with some possible future research problems.
\end{enumerate}
%%%%%%%%%%%%%%%%%%%%%%%%%%%%%%%%%%%%%%%%%%%%%%%%%%%%%%%%%%%%%%%%%%%%%%%%%%%%%%%%%%%%%%%%

\newcommand{\mpmct}{{\sc Mpmct}}

%%%%%%%%%%%%%%%%%%%%%%%%%%%%%%%%%%%%%%%%%%%%%%%%%%%%%%%%%%%%%%%%%%%%%%%%%%%%%%%%%%%%%%%%
\section{Fine-grained reduction: relating minimum weight perfect matching on complete bipartite graphs
to computing $\mathfrak{C}_{G}(e)$}
\label{sec-related-eq-noteq}
%%%%%%%%%%%%%%%%%%%%%%%%%%%%%%%%%%%%%%%%%%%%%%%%%%%%%%%%%%%%%%%%%%%%%%%%%%%%%%%%%%%%%%%%

Frameworks for characterizing polynomial-time solvable problems via \emph{fine-grained reduction} 
have garnered considerable attention in recent years (\EG, see~\cite{VVW19} for a survey and~\cite{AGW15,Pat10,Lee02} 
for a few well-known results in this direction). 
Essentially these fine-grained reductions are used to show that, 
given two problems $\cA$ and $\cB$ and two constants $a,b>0$, 
if an instance $\cI_{\cB}$ of size $|\cI_{\cB}|$ of problem $\cB$ can be solved in $O(|\cI_{\cB}|^b)$ time  
then an instance $\cI_{\cA}$ of size $|\cI_{\cA}|$ of problem $\cA$ can be solved in $O(|\cI_{\cA}|^a)$ time.

To begin, we first formally state the \emph{minimum weight perfect} matching problem on \emph{complete bipartite} graphs with \emph{ternary} 
edge weights. 

%%%%%%%%%%%%%%%%%%%%%%%%%%%%%%%%%%%%%%%%%%%%%%%%%%%%%%%%%%%%%%%%%%%%%%%%%%%%%%%%%%%%%%%%
\begin{definition}[\bf minimum weight perfect matching on complete bipartite graphs with ternary weights (\mpmct)]
Given a complete edge-weighted bipartite graph $H=(A,B,w)$ where $|A|=|B|$ and 
$w:A\times B \mapsto \{1,2,3\}$ is the edge-weight function, find 
the value of 
$\frac{|\cM|}{|A|}$ where $|\cM|$ is the value $($sum of weights of edges$)$ in 
a minimum-weight perfect matching $\cM$ of $H$.
\end{definition}
%%%%%%%%%%%%%%%%%%%%%%%%%%%%%%%%%%%%%%%%%%%%%%%%%%%%%%%%%%%%%%%%%%%%%%%%%%%%%%%%%%%%%%%%

For \mpmct, exact solution takes $O( |A|^{5/2} )$ time~\cite{LS15}, 
and an $\eps$-additive approximation takes 
$\tilde{O} \big( \frac{1}{\eps^{2}} |A|^2 \, \big)$ time.
%%%%
The following theorem related \mpmct\ to the problem of computing a solution of 
the linear program in~\eqref{lpcg} via a fine-grained reduction.

%%%%%%%%%%%%%%%%%%%%%%%%%%%%%%%%%%%%%%%%%%%%%%%%%%%%%%%%%%%%%%%%%%%%%%%%%%%%%%%%%%%%%%%%
\begin{theorem}\label{thm-ineqdeg}
Suppose that we have an algorithm
$\aaa$ that provides 
$(\alpha,\eps)$-estimate for 
\mpmct\
in $O(|A|^{2+\mu})$ time for some $\mu\geq 0$ for a given input instance $H=(A,B,w)$.

Then, there exists an algorithm 
$\aaa_<$ that provides
the following estimates for the linear program in~\eqref{lpcg}
in $O(\deg_G(v)^{2+\mu})$ time:
%%%%%%%%%%%%%%%%%%%%%%%%%%%%%%%%%%%%%%%%%%%%%%%%%%%%%%%%%%%%%%%%%%%%%%%%%%%%%%%%%%%%%%%%%%%%%
\begin{enumerate}[label={\emph{(}\roman*\emph{)}}]
%%%%%%%%%%%%%%%%%%%%%%%%%%%%%%%%%%%%%%%%%%%%%%%%%%%%%%%%%%%%%%%%%%%%%%%%%%%%%%%%%%%%%%%%%%%%%
\item
$(\alpha,\eps)$-estimate if 
%%$\deg_G(v)-1$ is an integral multiple of $\deg_G(u)-1$, and 
$\deg_G(v)+1$ is an integral multiple of $\deg_G(u)+1$, and 
%%%%%%%%%%%%%%%%%%%%%%%%%%%%%%%%%%%%%%%%%%%%%%%%%%%%%%%%%%%%%%%%%%%%%%%%%%%%%%%%%%%%%%%%%%%%%
\item
$(\alpha,\eps+\delta)$-estimate 
$($for $\delta>0)$
provided $\delta$ satisfies 
at least one of the following conditions:
%%%%%%%%%%%%%%%%%%%%%%%%%%%%%%%%%%%%%%%%%%%%%%%%%%%%%%%%%%%%%%%%%%%%%%%%%%%%%%%%%%%%%%%%%%%%%
%%%%%%%%%%%%%%%%%%%%%%%%%%%%%%%%%%%%%%%%%%%%%%%%%%%%%%%%%%%%%%%%%%%%%%%%%%%%%%%%%%%%%%%%%%%%%
\begin{enumerate}[label={\emph{(}\alph*\emph{)}}]
%%%%%%%%%%%%%%%%%%%%%%%%%%%%%%%%%%%%%%%%%%%%%%%%%%%%%%%%%%%%%%%%%%%%%%%%%%%%%%%%%%%%%%%%%%%%%
\item
$\deg_G(u)\leq(\delta/3)\times \deg_G(v)$,
or 
%%%%%%%%%%%%%%%%%%%%%%%%%%%%%%%%%%%%%%%%%%%%%%%%%%%%%%%%%%%%%%%%%%%%%%%%%%%%%%%%%%%%%%%%%%%%%
\item
$\deg_G(u)\geq (1 - (\delta/3)\,) \times \deg_G(v)$.
%%%%%%%%%%%%%%%%%%%%%%%%%%%%%%%%%%%%%%%%%%%%%%%%%%%%%%%%%%%%%%%%%%%%%%%%%%%%%%%%%%%%%%%%%%%%%
\end{enumerate}
%%%%%%%%%%%%%%%%%%%%%%%%%%%%%%%%%%%%%%%%%%%%%%%%%%%%%%%%%%%%%%%%%%%%%%%%%%%%%%%%%%%%%%%%%%%%%
%%%%%%%%%%%%%%%%%%%%%%%%%%%%%%%%%%%%%%%%%%%%%%%%%%%%%%%%%%%%%%%%%%%%%%%%%%%%%%%%%%%%%%%%%%%%%
\end{enumerate}
%%%%%%%%%%%%%%%%%%%%%%%%%%%%%%%%%%%%%%%%%%%%%%%%%%%%%%%%%%%%%%%%%%%%%%%%%%%%%%%%%%%%%%%%%%%%%
\end{theorem}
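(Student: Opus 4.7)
The plan is to reduce an instance of the linear program~\eqref{lpcg} to an instance of \mpmct\ of size $O(\deg_G(v))$ by duplicating the nodes of $L_{u,v}^G$ to equalise both sides of the bipartite structure, and then to invoke $\aaa$ on the resulting matching instance. Write $t=\lfloor(\deg_G(v)+1)/(\deg_G(u)+1)\rfloor$, $r=(\deg_G(v)+1)-t(\deg_G(u)+1)\in\{0,1,\dots,\deg_G(u)\}$, and $\ell=|\nbr_G(u)\cap\nbr_G(v)|$. First, I would construct an equalised bipartite instance $(A',B')$ with $B'=R_{u,v}^G$ by replacing each $x\in L_{u,v}^G$ with $c(x)$ copies, where exactly $r$ chosen nodes get $c(x)=t+1$ copies and the rest get $c(x)=t$ copies, so that $|A'|=|B'|=\deg_G(v)+1$. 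Each copy in $A'$ is given mass $1/(\deg_G(v)+1)$, matching the uniform mass on $B'$, and the weight between a copy $(x,i)\in A'$ and $y\in B'$ is inherited as $\dist_G(x,y)\in\{0,1,2,3\}$.

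Since both sides of $(A',B')$ now carry the same number of equal-mass atoms, the transportation LP on $(A',B')$ is (after rescaling by $\deg_G(v)+1$) a linear program over the Birkhoff polytope, so an optimum is attained by a minimum-weight perfect matching on $(A',B')$. I would then pre-match the $\ell+2$ zero-weight edges connecting each common element in $\{u,v\}\cup(\nbr_G(u)\cap\nbr_G(v))\subseteq B'$ to one of its copies in $A'$; as all weights are non-negative, a standard swap argument shows this is WLOG optimal. What remains is a complete edge-weighted bipartite instance with both sides of size $\deg_G(v)-\ell-1$ and weights in $\{1,2,3\}$, i.e., a valid \mpmct\ input on which $\aaa$ runs in $O(\deg_G(v)^{2+\mu})$ time; rescaling its output $\beta$ by $|A|/(\deg_G(v)+1)\le 1$ (so the additive slack only shrinks) yields an $(\alpha,\eps)$-estimate of the transportation LP value on $(A',B')$.

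To transfer this back to~\eqref{lpcg}, I would bound the perturbation that the duplication introduces. Collapsing the copies back to their originating nodes yields a distribution $\bar{\PP}_u$ on $L_{u,v}^G$ with $\bar{\PP}_u(x)=c(x)/(\deg_G(v)+1)$, and a direct calculation gives
\[
\|\PP_u^G-\bar{\PP}_u\|_{\mathrm{TVD}}\;=\;\frac{r(\deg_G(u)+1-r)}{(\deg_G(u)+1)(\deg_G(v)+1)};
\]
since the supports have diameter at most $3$ under $\dist_G$ by Observation~\ref{obs3}, the value of the transportation LP on $(A',B')$ differs from the value of~\eqref{lpcg} by at most $3\|\PP_u^G-\bar{\PP}_u\|_{\mathrm{TVD}}$. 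In case~(i), $r=0$ and the modification is exact, so the $(\alpha,\eps)$-estimate transfers to~\eqref{lpcg} as stated. In case~(ii)(a), bounding $r(\deg_G(u)+1-r)\le(\deg_G(u)+1)\deg_G(u)$ gives TVD $\le\deg_G(u)/(\deg_G(v)+1)$, and the hypothesis $\deg_G(u)\le(\delta/3)\deg_G(v)$ pushes this down to TVD $\le\delta/3$. In case~(ii)(b), one has $t=1$ and $r=\deg_G(v)-\deg_G(u)\le(\delta/3)\deg_G(v)$, so bounding $r(\deg_G(u)+1-r)\le r(\deg_G(u)+1)$ again gives TVD $\le\delta/3$. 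Either way the EMD perturbation is at most $\delta$, which absorbed into the additive slack upgrades the $(\alpha,\eps)$-estimate of the perturbed LP into an $(\alpha,\eps+\delta)$-estimate of~\eqref{lpcg}.

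The main obstacle I anticipate is two-fold. Structurally, the equivalence between the uniform-mass transportation LP on $(A',B')$ and a minimum-weight perfect matching, together with the WLOG optimality of pre-matching the common-element zero-weight edges, needs to be written out in detail as an extension rather than a direct instance of Fact~\ref{fact2}, since $(A',B')$ is the duplicated rather than the equal-degree bipartite graph. Numerically, because the $(\alpha,\eps)$-estimate guarantee is two-sided, carrying the $\delta$-EMD perturbation through both sides of the inequality, together with the rescaling from $|\cM|/|A|$ to the LP value, has to be done carefully so that the final additive slack lands exactly at $\eps+\delta$; this is ultimately where the sharp constants $\delta/3$ in conditions~(ii)(a) and~(ii)(b), together with the diameter-$3$ bound of Observation~\ref{obs3}, are spent.
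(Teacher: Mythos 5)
Your construction is correct and genuinely different from the paper's. The paper sets $a=\lfloor(\deg_G(v)+1)/(\deg_G(u)+1)\rfloor$, gives every node of $L_{u,v}^G$ exactly $a$ copies, and pads with $b=(\deg_G(v)+1)-a(\deg_G(u)+1)$ ``special'' dummy nodes whose every edge has weight $3$; the error then comes purely from the forced dummy cost and is bounded (by exhibiting an explicit feasible solution) by $3b/(\deg_G(v)+1)$, and monotonicity $V\leq V'$ holds because the dummies can only increase cost. You instead dispense with dummies by giving $r$ nodes $t+1$ copies and the rest $t$ copies, and you bound the error by $3\Norma{\PP_u^G-\bar{\PP}_u}_{\mathrm{TVD}}=3r(\deg_G(u)+1-r)/((\deg_G(u)+1)(\deg_G(v)+1))$, which is never larger than the paper's $3b/(\deg_G(v)+1)$ (it equals it times $1-b/(\deg_G(u)+1)$), so your error bound is in fact tighter. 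Two small points you should tidy up. First, in case (ii)(b) you assert $t=1$; this fails for $\delta>3/2$, but you do not need it: since $t\geq1$ one always has $r\leq\deg_G(v)-\deg_G(u)$, which is all your bound actually uses. Second, your perturbation is genuinely two-sided ($\bar{\PP}_u$ both over- and under-weights nodes), so to recover the lower bound $\hat V\geq V$ you must shift the output by $\delta'$, paying $(\alpha+1)\delta'$ in the slack rather than the paper's $\alpha\delta'$; you flag this yourself, and in fairness the paper is equally loose here (it silently drops the $\alpha$ factor and has an off-by-constant in the $\delta\deg_G(v)+1$ line), so this is a shared constant-factor looseness rather than a gap specific to your argument. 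The extension of Fact~\ref{fact2} to the duplicated graph that you ask for is routine: the appendix's total-unimodularity argument applies verbatim to $(A',B')$, and your swap argument for the zero-weight pre-matching is the same triangle-inequality exchange used there.
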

%%%%%%%%%%%%%%%%%%%%%%%%%%%%%%%%%%%%%%%%%%%%%%%%%%%%%%%%%%%%%%%%%%%%%%%%%%%%%%%%%%%%%%%%

%%%%%%%%%%%%%%%%%%%%%%%%%%%%%%%%%%%%%%%%%%%%%%%%%%%%%%%%%%%%%%%%%%%%%%%%%%%%%%%%%%%%%%%%
\begin{remark}
An illustration of the result in 
Theorem~{\em\ref{thm-ineqdeg}} is as follows.
Suppose that we can solve \mpmct\ exactly in 
$O( |A|^{2.4} )$ time $($implying $\alpha=1$ and $\eps=0)$. 
Then, such an algorithm can be used to obtain a 
${\deg_G(v)}^{-1/2}$-additive approximation of~\eqref{lpcg} 
\emph{(\IE}, $\delta={\deg_G(v)}^{-1/2})$
in $O( (\deg_G(v) )^{2.4} )$ time 
provided 
at least one of the following conditions hold:
%%%%%%%%%%%%%%%%%%%%%%%%%%%%%%%%%%%%%%%%%%%%%%%%%%%%%%%%%%%%%%%%%%%%%%%%%%%%%%%%%%%%%%%%%%%%%
\begin{enumerate*}[label={\emph{(}\alph*\emph{)}}]
%%%%%%%%%%%%%%%%%%%%%%%%%%%%%%%%%%%%%%%%%%%%%%%%%%%%%%%%%%%%%%%%%%%%%%%%%%%%%%%%%%%%%%%%%%%%%
\item
$\deg_G(u)\leq \frac{\sqrt{\deg_G(v)}}{3}$,
%%%%%%%%%%%%%%%%%%%%%%%%%%%%%%%%%%%%%%%%%%%%%%%%%%%%%%%%%%%%%%%%%%%%%%%%%%%%%%%%%%%%%%%%%%%%%
\item
$\deg_G(u)\geq \deg_G(v) - \frac{\sqrt{\deg_G(v)}}{3}$, 
or
%%%%%%%%%%%%%%%%%%%%%%%%%%%%%%%%%%%%%%%%%%%%%%%%%%%%%%%%%%%%%%%%%%%%%%%%%%%%%%%%%%%%%%%%%%%%%
\item
%%$\deg_G(v)-1$ is an integral multiple of $\deg_G(u)-1$. 
$\deg_G(v)+1$ is an integral multiple of $\deg_G(u)+1$. 
%%%%%%%%%%%%%%%%%%%%%%%%%%%%%%%%%%%%%%%%%%%%%%%%%%%%%%%%%%%%%%%%%%%%%%%%%%%%%%%%%%%%%%%%%%%%%
\end{enumerate*}
%%%%%%%%%%%%%%%%%%%%%%%%%%%%%%%%%%%%%%%%%%%%%%%%%%%%%%%%%%%%%%%%%%%%%%%%%%%%%%%%%%%%%%%%%%%%%
Such a result will improve the best possible running time for a 
${\deg_G(v)}^{-1/2}$-additive approximation of~\eqref{lpcg}. 
\end{remark}
%%%%%%%%%%%%%%%%%%%%%%%%%%%%%%%%%%%%%%%%%%%%%%%%%%%%%%%%%%%%%%%%%%%%%%%%%%%%%%%%%%%%%%%%

%%%%%%%%%%%%%%%%%%%%%%%%%%%%%%%%%%%%%%%%%%%%%%%%%%%%%%%%%%%%%%%%%%%%%%%%%%%%%%%%%%%%%%%%
\begin{proof}
Let 
$\nbr_G(u)\cup \{u\}=\{x_1,\dots,x_{\deg(u)+1}\}$, 
and 
$\nbr_G(v) \cup \{v\}=\{y_1,\dots,y_{\deg(v)+1}\}$, 
where $x_{\deg_G(u)}=y_{\deg_G(v)}=u$ and 
$x_{\deg_G(u)+1}=y_{\deg_G(v)+1}=v$.
%%%%%%%%%%%%%
Let 
$\deg_G(v)+1=a\,(\deg_G(u)+1)+b$ for two integers $a\geq 1$ and $0\leq b<\deg_G(u)+1$. 
We construct a new 
graph $G_{u,v}'=(L_{u,v}^{G'},R_{u,v}^{G'},w_{u,v}^{G'})$ 
from $G_{u,v}$
in the following manner: 
%%%%%%%%%%%%%%%%%%%%%%%%%%%%%%%%%%%%%%%%%%%%%%%%%%%%%%%%%%%%%%%%%%%%%%%%%%%%%%%%%%%%%%%%%%%%%
\begin{enumerate}[label=$\triangleright$]
%%%%%%%%%%%%%%%%%%%%%%%%%%%%%%%%%%%%%%%%%%%%%%%%%%%%%%%%%%%%%%%%%%%%%%%%%%%%%%%%%%%%%%%%%%%%%
\item
We set $R_{u,v}^{G'}=R_{u,v}^{G}$.
%%%%%%%%%%%%%%%%%%%%%%%%%%%%%%%%%%%%%%%%%%%%%%%%%%%%%%%%%%%%%%%%%%%%%%%%%%%%%%%%%%%%%%%%%%%%%
\item
Every node 
$x_i$
is replaced by $a$ nodes $x_i^1,\dots,x_i^a$
in $L_{u,v}^{G'}$.
Moreover, we have $b$ additional ``special'' nodes  
$r_1,\dots,r_b$
in $L_{u,v}^{G'}$.
Note that after these modifications 
$| L_{u,v}^{G'}| =| R_{u,v}^{G'}| = 1+\deg_G(v)$.
%%%%%%%%%%%%%%%%%%%%%%%%%%%%%%%%%%%%%%%%%%%%%%%%%%%%%%%%%%%%%%%%%%%%%%%%%%%%%%%%%%%%%%%%%%%%%
\item
We set 
the new weights $w_{u,v}^{G'}$
as follows: 
%%%%%%%%%%%%%%%%%%%%%%%%%%%%%%%%%%%%%%%%%%%%%%%%%%%%%%%%%%%%%%%%%%%%%%%%%%%%%%%%%%%%%%%%%%%%%
\begin{eqnarray*}
w_{u,v}^{G'}
(x_i^j,y_\ell) 
      & =  & \ddist_{G}(x_i,y_\ell)
     %%\text{ for $i\in\{1,\dots,\deg_G(u)-1\}$,}
     \text{ for $i\in\{1,\dots,\deg_G(u)+1\}$,}
\\
		 & & \hspace*{0.5in} \text{$j\in\{1,\dots,a\}$, and $\ell\in\{1,\dots,\deg_G(v)+1\}$}
\\
w_{u,v}^{G'}
(r_i,y_\ell) 
      & =  & 3 \text{ for $i\in\{1,\dots,b\}$, and $\ell\in\{1,\dots,\deg_G(v)+1\}$}
\end{eqnarray*}
%%%%%%%%%%%%%%%%%%%%%%%%%%%%%%%%%%%%%%%%%%%%%%%%%%%%%%%%%%%%%%%%%%%%%%%%%%%%%%%%%%%%%%%%%%%%%
\item
The two \emph{new} probability distributions 
${\PP_u}^{\!\!\!\!G_{u,v}'}$ and ${\PP_v}^{\!\!\!\!G_{u,v}'}$
over the nodes in $L_{u,v}^{G'}$ and $R_{u,v}^{G'}$ are 
as follows: 
${\PP_v}^{\!\!\!\!G_{u,v}'}(x)=\PP_v^G(x)$
for all $x\in\{y_1,\dots,y_{\deg_G(v)+1}\}$,
and 
%%%%%%%%%%%%%%%%%%%%%%%%%%%%%%%%%%%%%%%%%%%%%%%%%%%%%%%%%%%%%%%%%%%%%%%%%%%%%%%%%%%%%%%%%%%%%
$
{\PP_u}^{\!\!\!\!G_{u,v}'}(x)
= 
\frac{1}{1+\deg_G(v)}$ for all $x\in \bigcup_{i,j} \{ x_i^j\} \, \cup \{r_1,\dots,r_b\}$.
%%%%%%%%%%%%%%%%%%%%%%%%%%%%%%%%%%%%%%%%%%%%%%%%%%%%%%%%%%%%%%%%%%%%%%%%%%%%%%%%%%%%%%%%%%%%%
\end{enumerate}
%%%%%%%%%%%%%%%%%%%%%%%%%%%%%%%%%%%%%%%%%%%%%%%%%%%%%%%%%%%%%%%%%%%%%%%%%%%%%%%%%%%%%%%%%%%%%
Since 
$| L_{u,v}^{G'}| =| R_{u,v}^{G'}| = 1+\deg_G(v)$,
using the reformulations as discussed in Section~\ref{sec-eqdeg}
it follows that 
$G_{u,v}'$ 
is a valid instance $H=(A,B,w)$ of 
\mpmct\ with $|A|=\deg_G(v)+1$ and $w(p,q)=w_{u,v}^{G'}(p,q)$.
Note that building 
the graph $G_{u,v}'$ takes 
$O( ( \deg_G(v) )^2 )$ time,
and algorithm 
$\aaa$ provides a 
$(\alpha,\eps)$-estimate for 
$\text{\emd}_{\!\!G_{u,v}'}( {\PP_u}^{\!\!\!\!G_{u,v}'}, {\PP_v}^{\!\!\!\!G_{u,v}'})$ 
%%%%
in $O( ( \deg_G(v) )^{2+\mu} )$ time.
%%%%
Thus, 
to complete the proof
it suffices to show that 
\[
\text{\emd}_{\!\!\!G_{u,v}} (\PP_u^G,\PP_v^G)
\leq
\text{\emd}_{\!\!G_{u,v}'}( {\PP_u}^{\!\!\!\!G_{u,v}'}, {\PP_v}^{\!\!\!\!G_{u,v}'})
\leq
\text{\emd}_{\!\!\!G_{u,v}}
(\PP_u^G,\PP_v^G)
    + \delta
\] 
The linear program for 
$\text{\emd}_{\!\!G_{u,v}'}( {\PP_u}^{\!\!\!\!G_{u,v}'}, {\PP_v}^{\!\!\!\!G_{u,v}'} )$
is a straightforward modified version of~\eqref{lpcg} with appropriate change of subscripts of the variables.
We will refer to this modified version by~\eqref{lpcg}$'$.

We can show 
$
\text{\emd}_{\!\!\!G_{u,v}'}( {\PP_u}^{\!\!\!\!G_{u,v}'}, {\PP_v}^{\!\!\!\!G_{u,v}'})
\leq
\text{\emd}_{\!\!\!G_{u,v}} (\PP_u^G,\PP_v^G)
   + \delta
$ 
as follows.
Consider an \emph{optimal} solution of the linear program~\eqref{lpcg} 
of value 
$\text{\emd}_{\!\!\!G_{u,v}} (\PP_u^G,\PP_v^G)$.
From this solution we can create a \emph{feasible solution} of 
the linear program~\eqref{lpcg}$'$
in the following manner.
%%%%%%%%%%%%%%%%%%%%%%%%%%%%%%%%%%%%%%%%%%%%%%%%%%%%%%%%%%%%%%%%%%%%%%%%%%%%%%%%%%%%%%%%%%%%%
\begin{enumerate}[label=$\triangleright$]
%%%%%%%%%%%%%%%%%%%%%%%%%%%%%%%%%%%%%%%%%%%%%%%%%%%%%%%%%%%%%%%%%%%%%%%%%%%%%%%%%%%%%%%%%%%%%
\item
For $i=1,\dots,\deg_G(u)+1$ and $j=1,\dots,\deg_G(v)+1$, 
if $z_{x_i,y_j}>0$ then  
distribute the value of $z_{x_i,y_j}$ 
among the corresponding variables of~\eqref{lpcg}$'$ as follows:
%%%%%%%%%%%%%%%%%%%%%%%%%%%%%%%%%%%%%%%%%%%%%%%%%%%%%%%%%%%%%%%%%%%%%%%%%%%%%%%%%%%%%%%%%%%%%
\begin{itemize}
\item
Repeatedly select a variable from 
$\{x_i^1,\dots,x_i^a\}$, say $x_i^\ell$, such that $x_i^\ell<\frac{1}{1+\deg_G(v)}$. 
Increase $x_i^\ell$ to 
$\min \left\{ \frac{1}{(1+\deg_G(v))},\, z_{x_i,y_j} \right\}$, and 
decrease $z_{x_i,y_j}$ by the amount by which $x_i^\ell$ was increased. 
Note that 
$
w_{u,v}^{G'}(x_i,y_j)=
\ddist_{G}(x_i,y_j)
$.
Repeat this step until $z_{x_i,y_j}$ becomes zero or no such variable 
$x_i^\ell$ exists.
%%%%%%%%%%%%%%%%%%%%%%%%%%%%%%%%%%%%%%%%%%%%%%%%%%%%%%%%%%%%%%%%%%%%%%%%%%%%%%%%%%%%%%%%%%%%%
\item
If 
$z_{x_i,y_j}>0$
after the previous step ends 
then 
execute the following steps.
Repeatedly select a variable from 
$\{r_1,\dots,r_b\}$, say $r_\ell$, such that $r_\ell<\frac{1}{1+\deg_G(v)}$. 
Increase $r_\ell$ to 
$\min \left\{ \frac{1}{(1+\deg_G(v))},\, z_{x_i,y_j} \right\}$, and 
decrease $z_{x_i,y_j}$ by the amount by which $r_\ell$ was increased. 
Note that 
$
w_{u,v}^{G'}(x_i,y_j) \leq
\ddist_{G}(x_i,y_j)+3
$.
Repeat this step until $z_{x_i,y_j}$ becomes zero.
\end{itemize}
%%%%%%%%%%%%%%%%%%%%%%%%%%%%%%%%%%%%%%%%%%%%%%%%%%%%%%%%%%%%%%%%%%%%%%%%%%%%%%%%%%%%%%%%%%%%%
\end{enumerate}
%%%%%%%%%%%%%%%%%%%%%%%%%%%%%%%%%%%%%%%%%%%%%%%%%%%%%%%%%%%%%%%%%%%%%%%%%%%%%%%%%%%%%%%%%%%%%
A straightforward calculation show that 
$
\text{\emd}_{\!\!\!G_{u,v}'}( {\PP_u}^{\!\!\!\!G_{u,v}'}, {\PP_v}^{\!\!\!\!G_{u,v}'})
\leq
\text{\emd}_{\!\!\!G_{u,v}} (\PP_u^G,\PP_v^G)
     + \frac{3b}{\deg_G(v)+1}
$.
Therefore it suffices if we have 
$\frac{3b}{\deg(v)+1}\leq \delta$.
If $\deg_G(u)+1$ is an integral multiple of $\deg_G(v)+1$ then 
$b=0$ and this proves the claim in 
(\emph{i}).
Otherwise, 
since $b<\deg_G(u)+1\leq \deg_G(v)+1$ and 
$b\leq (\deg_G(v)+1)-(\deg_G(u)+1)= \deg_G(v)-\deg_G(u)$ 
we get 
%%%%%%%%%%%%%%%%%%%%%%%%%%%%%%%%%%%%%%%%%%%%%%%%%%%%%%%%%%%%%%%%%%%%%%%%%%%%%%%%%%%%%%%%
\begin{multline*}
\textstyle
\deg_G(u)\leq(\delta/3)\times \deg_G(v)
\,\Rightarrow\,
b<\deg_G(u)+1 \leq (\delta/3)\times \deg_G(v) +1
\\
\textstyle
\Rightarrow\,
\frac{3b}{\deg(v)+1} < \frac{\delta\times \deg_G(v) +1 }{\deg(v)+1}
\leq \delta
\end{multline*}
%%%%%%%%%%%%%%%%%%%%%%%%%%%%%%%%%%%%%%%%%%%%%%%%%%%%%%%%%%%%%%%%%%%%%%%%%%%%%%%%%%%%%%%%
%%%%%%%%%%%%%%%%%%%%%%%%%%%%%%%%%%%%%%%%%%%%%%%%%%%%%%%%%%%%%%%%%%%%%%%%%%%%%%%%%%%%%%%%
\begin{multline*}
\textstyle
\deg_G(u)\geq (1 - (\delta/3)\,) \times \deg_G(v)
\,\Rightarrow\,
\deg_G(v)-\deg_G(u)  \leq (\delta/3) \times \deg_G(v)
\\
\textstyle
\Rightarrow\,
\frac{3b}{\deg(v)+1}  \leq 
\frac{  \delta\times\deg_G(v) }{\deg(v)+1}  <\delta
\end{multline*}
%%%%%%%%%%%%%%%%%%%%%%%%%%%%%%%%%%%%%%%%%%%%%%%%%%%%%%%%%%%%%%%%%%%%%%%%%%%%%%%%%%%%%%%%
The proof of 
$
\text{\emd}_{\!\!\!G_{u,v}} (\PP_u^G,\PP_v^G)
\leq
\text{\emd}_{\!\!G_{u,v}'}( {\PP_u}^{\!\!\!\!G_{u,v}'}, {\PP_v}^{\!\!\!\!G_{u,v}'})
$
is similar.
\end{proof}
%%%%%%%%%%%%%%%%%%%%%%%%%%%%%%%%%%%%%%%%%%%%%%%%%%%%%%%%%%%%%%%%%%%%%%%%%%%%%%%%%%%%%%%%

%%%%%%%%%%%%%%%%%%%%%%%%%%%%%%%%%%%%%%%%%%%%%%%%%%%%%%%%%%%%%%%%%%%%%%%%%%%%%%%%%%%%%%%%
\section{Computing $\mathfrak{C}_{G}(e)$ in the framework of local algorithms}
\label{sec-query-all}
%%%%%%%%%%%%%%%%%%%%%%%%%%%%%%%%%%%%%%%%%%%%%%%%%%%%%%%%%%%%%%%%%%%%%%%%%%%%%%%%%%%%%%%%

By now designing local algorithms for efficient solution of graph-theoretic problems 
has become a well-established research area in theoretical computer science and data mining 
with a large body of publications (\EG, see~\cite{PARNAS2007,YYI2012,ORR12}).
A basic idea behind many of these algorithms is to suitably sample a small ``local'' neighborhood
of the graph to infer the value of some non-local property of a graph.
Frameworks for graph-theoretic applications of local algorithms 
hinges on the following two premises:
%%%%%%%%%%%%%%%%%%%%%%%%%%%%%%%%%%%%%%%%%%%%%%%%%%%%%%%%%%%%%%%%%%%%%%%%%%%%%%%%%%%%%%%%
\begin{enumerate}[label=$\triangleright$]
%%%%%%%%%%%%%%%%%%%%%%%%%%%%%%%%%%%%%%%%%%%%%%%%%%%%%%%%%%%%%%%%%%%%%%%%%%%%%%%%%%%%%%%%%%%%%
\item
We assume that our algorithm has a list of all nodes in the graph in a suitable format that 
allows for sampling a node based on some distribution.
%%%%%%%%%%%%%%%%%%%%%%%%%%%%%%%%%%%%%%%%%%%%%%%%%%%%%%%%%%%%%%%%%%%%%%%%%%%%%%%%%%%%%%%%%%%%%
\item
The edges and their weights are \emph{not} known to our algorithm \emph{a priori}. 
Instead, the algorithm uses a ``query'' on a node or a pair of nodes to discover an edge and its weight.
Different query models for local algorithms arise based on what kind of queries are allowed.
Later in Section~\ref{sec-query-defn}
we will provide details of query models that are applicable to our problems.
%%%%%%%%%%%%%%%%%%%%%%%%%%%%%%%%%%%%%%%%%%%%%%%%%%%%%%%%%%%%%%%%%%%%%%%%%%%%%%%%%%%%%%%%%%%%%
\item
The performance of the algorithm is measured by the \emph{number} of queries used.
%%%%%%%%%%%%%%%%%%%%%%%%%%%%%%%%%%%%%%%%%%%%%%%%%%%%%%%%%%%%%%%%%%%%%%%%%%%%%%%%%%%%%%%%%%%%%
\end{enumerate}
%%%%%%%%%%%%%%%%%%%%%%%%%%%%%%%%%%%%%%%%%%%%%%%%%%%%%%%%%%%%%%%%%%%%%%%%%%%%%%%%%%%%%%%%%%%%%

%%%%%%%%%%%%%%%%%%%%%%%%%%%%%%%%%%%%%%%%%%%%%%%%%%%%%%%%%%%%%%%%%%%%%%%%%%%%%%%%%%%%%%%%%%%%%
\noindent
\textbf{Additional notations and conventions} 
\medskip
%%%%%%%%%%%%%%%%%%%%%%%%%%%%%%%%%%%%%%%%%%%%%%%%%%%%%%%%%%%%%%%%%%%%%%%%%%%%%%%%%%%%%%%%%%%%%

For the case when $\deg_G(u)=\deg_G(v)$, 
we will use the reformulations of the linear program~\eqref{lpcg} as discussed in Section~\ref{sec-eqdeg}
and the associated notations contained therein.
We will use the following additional notations and conventions related to the graph $H=(L,R,w)$
mentioned in  Section~\ref{sec-eqdeg}:
%%%%%%%%%%%%%%%%%%%%%%%%%%%%%%%%%%%%%%%%%%%%%%%%%%%%%%%%%%%%%%%%%%%%%%%%%%%%%%%%%%%%%%%%
\begin{enumerate}[label=$\triangleright$]
%%%%%%%%%%%%%%%%%%%%%%%%%%%%%%%%%%%%%%%%%%%%%%%%%%%%%%%%%%%%%%%%%%%%%%%%%%%%%%%%%%%%%%%%%%%%%
\item
$|L|=|R|=n$, $L=\{u_1,\dots,u_n\}$ and $R=\{v_1,\dots,v_n\}$.
%%%%%%%%%%%%%%%%%%%%%%%%%%%%%%%%%%%%%%%%%%%%%%%%%%%%%%%%%%%%%%%%%%%%%%%%%%%%%%%%%%%%%%%%
\item
For $j\in\{1,2\}$ $\deg_{H,j}(x)$ denotes the number of edges of weight $j$ incident on 
node $x$ in a graph $H$. 
%%%%%%%%%%%%%%%%%%%%%%%%%%%%%%%%%%%%%%%%%%%%%%%%%%%%%%%%%%%%%%%%%%%%%%%%%%%%%%%%%%%%%%%%
\end{enumerate}
%%%%%%%%%%%%%%%%%%%%%%%%%%%%%%%%%%%%%%%%%%%%%%%%%%%%%%%%%%%%%%%%%%%%%%%%%%%%%%%%%%%%%%%%
\textbf{Note that $\pmb{H}$ has $\pmb{2n}$ nodes}.
Moreover, 
for any edge-weighted graph $F=(V,E,w)$ with $w:E\mapsto \R$, 
$\wdeg_F(u)=\sum_{v:\{u,v\}\in E} w(u,v)$
denotes the \emph{weighted degree} of node $u$ in $F$, and 
$\cM(F)$ denotes the total weight of a minimum-weight perfect matching of $F$. 

%%%%%%%%%%%%%%%%%%%%%%%%%%%%%%%%%%%%%%%%%%%%%%%%%%%%%%%%%%%%%%%%%%%%%%%%%%%%%%%%%%%%%%%%
\subsection{Prior related works} 
\label{sec-prior-query}
%%%%%%%%%%%%%%%%%%%%%%%%%%%%%%%%%%%%%%%%%%%%%%%%%%%%%%%%%%%%%%%%%%%%%%%%%%%%%%%%%%%%%%%%

Designing sublinear time and sketching algorithms for the general earth mover's distance 
on the shortest path metric for arbitrary graphs have been investigated in prior research papers such as~\cite{ba2009sublinear,MS13}.
In particular, for an edge-weighted tree with $W$ being the maximum weight of any edge and for any two unknown probability
distributions on the nodes, 
the authors in~\cite{ba2009sublinear}
show that an estimate of the \emd\ with $\eps$-additive error 
can be achieved by using 
$\tilde{O}(\frac{W^2 n^2}{\eps^2})$
samples from the two distributions and observes that their algorithm is optimal up to polylog factors.
To the best of our knowledge, local algorithms for computing 
the Ollivier-Ricci curvatures
of a graph have \emph{not} been investigated explicitly before.

%%%%%%%%%%%%%%%%%%%%%%%%%%%%%%%%%%%%%%%%%%%%%%%%%%%%%%%%%%%%%%%%%%%%%%%%%%%%%%%%%%%%%%%%
\subsection{Query models for edge-weighted complete bipartite graphs}
\label{sec-query-defn}
%%%%%%%%%%%%%%%%%%%%%%%%%%%%%%%%%%%%%%%%%%%%%%%%%%%%%%%%%%%%%%%%%%%%%%%%%%%%%%%%%%%%%%%%

Two standard query models that appear in the local algorithms literature
for unweighted graphs~(\EG, see~\cite{PARNAS2007}) are as follows: 
the \emph{node-pair} query model (the query is a pair of nodes and the answer is whether an edge between them exists or not), 
and 
the \emph{neighbor} query model (the query is a node and the answer is a random not-yet-explored adjacent node if it exists).
Since our given graph is an edge-weighted complete bipartite graph $H=(L,R,w)$ via the reformulation described in 
Section~\ref{sec-eqdeg},
natural extensions lead to the following query models for our case: 
%%%%%%%%%%%%%%%%%%%%%%%%%%%%%%%%%%%%%%%%%%%%%%%%%%%%%%%%%%%%%%%%%%%%%%%%%%%%%%%%%%%%%%%%
\begin{enumerate}[label=$\triangleright$]
%%%%%%%%%%%%%%%%%%%%%%%%%%%%%%%%%%%%%%%%%%%%%%%%%%%%%%%%%%%%%%%%%%%%%%%%%%%%%%%%%%%%%%%%
\item
\textbf{weighted node-pair query model}:
the query is a pair of nodes $x,y$ and the answer is 
the \emph{weight} $w(x,y)$.
%%%%%%%%%%%%%%%%%%%%%%%%%%%%%%%%%%%%%%%%%%%%%%%%%%%%%%%%%%%%%%%%%%%%%%%%%%%%%%%%%%%%%%%%
\item
\textbf{neighbor query model}:
the query is a node $x$ and the answer is a random ``not-yet-explored'' node adjacent to $x$ (if no such
node exists then the query returns a special symbol to indicate that).
{\em Note that such a query does not give any useful information $($beyond simply picking a node uniformly at random$)$ 
for the graph $H$ since it is a complete graph}.
We will only use this type of query for the entire given graph $G$ 
for computing $\mathfrak{C}_{G}(v)$ and $\mathfrak{C}_{\mathrm{avg}}(G)$
in Section~\ref{sec-ricci-nodes}.
%%%%%%%%%%%%%%%%%%%%%%%%%%%%%%%%%%%%%%%%%%%%%%%%%%%%%%%%%%%%%%%%%%%%%%%%%%%%%%%%%%%%%%%%
\item
\textbf{weighted neighbor query model}:
the query is $(x,y)$ where $x$ is a node and $y$ is a number,
and the answer is a random ``not-yet-explored'' node $z$ such that $w(x,z)=y$ (if no such node exists
then the query returns a special symbol to indicate that).
%%%%%%%%%%%%%%%%%%%%%%%%%%%%%%%%%%%%%%%%%%%%%%%%%%%%%%%%%%%%%%%%%%%%%%%%%%%%%%%%%%%%%%%%
\item
\textbf{weighted selective degree query model}:
the query is $(x,y)$ where $x$ is a node and $y$ is a number,
and the answer is the number of edges of weight $y$ that are incident on $x$. 
%%%%%%%%%%%%%%%%%%%%%%%%%%%%%%%%%%%%%%%%%%%%%%%%%%%%%%%%%%%%%%%%%%%%%%%%%%%%%%%%%%%%%%%%
\end{enumerate}
%%%%%%%%%%%%%%%%%%%%%%%%%%%%%%%%%%%%%%%%%%%%%%%%%%%%%%%%%%%%%%%%%%%%%%%%%%%%%%%%%%%%%%%%

%%%%%%%%%%%%%%%%%%%%%%%%%%%%%%%%%%%%%%%%%%%%%%%%%%%%%%%%%%%%%%%%%%%%%%%%%%%%%%%%%%%%%%%%%%%%%
\subsection{Summary of our query bounds on computing $\mathfrak{C}_{G}(e)$}
\label{sec-query-bounds-all}
%%%%%%%%%%%%%%%%%%%%%%%%%%%%%%%%%%%%%%%%%%%%%%%%%%%%%%%%%%%%%%%%%%%%%%%%%%%%%%%%%%%%%%%%%%%%%

For the convenience of the reader, we summarize our query bounds 
for computing $\mathfrak{C}_{G}(e)$
in Table~\ref{tabb1}.
Subsequent sub-sections in this section provide proofs of these bounds.

%%%%%%%%%%%%%%%%%%%%%%%%%%%%%%%%%%%%%%%%%%%%%%%%%%%%%%%%%%%%%%%%%%%%%%%%%%%%%%%%%%%%%%%%%%%%%
\begin{table}[htbp]
\scalebox{0.8}
{
\begin{tabular}{c c c c l c}
\hline
& \multicolumn{1}{c|}{query} &  additive & expected & \multicolumn{1}{c}{result(s)}  & additional 
\\
& \multicolumn{1}{c|}{types} & approx. & \# of queries & & remark(s)
\\
%%%%%%%%%%%%%%%%%%%%%%%%%%%%%%%%%%%%%%%%%%%%%%%%%%%%%%%%%%%%%%%%%%%%%%%%%%%%%%%%%%%%%%%%%%%%%
\cmidrule{2-6}
\multirow{5}{*}{\makecell{lower \\ bounds}}
	\ldelim\{{5}{*}
&
weighted node-pair  & 
			\makecell{
        exact 
				\\
				computation 
				}
	& $>\frac{(\deg_G(v)-1)^2}{6}$
	& 
     Theorem~\ref{thm-lower-all}(\emph{a})-(\emph{i})
	&
	\rdelim ]{5}{*}[{\Large\ding{172}}]
\\
%%%%%%%%%%%%%%%%%%%%%%%%%%%%%%%%%%%%%%%%%%%%%%%%%%%%%%%%%%%%%%%%%%%%%%%%%%%%%%%%%%%%%%%%%%%%%
\cmidrule{2-5}
& weighted neighbor 
     & 
			\makecell{
        exact 
				\\
				computation 
       }
		 & $>\frac{\deg_G(v)-1}{2}$
	& 
     Theorem~\ref{thm-lower-all}(\emph{a})-(\emph{ii})
	&
\\
%%%%%%%%%%%%%%%%%%%%%%%%%%%%%%%%%%%%%%%%%%%%%%%%%%%%%%%%%%%%%%%%%%%%%%%%%%%%%%%%%%%%%%%%%%%%%
\cmidrule{2-5}
& weighted node-pair  & $2-\eps_1$ &   $>\frac{\deg_G(v)-1}{6}$
   & 
     Theorem~\ref{thm-lower-all}(\emph{b})
	 &
%%%%%%%%%%%%%%%%%%%%%%%%%%%%%%%%%%%%%%%%%%%%%%%%%%%%%%%%%%%%%%%%%%%%%%%%%%%%%%%%%%%%%%%%%%%%%
%%%%%%%%%%%%%%%%%%%%%%%%%%%%%%%%%%%%%%%%%%%%%%%%%%%%%%%%%%%%%%%%%%%%%%%%%%%%%%%%%%%%%%%%%%%%%
%%%%%%%%%%%%%%%%%%%%%%%%%%%%%%%%%%%%%%%%%%%%%%%%%%%%%%%%%%%%%%%%%%%%%%%%%%%%%%%%%%%%%%%%%%%%%
\\
%%%%%%%%%%%%%%%%%%%%%%%%%%%%%%%%%%%%%%%%%%%%%%%%%%%%%%%%%%%%%%%%%%%%%%%%%%%%%%%%%%%%%%%%%%%%%
%%%%%%%%%%%%%%%%%%%%%%%%%%%%%%%%%%%%%%%%%%%%%%%%%%%%%%%%%%%%%%%%%%%%%%%%%%%%%%%%%%%%%%%%%%%%%
%%%%%%%%%%%%%%%%%%%%%%%%%%%%%%%%%%%%%%%%%%%%%%%%%%%%%%%%%%%%%%%%%%%%%%%%%%%%%%%%%%%%%%%%%%%%%
\midrule
\multirow{7}{*}{\makecell{upper \\ bounds}}
	\ldelim\{{7}{*}
& weighted neighbor  & $1+\eps_2$ &   $O(1)$ 
   & 
	 Theorem~\ref{thm-simple-up}(\emph{a})
	 & 
   {\Large\ding{173}}
\\
%%%%%%%%%%%%%%%%%%%%%%%%%%%%%%%%%%%%%%%%%%%%%%%%%%%%%%%%%%%%%%%%%%%%%%%%%%%%%%%%%%%%%%%%%%%%%
\cmidrule{2-6}
& weighted neighbor  & $\frac{1}{2}+\eps_2$ &   $O(1)$ 
   & 
	 Theorem~\ref{thm-simple-up}(\emph{b})
	 & 
   {\Large\ding{174}}
\\
%%%%%%%%%%%%%%%%%%%%%%%%%%%%%%%%%%%%%%%%%%%%%%%%%%%%%%%%%%%%%%%%%%%%%%%%%%%%%%%%%%%%%%%%%%%%%
\cmidrule{2-6}
& 
   weighted neighbor
	 & $1+\eps_2+\delta$ &   $O(1)$ 
   & 
	 Corollary~\ref{cor-trans-query}(\emph{i})
	 & 
   {\Large\ding{175}}
\\
%%%%%%%%%%%%%%%%%%%%%%%%%%%%%%%%%%%%%%%%%%%%%%%%%%%%%%%%%%%%%%%%%%%%%%%%%%%%%%%%%%%%%%%%%%%%%
\cmidrule{2-6}
& 
   weighted neighbor
	 & $\frac{1}{2}+\eps_2+\delta$ &   $O(1)$ 
   & 
	 Corollary~\ref{cor-trans-query}(\emph{ii})
	 & 
   {\Large\ding{176}}
%%%%%%%%%%%%%%%%%%%%%%%%%%%%%%%%%%%%%%%%%%%%%%%%%%%%%%%%%%%%%%%%%%%%%%%%%%%%%%%%%%%%%%%%%%%%%
%%%%%%%%%%%%%%%%%%%%%%%%%%%%%%%%%%%%%%%%%%%%%%%%%%%%%%%%%%%%%%%%%%%%%%%%%%%%%%%%%%%%%%%%%%%%%
\\
\bottomrule
\\
\multicolumn{6}{l}{
{\Large\ding{172}} 
\makecell[l]{
\emph{even if} $\deg_{H,1}(x)\leq 1$, $\deg_{H,2}(x)=0$ for every node $x$, and \emph{any} number of weighted 
\\
selective degree queries are allowed.
}
}
%%%%%%%%%%%%%%%%%%%%%%%%%%%%%%%%%%%%%%%%%%%%%%%%%%%%%%%%%%%%%%%%%%%%%%%%%%%%%%%%%%%%%%%%%%%%%
\\
[8pt]
\multicolumn{6}{l}{
   {\Large\ding{173}} if 
	 $\deg_{H,1}(x)=O(1)$ for every node $x$,
	 \emph{and} 
	 $\deg_G(u)=\deg_G(v)$. 
}
%%%%%%%%%%%%%%%%%%%%%%%%%%%%%%%%%%%%%%%%%%%%%%%%%%%%%%%%%%%%%%%%%%%%%%%%%%%%%%%%%%%%%%%%%%%%%
\\
[8pt]
\multicolumn{6}{l}{
   {\Large\ding{174}} if 
	 \emph{both} $\deg_{H,1}(x)=O(1)$ and $\deg_{H,2}(x)=O(1)$ for every node $x$, and 
	 $\deg_G(u)=\deg_G(v)$. 
}
%%%%%%%%%%%%%%%%%%%%%%%%%%%%%%%%%%%%%%%%%%%%%%%%%%%%%%%%%%%%%%%%%%%%%%%%%%%%%%%%%%%%%%%%%%%%%
\\
[8pt]
\multicolumn{6}{l}{
   {\Large\ding{175}} 
	 if $\deg_{H,1}(x)=O(1)$ for every node $x$, and %%if 
      $\deg_G(u)\geq (1 - (\delta/3)\,) \times \deg_G(v)$. 
}
%%%%%%%%%%%%%%%%%%%%%%%%%%%%%%%%%%%%%%%%%%%%%%%%%%%%%%%%%%%%%%%%%%%%%%%%%%%%%%%%%%%%%%%%%%%%%
\\
[8pt]
\multicolumn{6}{l}{
   {\Large\ding{176}} 
   if \emph{both} $\deg_{H,1}(x)=O(1)$ and $\deg_{H,2}(x)=O(1)$ for every node $x$, and
   $\deg_G(u)\geq (1 - (\delta/3)\,) \times \deg_G(v)$. 
}
\end{tabular}
}
\vspace*{-0.1in}
\caption{\label{tabb1}A summary of query bounds
for computing $\mathfrak{C}_{G}(e)$;
$\eps_1,\eps_2,\delta$
are arbitrary constants
satisfying 
$0<\eps_1\leq 2$ and 
$\eps_2,\delta>0$.}
\end{table}
%%%%%%%%%%%%%%%%%%%%%%%%%%%%%%%%%%%%%%%%%%%%%%%%%%%%%%%%%%%%%%%%%%%%%%%%%%%%%%%%%%%%%%%%%%%%%

%%%%%%%%%%%%%%%%%%%%%%%%%%%%%%%%%%%%%%%%%%%%%%%%%%%%%%%%%%%%%%%%%%%%%%%%%%%%%%%%%%%%%%%%%%%%%
\subsection{Lower bounds on number of queries for computing $\mathfrak{C}_{G}(e)$}
\label{suc-sub-lower}
%%%%%%%%%%%%%%%%%%%%%%%%%%%%%%%%%%%%%%%%%%%%%%%%%%%%%%%%%%%%%%%%%%%%%%%%%%%%%%%%%%%%%%%%%%%%%

{\bf Note that for query lower bounds it suffices to prove the lower bound for 
complete edge-weighted bipartite graph reformulations of the problem as discussed 
in 
Section~\ref{sec-eqdeg}}.
Any complete bipartite graph $H=(L,R,w)$ used in our lower bound proofs
will satisfy $L\cap R=\emptyset$, thereby implying $\pmb{n=\deg_G(v)-1}$.
Since we provide our inputs in the form of such graphs $H$, we first need to show 
that there exists a graph $G$ with the edge $\{u,v\}$ such that $G_{u,v}=H$ in the notations used
in Section~\ref{sec-eqdeg}. 

%%%%%%%%%%%%%%%%%%%%%%%%%%%%%%%%%%%%%%%%%%%%%%%%%%%%%%%%%%%%%%%%%%%%%%%%%%%%%%%%%%%%%%%%%%%%%
\begin{proposition}
Given any complete edge-weighted bipartite graph 
$H=(L,R,w)$ where 
$w:L\times R \mapsto \{1,2,3\}$
there exists a graph $G=(V,E)$ such that $G_{u,v}=H$.
\end{proposition}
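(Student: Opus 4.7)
The plan is to give an explicit constructive proof, realizing $H$ by attaching weight-specific gadgets to a central edge $\{u,v\}$. First I would take two new nodes $u,v$, add the edge $\{u,v\}$, then join $u$ to every node of $L$ and $v$ to every node of $R$, but \emph{not} join $u$ to $R$ or $v$ to $L$. This immediately gives $\nbr_G(u)\setminus(\nbr_G(v)\cup\{v\}) = L$ and $\nbr_G(v)\setminus(\nbr_G(u)\cup\{u\}) = R$, and $\deg_G(u)=\deg_G(v)=|L|+1$, matching the reformulation hypothesis of Section~\ref{sec-eqdeg}.

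Next, I would realize each target weight pairwise: for each $(x,y)\in L\times R$, if $w(x,y)=1$ add the edge $\{x,y\}$; if $w(x,y)=2$ introduce a fresh auxiliary node $z_{x,y}$ with exactly the two edges $\{x,z_{x,y}\}$ and $\{z_{x,y},y\}$; and if $w(x,y)=3$ add nothing, since the path $x - u - v - y$ already has length $3$. No other edges are added; in particular there are no edges within $L$, within $R$, or between any auxiliary node and any vertex other than its two designated endpoints, and no auxiliary node is adjacent to $u$ or $v$.

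The verification splits into two parts. The easy part is that for $w(x,y)\in\{1,2\}$ the explicit short path yields $\dist_G(x,y)\leq w(x,y)$, and the deliberate absence of the edge $\{x,y\}$ in the $w(x,y)=2$ case forbids $\dist_G(x,y)=1$. The technical core, and the step I expect to be the main obstacle, is ruling out unintended shortcuts in the $w(x,y)=3$ case: I would need to check that no vertex is a common neighbor of $x$ and $y$. The candidates are $u$ (not adjacent to any element of $R$), $v$ (not adjacent to any element of $L$), other vertices of $L$ or $R$ (no intra-$L$ or intra-$R$ edges were added), and each auxiliary node $z_{x',y'}$, which by construction has precisely the two neighbors $x',y'$; so $z_{x',y'}$ can witness a length-$2$ path between $x$ and $y$ only if $(x',y')=(x,y)$, which contradicts $w(x,y)\neq 2$. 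Once this case analysis is completed, $G_{u,v}=H$ follows directly from the definitions.
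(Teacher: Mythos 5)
Your construction is exactly the one in the paper's proof: anchor an edge $\{u,v\}$, join $u$ to all of $L$ and $v$ to all of $R$, then realize weight $1$ by a direct edge, weight $2$ by a fresh degree-two subdivision vertex, and weight $3$ by leaving the pair unjoined so that $x\text{--}u\text{--}v\text{--}y$ is shortest. Your careful case analysis ruling out unintended length-$1$ or length-$2$ shortcuts (via $u$, $v$, other $L/R$ vertices, or auxiliary vertices) is a correct and welcome elaboration of a verification the paper's proof leaves implicit.
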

%%%%%%%%%%%%%%%%%%%%%%%%%%%%%%%%%%%%%%%%%%%%%%%%%%%%%%%%%%%%%%%%%%%%%%%%%%%%%%%%%%%%%%%%%%%%%

%%%%%%%%%%%%%%%%%%%%%%%%%%%%%%%%%%%%%%%%%%%%%%%%%%%%%%%%%%%%%%%%%%%%%%%%%%%%%%%%%%%%%%%%%%%%%
\begin{proof}
Start with the edge $\{u,v\}$ in $G$, connect the nodes 
$u_1,\dots,u_n$ to $u$, and 
connect the nodes $v_1,\dots,v_n$ to $v$. 
For every pair of nodes $(u_i,v_j)\in L\times R$, if 
$w(u_j,v_j)=1$ then add the edge $\{u_i,v_j\}$ to $G$. Otherwise if 
$w(u_j,v_j)=2$ then add a new node $x_{i,j}$ to $G$ and add the two edges 
$\{u_i,x_{i,j}\}$ and  
$\{x_{i,j},v_j\}$ 
to $G$.
\end{proof}
%%%%%%%%%%%%%%%%%%%%%%%%%%%%%%%%%%%%%%%%%%%%%%%%%%%%%%%%%%%%%%%%%%%%%%%%%%%%%%%%%%%%%%%%%%%%%

A common thread in our lower bound proofs is the following easy but crucial observation.

%%%%%%%%%%%%%%%%%%%%%%%%%%%%%%%%%%%%%%%%%%%%%%%%%%%%%%%%%%%%%%%%%%%%%%%%%%%%%%%%%%%%%%%%
\begin{observation}\label{obs2}
%%%%%%%%%%%%%%%%%%%%%%%%%%%%%%%%%%%%%%%%%%%%%%%%%%%%%%%%%%%%%%%%%%%%%%%%%%%%%%%%%%%%%%%%%%%%%
Suppose that we have two separate classes of 
$($complete edge-weighted bipartite, as described in Section~{\em\ref{sec-eqdeg}}$)$ graphs 
$\G_1$ and $\G_2$,
two numbers $1\leq\alpha<\beta\leq 3$,
and 
an algorithm $\cA$ 
such that the following holds:
%%%%%%%%%%%%%%%%%%%%%%%%%%%%%%%%%%%%%%%%%%%%%%%%%%%%%%%%%%%%%%%%%%%%%%%%%%%%%%%%%%%%%%%%%%%%%
\begin{enumerate}[label=$\triangleright$]
\item
Every graph $H\in\G_1$ satisfies 
$n\leq\cM(H)\leq\alpha\,n$.
%%%%%%%%%%%%%%%%%%%%%%%%%%%%%%%%%%%%%%%%%%%%%%%%%%%%%%%%%%%%%%%%%%%%%%%%%%%%%%%%%%%%%%%%%%%%%
\item
Every graph $H\in\G_2$ satisfies 
$\beta\,n\leq\cM(H)\leq 3n$.
%%%%%%%%%%%%%%%%%%%%%%%%%%%%%%%%%%%%%%%%%%%%%%%%%%%%%%%%%%%%%%%%%%%%%%%%%%%%%%%%%%%%%%%%%%%%%
\item
Given a graph from $\G_1\cup G_2$, 
algorithm $\cA$ cannot determine in which class the given graph belongs.
\end{enumerate}
%%%%%%%%%%%%%%%%%%%%%%%%%%%%%%%%%%%%%%%%%%%%%%%%%%%%%%%%%%%%%%%%%%%%%%%%%%%%%%%%%%%%%%%%%%%%%
Then, using Proposition~{\em\ref{obs1}}, it follows that 
algorithm $\cA$ cannot provide an 
additive $(\beta-\alpha-\eps)$-approximation
of $\mathfrak{C}_{G}(e)$
for any constant $\eps>0$.
%%%%%%%%%%%%%%%%%%%%%%%%%%%%%%%%%%%%%%%%%%%%%%%%%%%%%%%%%%%%%%%%%%%%%%%%%%%%%%%%%%%%%%%%%%%%%
\end{observation}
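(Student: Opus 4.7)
The plan is to prove Observation~\ref{obs2} by contradiction, using Proposition~\ref{obs1} to translate a hypothetical additive approximation for $\mathfrak{C}_{G}(e)$ into an additive approximation for $\cM(H)$ that is sharp enough to separate $\G_1$ from $\G_2$.

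Fixing an arbitrary constant $\eps > 0$, I would suppose for contradiction that on every input $H \in \G_1 \cup \G_2$ the algorithm $\cA$ produces an additive $(\beta - \alpha - \eps)$-approximation of $\mathfrak{C}_{G}(e)$. Since $n = |R|$ in the reformulation of Section~\ref{sec-eqdeg}, the ``vice versa'' direction of Proposition~\ref{obs1} converts $\cA$ into an algorithm that outputs a number $\gamma$ satisfying
\[
\cM(H) \;\le\; \gamma \;\le\; \cM(H) + (\beta - \alpha - \eps)\, n.
\]
Crucially, this conversion is a purely arithmetic post-processing of $\cA$'s output and so does not require $\cA$ to identify the class containing $H$; in particular, the converted algorithm uses exactly the same queries as $\cA$.

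Next, I would case-split on the class of $H$. If $H \in \G_1$ then $\cM(H) \le \alpha n$, which yields $\gamma \le \alpha n + (\beta - \alpha - \eps)\, n = (\beta - \eps)\, n < \beta n$. If instead $H \in \G_2$ then $\gamma \ge \cM(H) \ge \beta n$. Hence a single comparison of $\gamma$ against the threshold $\beta n$ decides the class of $H$, which directly contradicts the third hypothesis that $\cA$ is unable to distinguish the two classes. The only thing one must be careful about is invoking Proposition~\ref{obs1} in the correct direction and with the correct linear scaling by $n$; beyond that, the proof is a short packaging of the indistinguishability hypothesis and I do not foresee any substantive obstacle.
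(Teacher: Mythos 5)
Your proof is correct and is exactly the argument the paper intends: the Observation is stated with only the pointer ``using Proposition~\ref{obs1}, it follows that\dots'', and your contrapositive packaging — convert a hypothetical additive $(\beta-\alpha-\eps)$-approximation of $\mathfrak{C}_G(e)$ via the ``vice versa'' direction of Proposition~\ref{obs1} into a one-sided $(\beta-\alpha-\eps)n$-additive estimate of $\cM(H)$, then threshold at $\beta n$ to decide the class — is precisely the implicit reasoning. Your remark that the conversion is pure post-processing and changes no queries is a good clarification; the only minor point worth a footnote is that the exact conversion factor is $1+\deg_G(v)=n+2$ rather than $n=|R|$, but this slack disappears for $n$ large (recall the paper's standing assumption $\deg_G(v)=\omega(1)$), so the separation $\gamma<\beta n$ versus $\gamma\ge\beta n$ still holds once $n>2(\beta-\alpha-\eps)/\eps$.
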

%%%%%%%%%%%%%%%%%%%%%%%%%%%%%%%%%%%%%%%%%%%%%%%%%%%%%%%%%%%%%%%%%%%%%%%%%%%%%%%%%%%%%%%%%%%%%

Our proofs 
in Theorem~\ref{thm-lower-all}
for lower bounds on the number of queries 
will use the well-known
\emph{Yao's minimax principle} for randomized algorithms~\cite{Yao77}. 
Namely, we will construct two separate classes 
$\G_1$
and 
$\G_2$
of graphs and show that any \emph{deterministic} algorithm that picks graphs uniformly at random from these two classes will need 
\emph{at least} a certain number of queries, say $q$, to be able to decide from which class the graph was selected with at least
a certain probability, say $p$. Then, 
the expected number of queries performed by any deterministic algorithms on inputs drawn from the aforementioned distribution 
is at least $pq$, and thus
by the minimax principle the expected number of queries for any randomized algorithm 
over all possible inputs
is also at least $pq$.
Note that since our input instances are complete bipartite graphs, two graphs are differentiated 
based on the assignments of weights to all possible edges (see~\cite{PARNAS2007} for further elaborations on this point).

%%%%%%%%%%%%%%%%%%%%%%%%%%%%%%%%%%%%%%%%%%%%%%%%%%%%%%%%%%%%%%%%%%%%%%%%%%%%%%%%%%%%%%%%%%%%%
\begin{theorem}\label{thm-lower-all}
Consider any local algorithm that is allowed to make an unlimited number of 
weighted selective degree queries.
Let $Q$ be the expected number of queries, 
\textbf{excluding} all weighted selective degree queries,
performed by the algorithm
for computing 
$\mathfrak{C}_{G}(e)$.
Then the following claims hold.
%%%%%%%%%%%%%%%%%%%%%%%%%%%%%%%%%%%%%%%%%%%%%%%%%%%%%%%%%%%%%%%%%%%%%%%%%%%%%%%%%%%%%%%%%%%%%
\begin{description}
%%%%%%%%%%%%%%%%%%%%%%%%%%%%%%%%%%%%%%%%%%%%%%%%%%%%%%%%%%%%%%%%%%%%%%%%%%%%%%%%%%%%%%%%%%%%%
\item[(\emph{a})]
Suppose that we want to compute
$\mathfrak{C}_{G}(e)$
exactly. 
Then the following bounds hold.
%%%%%%%%%%%%%%%%%%%%%%%%%%%%%%%%%%%%%%%%%%%%%%%%%%%%%%%%%%%%%%%%%%%%%%%%%%%%%%%%%%%%%%%%%%%%%
\begin{description}
%%%%%%%%%%%%%%%%%%%%%%%%%%%%%%%%%%%%%%%%%%%%%%%%%%%%%%%%%%%%%%%%%%%%%%%%%%%%%%%%%%%%%%%%%%%%%
\item[(\emph{i})]
$Q> \nicefrac{n^2}{6}$ if the queries used are
weighted node-pair queries.
%%%%%%%%%%%%%%%%%%%%%%%%%%%%%%%%%%%%%%%%%%%%%%%%%%%%%%%%%%%%%%%%%%%%%%%%%%%%%%%%%%%%%%%%%%%%%
\item[(\emph{ii})]
$Q> \nicefrac{n}{6}$ if the queries used are
weighted neighbor 
queries.
\end{description}
%%%%%%%%%%%%%%%%%%%%%%%%%%%%%%%%%%%%%%%%%%%%%%%%%%%%%%%%%%%%%%%%%%%%%%%%%%%%%%%%%%%%%%%%%%%%%
%%%%%%%%%%%%%%%%%%%%%%%%%%%%%%%%%%%%%%%%%%%%%%%%%%%%%%%%%%%%%%%%%%%%%%%%%%%%%%%%%%%%%%%%%%%%%
\item[(\emph{b})]
For every $0<\eps<2$,
any randomized algorithm
computing 
an additive 
$(2-\eps)$-approximation
of
$\mathfrak{C}_{G}(e)$
requires 
$Q>\nicefrac{n}{6}$
weighted node-pair queries.
%%%%%%%%%%%%%%%%%%%%%%%%%%%%%%%%%%%%%%%%%%%%%%%%%%%%%%%%%%%%%%%%%%%%%%%%%%%%%%%%%%%%%%%%%%%%%
\end{description}
%%%%%%%%%%%%%%%%%%%%%%%%%%%%%%%%%%%%%%%%%%%%%%%%%%%%%%%%%%%%%%%%%%%%%%%%%%%%%%%%%%%%%%%%%%%%%
\end{theorem}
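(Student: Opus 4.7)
All three claims are to be established via Yao's minimax principle, as indicated in the paragraph preceding the theorem. For each claim we design two families $\G_1,\G_2$ of complete edge-weighted bipartite graphs $H=(L,R,w)$ (with $|L|=|R|=n$ and $w:L\times R\to\{1,2,3\}$) satisfying two properties simultaneously: (i) every graph in $\G_1\cup\G_2$ shares an identical per-node weighted degree profile, so that any number of free weighted selective degree queries returns the same answers on every input and therefore cannot separate the two families; and (ii) the values $\cM(H)$ for $H\in\G_1$ versus $H\in\G_2$ differ by the amount required by Observation~\ref{obs2} together with Proposition~\ref{obs1} to rule out the target approximation of $\mathfrak{C}_G(e)=1-\cM(H)/(n+2)$. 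Yao then reduces each lower bound to arguing that few queries of the permitted type, on a graph drawn uniformly from $\G_1\cup\G_2$, cannot reveal the hidden ``witness'' structure that differentiates the two families with good probability.

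For Part (a) (exact computation), we only need $\cM(H)$ to differ across $\G_1$ and $\G_2$ by some fixed positive constant. The construction uses a common template plus a local ``witness gadget'' placed at one of $\Theta(n^2)$ possible positions $(i,j)\in[n]\times[n]$, arranged so that the ``aligned'' form of the gadget (used for $\G_1$) and its ``crossed'' form (used for $\G_2$) share the same per-node weight-$1$ and weight-$2$ degrees yet produce a $\cM$-gap of at least one unit (analogous to the small example where a pair of weight-$2$ edges can sit ``parallel'' to two weight-$1$ edges or ``cross'' them, shifting the optimum by $1$ while preserving all degrees). For (a)(i), a weighted node-pair query on $(u_p,v_q)\in L\times R$ intersects the witness only when $(p,q)$ is among the $O(1)$ gadget pairs at the hidden $(i,j)$, which under the uniform input distribution happens with probability $\Theta(1/n^2)$; hence after $t$ queries the probability of discovering the witness is $\leq O(t/n^2)$, and the standard Yao argument converts this into the lower bound $Q>n^2/6$. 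For (a)(ii), a weighted neighbor query $(x,y)$ can reveal the witness only when $x$ is one of the $O(1)$ gadget endpoints, which under the uniform input distribution happens with probability $\Theta(1/n)$ per query; the analogous Yao calculation gives $Q>n/6$.

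For Part (b), we need $|\cM(H_1)-\cM(H_2)|=\Theta(n)$, since by Proposition~\ref{obs1} an additive $(2-\eps_1)$-approximation of $\mathfrak{C}_G(e)$ tolerates a gap of essentially $(2-\eps_1)(n+2)$ in $\cM$. We scale up the Part (a) construction by dispersing $\Theta(n)$ non-overlapping copies of the witness gadget across the graph, parametrized by a permutation $\sigma\in S_n$: in $\G_2$ all gadgets take their ``low-$\cM$'' form, forcing $\cM(H^\sigma)$ close to $n$; in $\G_1$ they all take their ``high-$\cM$'' form, so $\cM$ is close to $3n$; matching weight-$2$ filler edges throughout the graph keep the per-node weighted degree profile identical across all graphs in $\G_1\cup\G_2$. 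A weighted node-pair query hits one of the $\Theta(n)$ witness edges with probability $\Theta(1/n)$ over the uniform distribution on $\sigma$, so that $t<n/6$ queries fail to reveal any witness with probability $>5/6$; the Yao argument then yields $Q>n/6$.

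The chief technical hurdle will be simultaneously achieving (i) identical per-node weighted degree profiles across $\G_1\cup\G_2$ (without which the free weighted selective degree queries would trivially distinguish the families) and (ii) a $\Theta(1)$ or $\Theta(n)$ gap in $\cM$. This rules out gadgets built only from weight-$1$ and weight-$3$ edges, since under a fixed per-node profile in that restricted setting $\cM$ is determined by the count of weight-$1$ edges alone; the construction is therefore forced to use weight-$2$ edges whose ``shuffling'' preserves per-node profiles but changes whether such edges can substitute for weight-$1$ or weight-$3$ edges in an optimum assignment. Verifying the gadget's arithmetic, and showing that $\Theta(n)$ independent copies can be placed non-interferingly while preserving the claimed per-query hit-probabilities $\Theta(1/n^2)$ and $\Theta(1/n)$, is the most delicate piece of the argument.
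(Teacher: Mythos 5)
Your high-level plan (Yao's minimax principle applied to two families $\mathcal{G}_1,\mathcal{G}_2$, keyed to Observation~\ref{obs2}) agrees with the paper's, but the constructions diverge sharply and yours has unresolved gaps. The paper's families are much simpler and use no weight-$2$ edges whatsoever: for part~(a), $\mathcal{G}_1$ contains, for each pair $(u_i,v_j)$, the graph with $w(u_i,v_j)=1$ and every other weight $3$, and $\mathcal{G}_2$ is the single all-weight-$3$ graph (so $\cM = 3n-2$ versus $3n$); for part~(b), $\mathcal{G}_1$ contains, for each of the $n!$ perfect matchings, the graph whose matched edges have weight $1$ and all remaining edges weight $3$, and $\mathcal{G}_2$ is again the all-weight-$3$ graph ($\cM=n$ versus $3n$). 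The Yao calculation then directly bounds the probability that $t$ queries ever hit a weight-$1$ edge; there is no ``witness gadget'' machinery at all.

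The motivation for your more elaborate route --- forcing identical per-node degree profiles across $\mathcal{G}_1\cup\mathcal{G}_2$ so that the unlimited free weighted selective-degree queries are provably useless --- is a legitimate concern (the paper's opening assertion that all its instances satisfy $\deg_{H,1}(x)=1$ for every node does not in fact hold for the families it builds). But your execution has three problems. First, you never exhibit a concrete gadget and explicitly concede that verifying its arithmetic is the ``most delicate piece,'' so the argument is incomplete. Second, your claim that weight-$2$ edges are forced because a fixed per-node profile in the $\{1,3\}$-weight world determines $\cM$ is false: a fixed bipartite weight-$1$ degree sequence does \emph{not} determine the maximum weight-$1$ matching (on a $4\times4$ graph with weight-$1$ degrees $(3,1,1,1)$ on each side, the realization $u_1\!\to\!\{v_1,v_2,v_3\},\,u_2,u_3\!\to\!v_1,\,u_4\!\to\!v_4$ has a size-$3$ matching, while $u_1\!\to\!\{v_2,v_3,v_4\},\,u_2,u_3,u_4\!\to\!v_1$ has only size $2$), so $\cM$ can vary even with weights in $\{1,3\}$ and a frozen profile. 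Third, and fatally for the $\Omega(n^2)$ bound of part~(a)(i), hiding a single $O(1)$-cell gadget at one of $\Theta(n^2)$ positions gives the gadget's rows and columns a degree profile distinguishable from the all-weight-$3$ background, so the free selective-degree queries reveal the gadget's location, after which $O(1)$ node-pair queries suffice --- collapsing your intended lower bound to $O(1)$. Similarly, for part~(b) a per-gadget gap of $\Theta(1)$ over $O(1)$ nodes produces an additive gap in $\mathfrak{C}_G(e)$ bounded well below the required $2-\eps$.
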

%%%%%%%%%%%%%%%%%%%%%%%%%%%%%%%%%%%%%%%%%%%%%%%%%%%%%%%%%%%%%%%%%%%%%%%%%%%%%%%%%%%%%%%%%%%%%

%%%%%%%%%%%%%%%%%%%%%%%%%%%%%%%%%%%%%%%%%%%%%%%%%%%%%%%%%%%%%%%%%%%%%%%%%%%%%%%%%%%%%%%%%%%%%
\begin{proof}
All the bipartite graphs $H=(L,R,w)$ in our proofs will satisfy that 
$\deg_{H,1}(x)=1$ and $\deg_{H,2}(x)=0$ for every node $x\in L\cup R$, and therefore 
any number of weighted selective degree queries will provide \emph{no} information about the value of 
$\mathfrak{C}_{G}(e)$.

\medskip
\noindent
{\bf Proof of (\emph{a})}

\smallskip
Corresponding to every node pair $(u_i,v_j)$ with $u_i\in L$ and $v_j\in R$, 
the class 
$\G_1$
contains a graph in which $w(u_i,v_j)=1$ and all other edges have weight $3$.
The class 
$\G_2$
contains just one graph in which all edge weights are set to $3$. 
Note that the minimum weight of a perfect matching for each graph in 
$\G_1$
is $3n-2$, whereas
the minimum weight of a perfect matching for the graph in 
$\G_2$
is $3n$.

\medskip
\noindent
\textbf{Proof of (\emph{a})-(\emph{i}) }

\smallskip

Suppose that our algorithm has already made $t$ queries (edges) $e_1,\dots,e_t$
for $t<n^2-1$ with 
$w(e_1)=\dots=w(e_t)=3$
and 
let $e_{t+1}$ be the next query. 
Consider a graph
$G_1\in\G_1$
that is consistent with the first $t$ queries
with $w(e_{t+1})=1$.
Note that there is exactly one such graph in $\G_1$.
%%%%%
Since 
there are at least $n^2-(t+1)$ node pairs (edges)
that have not been queried after the $(t+1)\tx$ query, 
we have at least $n^2-(t+1)$
distinct graphs in $\G_1$ with $w(e_{t+1})=3$
that is consistent with the first $t$ queries
(set the weight of exactly one of the $n^2-(t+1)$ edges to $1$ and the weight of the remaining edges to $3$).
Since 
graphs are selected uniformly at random from 
$\G_1$ it follows that 
$\Pr [ w(e_{t+1})=1 \,|\, w(e_1)=\dots=w(e_t)=3 ]
\leq
\frac{1}{n^2-(t+1)}
$.
Summing over all $t$, we get 
%%%%%%%%%%%%%%%%%%%%%%%%%%%%%%%%%%%%%%%%%%%%%%%%%%%%%%%%%%%%%%%%%%%%%%%%%%%%%%%%%%%%%%%%%%%%%
\begin{multline*}
\Pr [\text{number of queries needed is at least $t+1$}]  
\\
=
1 - \Pr [\text{one of the $t$ queries contain an edge of weight $1$}]
\\
\geq 
1-{\textstyle \frac{t}{n^2-(t+1)} }
\end{multline*}
%%%%%%%%%%%%%%%%%%%%%%%%%%%%%%%%%%%%%%%%%%%%%%%%%%%%%%%%%%%%%%%%%%%%%%%%%%%%%%%%%%%%%%%%%%%%%
Putting $t=\nicefrac{n^2}{3}$, the probability that ``the number of queries is at least $1+\nicefrac{n^2}{3}$'' 
is at least $\nicefrac{1}{2}$.

\medskip
\noindent
\textbf{Proof of (\emph{a})-(\emph{ii}) }
\smallskip

Suppose that our algorithm has already made $t$ queries (nodes, weights) $(x_1,y_1),\dots,(x_t,y_t)\in ( L\cup R ) \times \{1,2,3\}$
for $t<n-1$. 
Let 
$e_1=\{x_1,x_1'\},\dots,e_t=\{x_t,x_t'\}$
be the answers (edges) to these queries 
with 
$w(e_1)=\dots=w(e_t)=3$
and 
let 
$(x_{t+1},y_{t+1})$
be the next query that reveals the weight of an edge $e_{t+1}=\{x_{t+1},x_{t+1}'\}$.
Consider a graph
$G_1\in\G_1$
that is consistent with the first $t$ queries
with $w(e_{t+1})=y_{t+1}=1$.
Note that there is exactly one such graph in $\G_1$.
%%%%%
Since there are at least $n-(t+1)$ nodes
in each of $L$ and $R$ 
that have not been queried after the $(t+1)\tx$ query, 
we have at least $(n-(t+1))^2$
distinct graphs in $\G_1$ with $w(e_{t+1})=3$
that is consistent with the first $t$ queries
(set the weight of exactly one edge among these nodes to $1$ and 
the weights of all remaining edges to $3$). 
%%%%%%%%%%%%%%%%%%%%
Since 
graphs are selected uniformly at random from 
$\G_1$ it follows that 
$\Pr [ w(e_{t+1})=1 \,|\, w(e_1)=\dots=w(e_t)=3 ]
\leq
\frac{1}{(n-(t+1))^2}
$.
Summing over all $t$, we get 
%%%%%%%%%%%%%%%%%%%%%%%%%%%%%%%%%%%%%%%%%%%%%%%%%%%%%%%%%%%%%%%%%%%%%%%%%%%%%%%%%%%%%%%%%%%%%
\begin{multline*}
\Pr [\text{number of queries needed is at least $t+1$}]  
\\
=
1 - \Pr [\text{one of the $t$ queries contain an edge of weight $1$}]
\\
\geq 
1-{\textstyle \frac{t}{(n-(t+1))^2} }
\end{multline*}
%%%%%%%%%%%%%%%%%%%%%%%%%%%%%%%%%%%%%%%%%%%%%%%%%%%%%%%%%%%%%%%%%%%%%%%%%%%%%%%%%%%%%%%%%%%%%
Putting $t=\nicefrac{n}{2}$, the probability that ``the number of queries is at least $1+\nicefrac{n}{2}$'' 
is at least $1-\frac{2}{n}$.

\bigskip
\noindent
\textbf{Proof of (\emph{b})}
\smallskip

Corresponding to each of the possible $n!$ perfect matchings, 
the class 
$\G_1$
contains a graph in which the edges in the matching have weight $1$ and all other non-matching edges have weight $3$.
The class 
$\G_2$
contains just one graph in which all edge weights are set to $3$. 
Note that the minimum weight of a perfect matching for each graph in 
$\G_1$
is $n$, whereas
the minimum weight of a perfect matching for the graph in 
$\G_2$
is $3n$. 
Suppose that our algorithm has already made $t$ (edge) queries $e_1,\dots,e_t)$
for $t<n-1$ with 
$w(e_1)=\dots=w(e_t)=3$
and 
let $e_{t+1}$ be the next (edge) query.

We first show that as long as $t<n$ there exists at least one graph in 
$\G_1$
that is consistent with the weight assignments of the first $t$ queries. 
Consider a random perfect matching 
$M= \{ \, \{u_1,v_{\pi(1)} \} ,\dots,  \{u_n,v_{\pi(n)} \} \, \}$
given by a random permutation $\pi$ of $1,\dots,n$.
The probability of the event $\cE_j$ that the $j\tx$ query $e_j$ is in $M$
is $\frac{(n-1)!}{n!}=\nicefrac{1}{n}$. It follows 
that
$
\Pr [ \wedge_{j=1}^t \overline{\cE_j} ] 
=
1 - 
\Pr [ \vee_{j=1}^t \cE_j ] 
\geq 
1 - 
\sum_{j=1}^t \Pr [ \cE_j ] 
\geq 1 -\frac{t}{n}
>0
$
and therefore 
$\G_1$
contains at least one such graph.

Assume 
without loss of generality 
that
$e_{t+1}=(u_n,v_n)$
and let $M$ be 
a perfect matching of the nodes in $L$ and $R$, 
say 
$M= \{ \, \{u_1,v_1 \} ,\dots,  \{u_n,v_n \} \, \}$, 
that is consistent with the first $t$ queries,
and includes $e_{t+1}$ as a matched edges (note that 
$w(u_1,v_1)=\dots= w(u_n,v_n)=1$).
If such a matching $M$ does not exist then 
$\Pr [ w(u_{n},v_{n})=1 \,|\, w(e_1)=\dots=w(e_t)=3 ] = 0$.
%%%
Otherwise, 
note that there are at least $n-t$ nodes in each of $L$ and $R$, say $u_1,\dots,u_{n-t}\in L$ and $v_1,\dots,v_{n-t}\in R$,
such that the edges $(u_{n},v_j)$ and $(u_j,v_{n})$ for $j=1,\dots,n-t$ have not been queried yet.
For every such perfect matching $M$, we can then construct a set 
$S_M$ of 
at least $n-t$
\emph{distinct} perfect matchings with 
$w(e_{t+1})=3$
that is consistent with the first $t$ queries
as follows: in the 
$\ell\tx$
perfect matching  
set
$w(u_\ell,v_\ell) = w(u_{n},v_{n}) =3$
and 
set 
$w(u_{n},v_\ell) = w(u_\ell,v_{n}) =1$.
It is also easy to see that any two matchings from two different sets $S_M$ and $S_{M'}$ 
differ in at least one edge.
Since 
graphs are selected uniformly at random from 
$\G_1$ it follows that 
$\Pr [ w(u_{n},v_{n})=1 \,|\, w(e_1)=\dots=w(e_t)=3 ]
\leq
\frac{1}{n-t}
$.
Summing over all $t$, we get 
%%%%%%%%%%%%%%%%%%%%%%%%%%%%%%%%%%%%%%%%%%%%%%%%%%%%%%%%%%%%%%%%%%%%%%%%%%%%%%%%%%%%%%%%%%%%%
\begin{multline*}
\Pr [\text{number of queries needed is at least $t+1$}]  
\\
=
1- \Pr [\text{any of the $t$ queries contain an edge of weight $1$}]
\\
> 
1- {\textstyle \frac{t}{n-(t-1)} }
\end{multline*}
%%%%%%%%%%%%%%%%%%%%%%%%%%%%%%%%%%%%%%%%%%%%%%%%%%%%%%%%%%%%%%%%%%%%%%%%%%%%%%%%%%%%%%%%%%%%%
Putting $t=\nicefrac{n}{3}$, the probability that ``the number of queries is at least $1+\nicefrac{n}{3}$'' 
is at least $\nicefrac{1}{2}$.
\end{proof}
%%%%%%%%%%%%%%%%%%%%%%%%%%%%%%%%%%%%%%%%%%%%%%%%%%%%%%%%%%%%%%%%%%%%%%%%%%%%%%%%%%%%%%%%%%%%%

%%%%%%%%%%%%%%%%%%%%%%%%%%%%%%%%%%%%%%%%%%%%%%%%%%%%%%%%%%%%%%%%%%%%%%%%%%%%%%%%%%%%%%%%%%%%%
\subsection{Upper bounds on number of queries for computing $\mathfrak{C}_{G}(e)$ when $\deg(u)=\deg(v)$}
\label{suc-sub-upper}
%%%%%%%%%%%%%%%%%%%%%%%%%%%%%%%%%%%%%%%%%%%%%%%%%%%%%%%%%%%%%%%%%%%%%%%%%%%%%%%%%%%%%%%%%%%%%

The proofs 
in Theorem~\ref{thm-lower-all}
do \emph{not} use any edge of weight $2$ and have \emph{at most} one edge of weight $1$ 
incident on any node with the additional restriction that these edges of weight $1$ provide 
a unique matching for the nodes that are end-points of these edges. In this section we show that if 
weighted neighbor queries are allowed then $O(1)$ expected number of queries will suffice for a non-trivial
additive approximation for a class of weighted complete bipartite graphs that properly includes 
the instances generated by the proofs 
in Theorem~\ref{thm-lower-all} (note that 
for the instances (graphs) generated by the proofs 
in Theorem~\ref{thm-lower-all}
we have $\deg_{H,1}(x)\leq 1$ and $\deg_{H,2}(x)=0$ for \emph{every} node $x$).

%%%%%%%%%%%%%%%%%%%%%%%%%%%%%%%%%%%%%%%%%%%%%%%%%%%%%%%%%%%%%%%%%%%%%%%%%%%%%%%%%%%%%%%%%%%%%
\begin{theorem}\label{thm-simple-up}
Assume that 
$\deg_G(u)=\deg_G(v)$, and 
let $d,\eps>0$ be two fixed constants. 
Then, using $O(1)$ 
expected number of 
weighted neighbor queries\footnote{The constant in $O(1)$ depends on $d$ and $\eps$.} 
we can obtain the following type of approximations for 
$\mathfrak{C}_{G}(e)$:
%%%%%%%%%%%%%%%%%%%%%%%%%%%%%%%%%%%%%%%%%%%%%%%%%%%%%%%%%%%%%%%%%%%%%%%%%%%%%%%%%%%%%%%%%%%%%
\begin{description}
\item[(\emph{a})]
an additive $\left(1+\eps\right)$-approximation when 
$\max_{x} \{ \deg_{H,1}(x) \} \leq d$, 
and 
%%%%%%%%%%%%%%%%%%%%%%%%%%%%%%%%%%%%%%%%%%%%%%%%%%%%%%%%%%%%%%%%%%%%%%%%%%%%%%%%%%%%%%%%%%%%%
\item[(\emph{b})]
an additive $\left(\frac{1}{2}+\eps\right)$-approximation when 
$\max_{x} \{ \deg_{H,1}(x) \} \leq d$ and 
\\
$\max_{x} \{ \deg_{H,2}(x) \} \leq d$.
\end{description}
%%%%%%%%%%%%%%%%%%%%%%%%%%%%%%%%%%%%%%%%%%%%%%%%%%%%%%%%%%%%%%%%%%%%%%%%%%%%%%%%%%%%%%%%%%%%%
\end{theorem}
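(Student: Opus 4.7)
The plan is to reduce both claims to estimating the size of a maximum matching in a bounded-degree bipartite subgraph of $H = (L, R, w)$, and then to invoke a standard local matching estimator such as the algorithm of~\cite{YYI2012}, which returns an estimate of the maximum matching size with additive error $\epsilon' n$ for any fixed $\epsilon' > 0$ in $O(1)$ expected neighbor queries on a bounded-degree graph. By the reformulation in Section~\ref{sec-eqdeg} and Proposition~\ref{obs1}, it suffices to estimate $\cM(H)$ with additive error $(1+\eps) n$ for part (a) and $(\tfrac{1}{2}+\eps) n$ for part (b). Let $H_j$ denote the subgraph of $H$ whose edges have weight exactly $j \in \{1, 2\}$, and write $H_{12} := H_1 \cup H_2$. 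The hypotheses force the maximum degree of $H_1$ to be at most $d$ (for both parts) and of $H_{12}$ to be at most $2d$ (for part (b)). A weighted neighbor query $(x, y)$ returns an unexplored $H_y$-neighbor of $x$, so at most $d$ (resp.\ $2d$) such queries enumerate the entire $H_1$-neighborhood (resp.\ $H_{12}$-neighborhood) of a vertex, which is enough to simulate the neighbor oracle used by~\cite{YYI2012}.

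For part (a), let $m_1$ be the maximum matching size in $H_1$ and let $\hat{m}_1$ be an estimate with $|\hat{m}_1 - m_1| \leq (\eps/4) n$. The output is $\hat{\cM} := 3n - 2\hat{m}_1$. If an optimal perfect matching of $H$ has $a^*$ weight-$1$, $b^*$ weight-$2$, and $c^*$ weight-$3$ edges, then $\cM(H) = 3n - (2a^* + b^*)$ with $a^* \leq m_1$ and $a^* + b^* \leq n$, so $2a^* + b^* \leq m_1 + n$ and hence $\cM(H) \geq 2n - m_1$. Conversely, padding a maximum matching of $H_1$ with weight-$3$ edges gives the upper bound $\cM(H) \leq 3n - 2m_1$. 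Thus the ``trivial estimate'' $3n - 2m_1$ is within $n - m_1 \leq n$ of $\cM(H)$, and the triangle inequality gives $|\hat{\cM} - \cM(H)| \leq 2|\hat{m}_1 - m_1| + n \leq (1 + \eps) n$; Proposition~\ref{obs1} finishes part (a).

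For part (b), additionally obtain $\hat{m}_{12}$ with $|\hat{m}_{12} - m_{12}| \leq (\eps/4) n$, where $m_{12}$ is the maximum matching size in $H_{12}$. Reformulate $\cM(H) = 3n - W^*$, where $W^* := \max\{2\alpha + \beta : \alpha \text{ weight-}1 \text{ and } \beta \text{ weight-}2 \text{ edges form a matching in } H_{12}\}$. The constraints $\alpha \leq m_1$ and $\alpha + \beta \leq m_{12}$ yield $W^* = \alpha + (\alpha + \beta) \leq m_1 + m_{12}$, while the feasible choices of a maximum matching in $H_1$ or in $H_{12}$ give $W^* \geq \max(2m_1, m_{12})$. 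So $W^* \in [L, U]$ with $L := \max(2m_1, m_{12})$ and $U := m_1 + m_{12}$, and the key observation is that $U - L \leq n$ in both regimes: if $m_{12} \geq 2m_1$ then $U - L = m_1 \leq m_{12}/2 \leq n/2$, else $U - L = m_{12} - m_1 \leq n$. Output $\hat{W} := \tfrac{1}{2}(\max(2\hat{m}_1, \hat{m}_{12}) + \hat{m}_1 + \hat{m}_{12})$ and $\hat{\cM} := 3n - \hat{W}$. Since $\max$ and sum are $1$-Lipschitz, the errors propagate to $|\hat{W} - (L+U)/2| \leq (\eps/2) n$, and combined with $|(L+U)/2 - W^*| \leq (U-L)/2 \leq n/2$ this yields $|\hat{\cM} - \cM(H)| \leq (\tfrac{1}{2} + \eps) n$; Proposition~\ref{obs1} then finishes part (b).

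The main obstacle is the simulation issue hidden in the first paragraph: one must verify that the oracle model required by~\cite{YYI2012} on $H_1$ and $H_{12}$ is faithfully implementable in our weighted neighbor query model on $H$. This is precisely why the constant bounds on $\deg_{H,1}(x)$ (and $\deg_{H,2}(x)$, for part (b)) are essential, as they cap the cost of fully exploring any vertex's neighborhood in the relevant subgraph by a constant. Beyond this simulation, the proof reduces to the two arithmetic bounds on $\cM(H)$ derived above, together with the averaging trick on $[L, U]$ that is responsible for the constant $\tfrac{1}{2}$ in part (b).
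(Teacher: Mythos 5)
Your argument for part (a) is essentially the paper's: estimate $m_1$ with the local matching estimator of~\cite{YYI2012}, output $3n-2\hat m_1$, and bound the error via $2n-m_1\leq \cM(H)\leq 3n-2m_1$. For part (b), however, you take a genuinely different and cleaner route. The paper estimates all three of $m_1,m_2,m_{12}$ and performs a multi-branch case analysis (on whether the estimates fall above or below $\nicefrac14,\nicefrac12,\nicefrac34$ etc.), outputting a different one of $3-2\widetilde m_1$, $3-\widetilde m_2$, $3-\widetilde m_{12}$ in each branch. You instead observe directly that $W^*:=3n-\cM(H)=\max\{2\alpha+\beta\}$ satisfies $\max(2m_1,m_{12})\leq W^*\leq m_1+m_{12}$, bound the interval width, and output the midpoint. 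This entirely avoids the case split, and you never even need $m_2$ (which the paper does use in its Case~1, via the weaker inequality $W^*\geq m_2$ rather than $W^*\geq m_{12}$). Your interval observation is the cleaner packaging of the paper's various lower bounds $\mopt\geq 3-m_1-m_{12}$ and upper bounds $\mopt\leq 3-2m_1$, $\mopt\leq 3-m_{12}$.

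Two remarks worth flagging. First, a subtlety of conventions: the paper's definition of an additive $\eps$-approximation is one-sided ($\opt\leq\beta\leq\opt+\eps$), and Proposition~\ref{obs1} is stated under that convention; the paper's own $\Delta$ is always an over-estimate of $\mopt$, but your midpoint $\hat\cM=3n-\hat W$ can land on either side of $\cM(H)$. This is trivially repaired: since the YYI estimator under-estimates, outputting $3n-\max(2\hat m_1,\hat m_{12})$ instead of the midpoint gives a genuine over-estimate of $\cM(H)$ with one-sided error $U-L+O(\eps)n\leq n/2+O(\eps)n$, matching the theorem under the paper's convention. Second, you can tighten your interval-width bound: in the regime $m_{12}<2m_1$ one has $m_{12}-m_1<m_1<m_{12}/2\leq n/2$, so in fact $U-L\leq n/2$ \emph{always}, and your midpoint estimate achieves two-sided error $n/4+O(\eps)n$ — strictly better than the $(\nicefrac12+\eps)$ the theorem asks for (the $\nicefrac12$ bound is tight only if one insists on a one-sided over-estimate). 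The one detail you correctly identify as the remaining technical burden — faithfully simulating the neighbor oracle of~\cite{YYI2012} on $H_1$ and $H_{12}$ through weighted neighbor queries on $H$ — is also treated only briefly by the paper, so no gap relative to what the paper itself proves.
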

%%%%%%%%%%%%%%%%%%%%%%%%%%%%%%%%%%%%%%%%%%%%%%%%%%%%%%%%%%%%%%%%%%%%%%%%%%%%%%%%%%%%%%%%%%%%%

\begin{remark}
Let $m_1$, $m_2$ and $m_{12}$ 
be as defined in the proof of this theorem.
The bounds in Theorem~{\em\ref{thm-simple-up}}
are tight in the sense that no algorithm that knows only estimates of $m_1$ $($resp.\ estimates of $m_1,m_2,m_{12})$ 
can provide better additive ratios for parts 
$($a$)$ $($resp.\ $($b$))$; 
see {\em\FI{fig1}(\emph{a})--(\emph{b})}.
The example in 
{\em\FI{fig1}(\emph{c})}
shows that no algorithm can provide better than additive $\frac{2}{3}$-approximation 
for the case in 
Theorem~{\em\ref{thm-simple-up}(\emph{b})}
if the estimate for $m_{12}$ is not used.
\end{remark}

%%%%%%%%%%%%%%%%%%%%%%%%%%%%%%%%%%%%%%%%%%%%%%%%%%%%%%%%%%%%%%%%%%%%%%%%%%%%%%%%
\begin{figure}[htbp]
\includegraphics[scale=0.83]{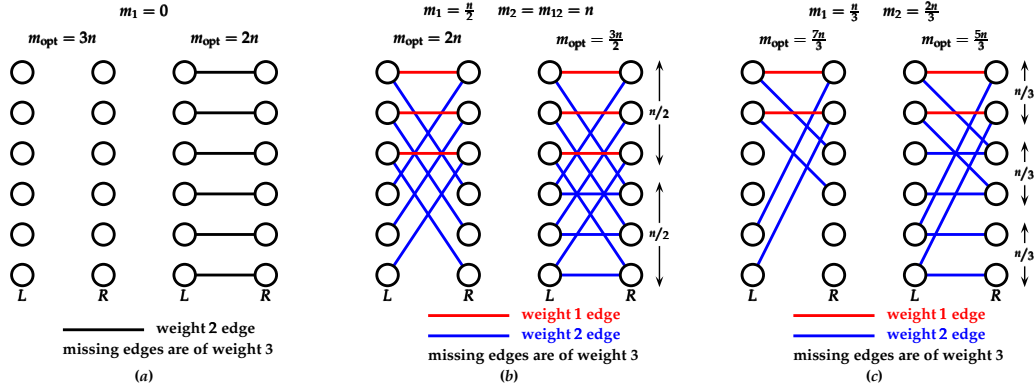}
\caption{\label{fig1}
\textbf{(\emph{a})}
Example showing tightness of bounds in Theorem~\ref{thm-simple-up}
when only estimate for $m_1$ is known.
\textbf{(\emph{b})}
Example showing tightness of bounds in Theorem~\ref{thm-simple-up}
when estimates for $m_1,m_2,m_{12}$ are known;
\textbf{(\emph{c})}
Example showing that
better than additive $\frac{2}{3}$-approximation 
is not possible 
if only 
estimates for $m_1$ and $m_2$ are used
for the case in 
Theorem~\ref{thm-simple-up}(\emph{b}).
}
\end{figure}
%%%%%%%%%%%%%%%%%%%%%%%%%%%%%%%%%%%%%%%%%%%%%%%%%%%%%%%%%%%%%%%%%%%%%%%%%%%%%%%%

%%%%%%%%%%%%%%%%%%%%%%%%%%%%%%%%%%%%%%%%%%%%%%%%%%%%%%%%%%%%%%%%%%%%%%%%%%%%%%%%%%%%%%%%%%%%%
\begin{proof}
Since 
$\deg_G(u)=\deg_G(v)$ 
we can use the reformulations of the linear program~\eqref{lpcg} 
outlined in Section~\ref{sec-eqdeg}.
%%%%
Let $\delta>0$ be a \emph{constant} to be fixed later.
Let 
$H_1$, $H_2$ and $H_{12}$ 
be the subgraphs of $H$ induced by the edges in $H$ of weight $1$, edges in $H$ of weight $2$, and edges in $H$ of weights 
$1$ and $2$, respectively.
%%%
Fix \emph{maximum-cardinality} matchings 
$\cM_1$, $\cM_2$ and  $\cM_{12}$ 
of 
$H_1$, $H_2$ and $H_{12}$ 
having 
$m_1\,n$, $m_2\,n$ and $m_{12}\,n$ 
edges, respectively.
%%%%%%%%%%%%%
Also, 
fix a minimum-weight perfect matching $\Mopt$ of $H$ of \emph{total} weight 
$\mopt\,n$, 
and let 
$
m_{\mathrm{opt},\ell} \,n
$ 
be the number of edges of weight $\ell\in\{1,2,3\}$ in $\Mopt$.
The following inequalities will be useful during the rest of the proof:
%%%%%%%%%%%%%%%%%%%%%%%%%%%%%%%%%%%%%%%%%%%%%%%%%%%%%%%%%%%%%%%%%%%%%%%%%%%%%%%%%%%%%%%%%%%%%
\begin{gather*}
m_{\mathrm{opt},1}\leq m_1, 
\,\,\,\,
m_{\mathrm{opt},2}\leq m_2,
\,\,\,\,
m_{\mathrm{opt},3}\geq 1-m_{12},
\,\,\,\,
m_{12} \geq \max\{m_1,m_2\},
\end{gather*}
%%%%
\begin{multline*}
\mopt=
m_{\mathrm{opt},1} +2\,m_{\mathrm{opt},2} + 3(1-m_{\mathrm{opt},1} - m_{\mathrm{opt},2} )=
3-2\,m_{\mathrm{opt},1}-m_{\mathrm{opt},2}
\\
\geq 2-2\,m_{\mathrm{opt},1}
\geq
2 - 2\,m_1
\end{multline*}
%%%%%%%%%%%%%%%%%%%%%%%%%%%%%%%%%%%%%%%%%%%%%%%%%%%%%%%%%%%%%%%%%%%%%%%%%%%%%%%%%%%%%%%%%%%%%
Let 
$\M_s$ be a perfect matching of $H$ generated by 
taking all the edges (of weight $1$) in 
$\cM_1$ and pairing the remaining nodes from $L$ and $R$ arbitrarily.
Note that the total weight $m_s \,n$ of the edges in $\M_s$ satisfies 
$\mopt \leq m_s$ and 
$m_s \leq m_1  + 3(1-m_1)=3-2\,m_1$; thus it follows that 
$3-2\,m_1\geq\mopt$.
Similarly, 
taking 
$\M_s$ to be a perfect matching of $H$ of total weight $m_s\,n$ generated by 
taking all the edges (of weight $2$) in 
$\cM_2$ and pairing the remaining nodes from $L$ and $R$ arbitrarily 
we get 
$\mopt \leq m_s$ and 
$m_s \leq 2\,m_2  + 3(1-m_2)=3-m_2$; thus it follows that 
$3-m_2\geq\mopt$.

Our algorithm proceeds in two main steps. The first common step in our algorithm for both 
\textbf{(\emph{a})} and 
\textbf{(\emph{b})}
is to determine the set of nodes 
in $L$ and $R$ from $\nbr_G(u)$ and $\nbr_G(v)$.
This can be done by comparing the list of nodes in 
$\nbr_G(u)$ and $\nbr_G(v)$ to identify all nodes in 
$\nbr_G(u)\cap \nbr_G(v)$ 
and setting 
$L= \nbr_G(u) \setminus \big( \nbr_G(u)\cap \nbr_G(v) \big), R= \nbr_G(v) \setminus \big( \nbr_G(u)\cap \nbr_G(v) \big)$. 
Note that this step does \emph{not} use any query at all.
The remaining parts of our algorithms will only use 
weighted neighbor queries $(x,s)$ for $x\in L\cup R$ and $s\in\{1,2\}$.

\medskip
\noindent
\textbf{Proof of (\emph{a})}

\smallskip
Since 
$\max_{v\in L\cup R} \{ \deg_{H_1}(v) \} \leq d$, 
then 
using the results of 
Yoshida \EA~\cite{YYI2012}
we can 
compute a number 
$\widetilde{m}_1$
using 
$d^{O(1/\delta^2)} {( \nicefrac{1}{\delta} )}^{O(1/\delta)}  =O(1)$
expected number of 
queries
such that 
$m_1\,n-\delta\,n \leq \widetilde{m}_1\,n \leq m_1\,n$.
It is straightforward to see that each query in 
Yoshida \EA~\cite{YYI2012}
can be implemented by a 
weighted neighbor query $(x,1)$ for some appropriate $x\in L\cup R$.
After using $O(1)$ 
expected number of 
weighted neighbor queries to compute 
$\widetilde{m}_1\,n$
we output 
the number 
$
\Delta=
(3- 2\,\widetilde{m}_1)
$
as our estimate for $\mopt$.
Note that 
$\Delta \geq (3- 2\,{m}_1) \geq \mopt $, 
and 
$
\Delta - \mopt
=
(3- 2\,\widetilde{m}_1) - \mopt
\leq 
(3- 2\,{m}_1) + 2\,\delta - \mopt
\leq 
1 + 2\,\delta
$.
Our proof is completed by taking $\delta=\nicefrac{\eps}{2}$.

\medskip
\noindent
\textbf{Proof of (\emph{b})}

\smallskip
Since  
$\max_{v\in L\cup R} \{ \deg_{H_1}(v) \} \leq d$
and 
$\max_{v\in L\cup R} \{ \deg_{H_2}(v) \} \leq d$, 
using the results of 
Yoshida \EA~\cite{YYI2012}
we can compute numbers 
$\widetilde{m}_1$, $\widetilde{m}_2$, and  $\widetilde{m}_{12}$ 
using 
$(2d)^{O(1/\delta^2)} {( \nicefrac{1}{\delta} )}^{O(1/\delta)}  =O(1)$
expected number of 
queries 
such that 
$m_\ell\,n-\delta\,n \leq \widetilde{m}_\ell\,n \leq m_\ell\,n$ for $\ell\in\{1,2,12\}$.
It is straightforward to see that each query in 
Yoshida \EA~\cite{YYI2012}
can be implemented by a 
weighted neighbor query $(x,s)$ for some appropriate $x\in L\cup R$ and $s\in\{1,2\}$.
We perform the following case analysis to provide our estimate $\Delta$ of $\mopt$.
%%%%%%%%%%%%%%%%%%%%%%%%%%%%%%%%%%%%%%%%%%%%%%%%%%%%%%%%%%%%%%%%%%%%%%%%%%%%%%%%%%%%%%%%%%%%%
\begin{description}[leftmargin=0.2in]
\item[Case~1: $\widetilde{m}_1\leq\nicefrac{1}{4}$.]
Our estimate for $\mopt$ is
$\Delta=
3 - \widetilde{m}_2
$.
Note that 
$
\Delta
\geq
3 - m_2\geq\mopt
$.
%%%%%
For the additive error estimation, 
we have 
%%%%%%%%%%%%%%%%%%%%%%%%%%%%%%%%%%%%%%%%%%%%%%%%%%%%%%%%%%%%%%%%%%%%%%%%%%%%%%%%%%%%%%%%%%%%%
\begin{multline*}
\Delta - \mopt = 
3 - \widetilde{m}_2 - \mopt
\leq 
(3 - {m}_2 -2m_1) + 2m_1 +\delta - \mopt
%%%%%
\\
%%%%%
\leq 
(3-m_{\mathrm{opt},2}-2\,m_{\mathrm{opt},1}) + 2 \big( {\textstyle \frac{1}{4} } +\delta\big)  + \delta - \mopt
=
{\textstyle \frac{1}{2} } + 3\,\delta
\end{multline*}
%%%%%%%%%%%%%%%%%%%%%%%%%%%%%%%%%%%%%%%%%%%%%%%%%%%%%%%%%%%%%%%%%%%%%%%%%%%%%%%%%%%%%%%%%%%%%
\item[Case~2: $\widetilde{m}_2\leq\nicefrac{1}{2}$ or $\widetilde{m}_1\geq\nicefrac{1}{2}$ or $\widetilde{m}_{12}\leq\nicefrac{3}{4}$.]
%%%%%%%%%%%%%%%%%%%%%%%%%%%%%%%%%%%%%%%%%%%%%%%%%%%%%%%%%%%%%%%%%%%%%%%%%%%%%%%%%%%%%%%%%%%%%
Our estimate for $\mopt$ is
$\Delta=
3 - 2\,\widetilde{m}_1
$.
Note that 
$
\Delta
\geq
3 - 2\,m_1
\geq\mopt
$.
%%%%%
For the additive error bounds, 
we have the following:
%%%%%%%%%%%%%%%%%%%%%%%%%%%%%%%%%%%%%%%%%%%%%%%%%%%%%%%%%%%%%%%%%%%%%%%%%%%%%%%%%%%%%%%%%%%%%
\begin{enumerate}[label=$\triangleright$]
\item
If $\widetilde{m}_2\leq\nicefrac{1}{2}$ then 
%%%%%%%%%%%%%%%%%%%%%%%%%%%%%%%%%%%%%%%%%%%%%%%%%%%%%%%%%%%%%%%%%%%%%%%%%%%%%%%%%%%%%%%%%%%%%
$
\Delta - \mopt
=
3-2 \, \widetilde{m}_1 - \mopt
=
(3-m_2-2\,m_1) + m_2  + 2\,\delta - \mopt
\leq
(3-m_{\mathrm{opt},2}-2\,m_{\mathrm{opt},1}) + \nicefrac{1}{2} + 3\,\delta - \mopt
\leq
{\textstyle \frac{1}{2} } + 3\,\delta
$.
%%%%%%%%%%%%%%%%%%%%%%%%%%%%%%%%%%%%%%%%%%%%%%%%%%%%%%%%%%%%%%%%%%%%%%%%%%%%%%%%%%%%%%%%%%%%%
\item
If $\widetilde{m}_1\geq\nicefrac{1}{2}$ then 
since the \emph{smallest possible} total weight that any perfect matching of $H$ could have is achieved by taking 
all the $m_1\,n$ edges of weight $1$ and 
the remaining $(1-m_1)\,n$ edges of weight $2$
we get $\mopt\geq m_1 + 2(1-m_1)=2-m_1$.
Consequently, 
%%%%%%%%%%%%%%%%%%%%%%%%%%%%%%%%%%%%%%%%%%%%%%%%%%%%%%%%%%%%%%%%%%%%%%%%%%%%%%%%%%%%%%%%%%%%%
\[
\Delta- \mopt
\leq 
(3 - 2\,\widetilde{m}_1) - 
(2-m_1)
\leq
1-m_1 + 2\,\delta
\leq
{\textstyle \frac{1}{2} } + 2\,\delta
\]
%%%%%%%%%%%%%%%%%%%%%%%%%%%%%%%%%%%%%%%%%%%%%%%%%%%%%%%%%%%%%%%%%%%%%%%%%%%%%%%%%%%%%%%%%%%%%
\item
If $\widetilde{m}_{12}\leq\nicefrac{3}{4}$ then 
since $m_{\mathrm{opt},3}\geq 1-m_{12}$
the \emph{smallest possible} total weight that any perfect matching of $H$ could achieve is by taking 
$m_1\,n$ edges of weight $1$,
$(1-m_{12})n$ edges of weight $3$,
and 
the remaining 
$(m_{12}-m_1)n$
edges of weight $2$
we get $\mopt\geq m_1 + 2(m_{12}-m_1) +3(1-m_{12})=3-m_1-m_{12}$.
Consequently,
%%%%%%%%%%%%%%%%%%%%%%%%%%%%%%%%%%%%%%%%%%%%%%%%%%%%%%%%%%%%%%%%%%%%%%%%%%%%%%%%%%%%%%%%%%%%%
\begin{multline*}
\Delta -\mopt
\leq 
( 3 - 2\,\widetilde{m}_1 ) - ( 3-m_1-m_{12} )
\leq
(m_{12}-m_1) + 2\,\delta 
\\
\leq
( {\textstyle \frac{3}{4} } + \delta - {\textstyle \frac{1}{4} }  ) + 2\,\delta 
=
{\textstyle \frac{1}{2} } + 3\,\delta
\end{multline*}
%%%%%%%%%%%%%%%%%%%%%%%%%%%%%%%%%%%%%%%%%%%%%%%%%%%%%%%%%%%%%%%%%%%%%%%%%%%%%%%%%%%%%%%%%%%%%
\end{enumerate}
%%%%%%%%%%%%%%%%%%%%%%%%%%%%%%%%%%%%%%%%%%%%%%%%%%%%%%%%%%%%%%%%%%%%%%%%%%%%%%%%%%%%%%%%%%%%%
\item[Case~3: when Cases $1$ and Case $2$ do not apply.]
%%%%%%%%%%%%%%%%%%%%%%%%%%%%%%%%%%%%%%%%%%%%%%%%%%%%%%%%%%%%%%%%%%%%%%%%%%%%%%%%%%%%%%%%%%%%%
For this case the following inequalities hold: 
%%%%%%%%%%%%%%%%%%%%%%%%%%%%%%%%%%%%%%%%%%%%%%%%%%%%%%%%%%%%%%%%%%%%%%%%%%%%%%%%%%%%%%%%%%%%%
\begin{gather*}
\nicefrac{1}{4} < \widetilde{m}_1 < \nicefrac{1}{2} 
\,\Rightarrow\,
\nicefrac{1}{4} < {m}_1 < \nicefrac{1}{2} + \delta,
\,\,\,
\widetilde{m}_2 > \nicefrac{1}{2}
\,\Rightarrow\,
{m}_2 > \nicefrac{1}{2} + \delta
\\
\widetilde{m}_{12} > \nicefrac{3}{4}
\,\Rightarrow\,
{m}_{12} > \nicefrac{3}{4}
\end{gather*}
%%%%%%%%%%%%%%%%%%%%%%%%%%%%%%%%%%%%%%%%%%%%%%%%%%%%%%%%%%%%%%%%%%%%%%%%%%%%%%%%%%%%%%%%%%%%%
For this case, we use the following lower bound for $\mopt$.
Since $m_{\mathrm{opt},3}\geq 1-m_{12}$
the \emph{smallest possible} total weight that any perfect matching of $H$ could have is achieved by taking 
$m_1\,n$ edges of weight $1$,
$(1-m_{12})n$ edges of weight $3$,
and 
the remaining 
$(m_{12}-m_1)n$
edges of weight $2$.
This implies $\mopt\geq m_1 + 2(m_{12}-m_1) +3(1-m_{12})=3-m_1-m_{12}$.

Let $\alpha= \max \{ {m}_{12}-2\,{m}_1,\, 0 \}$.
Suppose that $\cM_{12}$ contains $m_1'\leq m_1$ edges of weight $1$.
Consider the following process: we start with the edges in $\cM_{12}$, remove $m_1'$ edges 
of weight $1$ from it, add $m_1$ edges of weight $1$ from $\cM_1$ to it and finally remove (``knock out'')
the edges of weight $2$ that share an end-point with the edges of $\cM_1$ added to our collection.
Since $m_1$ edges of weight $1$ can knock out \emph{at most} $2\,m_1$ edges of weight $2$, 
it follows that there are at least $\alpha$ ``surviving'' edges of weight $2$ that do \emph{not} share any end-point 
with the edges in $\cM_1$.
We now have the following two sub-cases.
%%%%%%%%%%%%%%%%%%%%%%%%%%%%%%%%%%%%%%%%%%%%%%%%%%%%%%%%%%%%%%%%%%%%%%%%%%%%%%%%%%%%%%%%%%%%%
\begin{description}[leftmargin=0.2in]
\item[Case~3.1: $\widetilde{m}_{1,2}\leq 2\,\widetilde{m}_1+\delta$.]
Note that 
$\widetilde{m}_{1,2}\leq 2\,\widetilde{m}_1+\delta$
implies 
${m}_{1,2}\leq 2\,{m}_1 + 2\,\delta$.
%%%
Our estimate for $\mopt$ is
$\Delta=
3 - 2\,\widetilde{m}_1
$.
Note that 
$
\Delta
\geq
3 - 2\,m_1
\geq\mopt
$.
%%%%%
For the additive error estimation, 
note that
%%%%%%%%%%%%%%%%%%%%%%%%%%%%%%%%%%%%%%%%%%%%%%%%%%%%%%%%%%%%%%%%%%%%%%%%%%%%%%%%%%%%%%%%%%%%%
\begin{multline*}
\Delta - \mopt
\leq 
( 3 - 2\,\widetilde{m}_1 ) - (3-m_1-m_{12})
\leq 
m_{12}-m_1 + 2\,\delta
\\
\leq 
m_1 + 4\,\delta
<
{\textstyle \frac{1}{2} } + 5\,\delta
\end{multline*}
%%%%%%%%%%%%%%%%%%%%%%%%%%%%%%%%%%%%%%%%%%%%%%%%%%%%%%%%%%%%%%%%%%%%%%%%%%%%%%%%%%%%%%%%%%%%%
\item[Case~3.2: $\widetilde{m}_{1,2}>2\,\widetilde{m}_1+\delta$.]
Our estimate for $\mopt$ is
$\Delta= 3 - \widetilde{m}_{1,2}$.
Note that 
$\widetilde{m}_{1,2}> 2\,\widetilde{m}_1+\delta$
implies 
${m}_{1,2}> 2\,{m}_1 + \delta$.
%%%%%%%%%%%%%%
Thus, for this case, $\alpha=m_{12}-2\,m_1>\delta>0$.
%%%%%%%
Let $\M'$ be a perfect matching of $H$ generated by 
taking all the edges (of weight $1$) in 
$\cM_1$, the $\alpha$ surviving edges of weight $2$, and pairing the remaining nodes from $L$ and $R$ arbitrarily.
Then, the total weight $m' \,n$ of the edges in $\M'$ satisfies 
$
m_1  + 2 (m_{12}-2\,m_1) + 3(1-(m_1 + (m_{12}-2\,m_1) ) ) = 3 -m_{12}
\geq 
m' 
\geq 
\mopt
$,
and
it follows that 
$
\Delta
\geq
3 - m_{1,2}
\geq\mopt
$.
%%%%%%%%%%%%%%%%%%%%%%%%%%%%%%%%%%%%%%%%%%%%%%%%%%%%%%%%%%%%%%%%%%%%%%%%%%%%%%%%%%%%%%%%%%%%%
For the additive error estimation, 
note that
%%%%%%%%%%%%%%%%%%%%%%%%%%%%%%%%%%%%%%%%%%%%%%%%%%%%%%%%%%%%%%%%%%%%%%%%%%%%%%%%%%%%%%%%%%%%%
\begin{gather*}
\Delta - \mopt 
\leq 
( 3 - \widetilde{m}_{1,2} ) - ( 3-m_1-m_{12} )
\leq 
m_1 + \delta
<
{\textstyle \frac{1}{2} } + 2\,\delta
\end{gather*}
%%%%%%%%%%%%%%%%%%%%%%%%%%%%%%%%%%%%%%%%%%%%%%%%%%%%%%%%%%%%%%%%%%%%%%%%%%%%%%%%%%%%%%%%%%%%%
\end{description}
%%%%%%%%%%%%%%%%%%%%%%%%%%%%%%%%%%%%%%%%%%%%%%%%%%%%%%%%%%%%%%%%%%%%%%%%%%%%%%%%%%%%%%%%%%%%%
\end{description}
%%%%%%%%%%%%%%%%%%%%%%%%%%%%%%%%%%%%%%%%%%%%%%%%%%%%%%%%%%%%%%%%%%%%%%%%%%%%%%%%%%%%%%%%%%%%%
In all cases, setting $\delta=\nicefrac{\eps}{5}$ provides an additive $\left(\frac{1}{2}+\eps\right)$-approximation.
\end{proof}
%%%%%%%%%%%%%%%%%%%%%%%%%%%%%%%%%%%%%%%%%%%%%%%%%%%%%%%%%%%%%%%%%%%%%%%%%%%%%%%%%%%%%%%%%%%%%

%%%%%%%%%%%%%%%%%%%%%%%%%%%%%%%%%%%%%%%%%%%%%%%%%%%%%%%%%%%%%%%%%%%%%%%%%%%%%%%%%%%%%%%%%%%%%
\subsection{Upper bounds on number of queries for computing $\mathfrak{C}_{G}(e)$ when $\deg(u)\neq\deg(v)$
using ``localized'' fine-grained reduction}
\label{suc-sub-upper2}
%%%%%%%%%%%%%%%%%%%%%%%%%%%%%%%%%%%%%%%%%%%%%%%%%%%%%%%%%%%%%%%%%%%%%%%%%%%%%%%%%%%%%%%%%%%%%

\newcommand{\bbbeq}{\mathfrak{B}_=}
\newcommand{\bbbless}{\mathfrak{B}_<}

Theorem~\ref{thm-simple-up}
provides non-trivial approximation of 
$\mathfrak{C}_{G}(e)$ 
when $\deg_G(u)=\deg_G(v)$. 
In this section, we show that
a ``localized'' version of the fine-grained reduction used in 
Theorem~\ref{thm-ineqdeg}
can be applied to extend these local approximation algorithms to some cases when 
$\deg_G(u)$ and $\deg_G(v)$ are \emph{not} necessarily equal. 
Such 
a localized version of the fine-grained reduction is \emph{not} allowed to construct the reduction explicitly, 
but instead the details of the reduction need to be revealed 
incrementally to the local algorithm on a ``need-to-know'' basis to simulate the queries of the local algorithm 
on the graph constructed by the fine-grained reduction.
The overall simulation is summarized in Theorem~\ref{thm-trans-query}.

%%%%%%%%%%%%%%%%%%%%%%%%%%%%%%%%%%%%%%%%%%%%%%%%%%%%%%%%%%%%%%%%%%%%%%%%%%%%%%%%%%%%%%%%
\begin{theorem}[\bf Computing $\pmb{\mathfrak{C}_{G}(e)}$ via localized fine-grained reduction]\label{thm-trans-query}
Suppose that we have an algorithm 
$\bbbeq$ that provides an 
$(\alpha,\eps)$-estimate for 
$\mathfrak{C}_{G}(e)$ 
when $\deg_G(u)=\deg_G(v)$ using $t$ queries 
$q_1',\dots,q_t'$ 
when each query $q_i'$ 
is either a weighted node-pair query, a weighted neighbor query or 
a weighted selective degree query.

Then, letting $\delta>0$ denote any constant, we can 
design an algorithm 
$\bbbless$ 
for the case when 
$\deg_G(u)\neq\deg_G(v)$
using 
$\bbbeq$
with the following properties:
%%%%%%%%%%%%%%%%%%%%%%%%%%%%%%%%%%%%%%%%%%%%%%%%%%%%%%%%%%%%%%%%%%%%%%%%%%%%%%%%%%%%%%%%%%%%%
\begin{enumerate}[label={\emph{(}{\alph*}\emph{)}}]
%%%%%%%%%%%%%%%%%%%%%%%%%%%%%%%%%%%%%%%%%%%%%%%%%%%%%%%%%%%%%%%%%%%%%%%%%%%%%%%%%%%%%%%%%%%%%
\item
Corresponding to each query $q_i'$,
$\bbbless$ 
performs at most 
one weighted selective degree query
and at most 
one additional query of the same type as $q_i'$
on $G_{u,v}$. 
%%%%%%%%%%%%%%%%%%%%%%%%%%%%%%%%%%%%%%%%%%%%%%%%%%%%%%%%%%%%%%%%%%%%%%%%%%%%%%%%%%%%%%%%%%%%%
%%%%%%%%%%%%%%%%%%%%%%%%%%%%%%%%%%%%%%%%%%%%%%%%%%%%%%%%%%%%%%%%%%%%%%%%%%%%%%%%%%%%%%%%%%%%%
\item
$\bbbless$ 
provides an 
$(\alpha,\eps)$-estimate for 
$\mathfrak{C}_{G}(e)$ 
if 
$\deg_G(u)+1$ is an integral multiple of $\deg_G(v)+1$. 
%%%%%%%%%%%%%%%%%%%%%%%%%%%%%%%%%%%%%%%%%%%%%%%%%%%%%%%%%%%%%%%%%%%%%%%%%%%%%%%%%%%%%%%%%%%%%
\item
$\bbbless$ 
provides an 
$(\alpha,\eps+\delta)$-estimate for 
$\mathfrak{C}_{G}(e)$
if 
either 
$\deg_G(u)\leq(\delta/3)\times \deg_G(v)$
or 
$\deg_G(u)\geq (1 - (\delta/3)\,) \times \deg_G(v)$. 
%%%%%%%%%%%%%%%%%%%%%%%%%%%%%%%%%%%%%%%%%%%%%%%%%%%%%%%%%%%%%%%%%%%%%%%%%%%%%%%%%%%%%%%%%%%%%
\end{enumerate}
%%%%%%%%%%%%%%%%%%%%%%%%%%%%%%%%%%%%%%%%%%%%%%%%%%%%%%%%%%%%%%%%%%%%%%%%%%%%%%%%%%%%%%%%%%%%%
\end{theorem}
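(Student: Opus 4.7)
The plan is to \emph{simulate}, rather than explicitly build, the reduction used in the proof of Theorem~\ref{thm-ineqdeg}. Recall that the reduction transforms $G_{u,v}$ into a balanced bipartite graph $G_{u,v}'$ of side-size $\deg_G(v)+1$ by cloning each $x_i\in L_{u,v}^{G}$ into $a$ copies $x_i^1,\dots,x_i^a$ and adjoining $b$ special nodes $r_1,\dots,r_b$ on the left, where $\deg_G(v)+1 = a(\deg_G(u)+1)+b$, while keeping $R_{u,v}^{G'}=R_{u,v}^{G}$ intact. Since $\bbbeq$ is designed for the balanced case, it can be applied to $G_{u,v}'$ provided every one of its queries is answered consistently with the edge-weight structure of $G_{u,v}'$. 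The algorithm $\bbbless$ therefore runs $\bbbeq$ on (a mental copy of) $G_{u,v}'$ and routes each query through a translator that issues queries to $G_{u,v}$.

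The core of the proof is this query translator. For a weighted node-pair query $(p,q)$ on $G_{u,v}'$: if $p=r_\ell$ answer $3$; if $p=x_i^j$ forward $(x_i,q)$ to $G_{u,v}$. For a weighted selective degree query $(p,s)$: forward to $G_{u,v}$ when $p=x_i^j$ (the edge weights from $x_i^j$ are inherited from $x_i$); answer directly from $\deg_G(v)+1$ and $s$ when $p=r_\ell$; and when $p=y_\ell\in R_{u,v}^{G'}$ forward $(y_\ell,s)$ to $G_{u,v}$, multiply the returned count by $a$, and add $b$ when $s=3$ (to account for the $r$-node contribution). The delicate case is a weighted neighbor query $(y_\ell,s)$ from the right side: a single weighted selective degree query on $(y_\ell,s)$ in $G_{u,v}$ supplies the total multiplicity of not-yet-explored weight-$s$ neighbors of $y_\ell$ in $G_{u,v}'$ (each weight-$s$ neighbor $x_i$ in $G_{u,v}$ contributes $a$ clones, and the $r$-nodes contribute only when $s=3$); using this count the simulator samples a random unexplored neighbor, and whenever a previously-unseen parent $x_i$ must be revealed it issues exactly one weighted neighbor query $(y_\ell,s)$ on $G_{u,v}$. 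Across all query types, at most one selective degree query and at most one query of the same type as $q_i'$ are issued per $q_i'$, establishing (a).

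For the approximation guarantees (b) and (c), the proof of Theorem~\ref{thm-ineqdeg} already yields
\[
\bigl| \, \text{\emd}_{G_{u,v}'}(\PP_u^{G'},\PP_v^{G'}) \, - \, \text{\emd}_{G_{u,v}}(\PP_u^G,\PP_v^G) \, \bigr| \leq \frac{3b}{\deg_G(v)+1}.
\]
When $\deg_G(u)+1$ divides $\deg_G(v)+1$ we have $b=0$, so the earth mover's distance, and therefore $\mathfrak{C}_{G}(e)=1-\text{\emd}_{G_{u,v}}(\PP_u^G,\PP_v^G)$, computed by $\bbbeq$ on $G_{u,v}'$ agrees exactly with the target quantity, yielding (b). Under either extremal condition in (c), the same string of inequalities from the proof of Theorem~\ref{thm-ineqdeg} gives $\frac{3b}{\deg_G(v)+1}\leq\delta$, and the additive slack propagates one-for-one from the EMD to the curvature, giving an $(\alpha,\eps+\delta)$-estimate.

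The main obstacle I expect is the faithful maintenance of the ``not-yet-explored'' semantics for weighted neighbor queries issued from $R_{u,v}^{G'}$, since each $x_i$ in $G_{u,v}$ is split into $a$ distinct clones in $G_{u,v}'$ that $\bbbeq$ must perceive as separate nodes; the per-source bookkeeping sketched above is what lets this be done using only the single allotted selective degree query and the single allotted forwarded neighbor query, rather than incurring an $a$-fold blowup in queries to $G_{u,v}$.
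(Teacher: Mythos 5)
Your overall approach — run $\bbbeq$ on a simulated version of $G_{u,v}'$ and translate its queries to queries on $G_{u,v}$ — is exactly the paper's, and your treatment of weighted node-pair queries, weighted selective degree queries, the approximation bookkeeping via the $\frac{3b}{\deg_G(v)+1}$ slack from Theorem~\ref{thm-ineqdeg}, and the right-side weighted neighbor query $(y_\ell,s)$ all match the paper's proof in substance. However, there is a genuine omission: you never handle a weighted \emph{neighbor} query $(x_i^j,s)$ issued from a cloned left-side node. You cover $(x_i^j,q)$ for node-pair queries and $(x_i^j,s)$ for selective-degree queries, and you single out the right-side $(y_\ell,s)$ as ``the delicate case,'' but the left-side neighbor query is a separate and equally delicate case that the paper treats at length (its Case 3.1). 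The difficulty is that all $a$ clones $x_i^1,\dots,x_i^a$ share the same weight-$s$ neighborhood $\cS_{x_i,s}$ in $G_{u,v}'$, yet each clone carries its own independent ``not-yet-explored'' state. Naively forwarding every $(x_i^j,s)$ as $(x_i,s)$ to $G_{u,v}$ fails, because the single node $x_i$ in $G_{u,v}$ has only one such state: once $(x_i,s)$ has returned $y_1$, a forwarded query on behalf of a different clone $x_i^{j'}$ will never return $y_1$ again, even though $y_1$ is a legitimate (and must remain a uniformly likely) answer for $x_i^{j'}$. Conversely, simply reusing a previously cached answer for another clone breaks statistical independence across clones.

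The paper resolves this with per-clone counters $\kappa_{x_i^j,s}$, an aggregate counter $\kappa_{i,s}$, the cached answer set $\cT_{i,s}$, and a mixed sampling rule: when $\kappa_{x_i^j,s}<\kappa_{i,s}<\sigma_{i,s}$, one issues a single fresh $(x_i,s)$ query on $G_{u,v}$ to get a new node $y_p$, and then samples among $\{y_p\}\cup\bigl(\cT_{i,s}\setminus\cT_{x_i^j,s}\bigr)$ with probabilities $\nicefrac{\lambda_{i,s}}{\lambda_{x_i^j,s}}$ for $y_p$ and $\nicefrac{1}{\lambda_{x_i^j,s}}$ for each cached node, so that the marginal over $\cS_{x_i,s}\setminus\cT_{x_i^j,s}$ is uniform and the per-query cost is still one selective-degree plus one neighbor query. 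Your sketch of bookkeeping for the right side is in the same spirit but does not transfer automatically, because on the right side the \emph{targets} are cloned while on the left side the \emph{source} is; you need to supply the left-side argument explicitly to make claim (a) go through for all query types.
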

%%%%%%%%%%%%%%%%%%%%%%%%%%%%%%%%%%%%%%%%%%%%%%%%%%%%%%%%%%%%%%%%%%%%%%%%%%%%%%%%%%%%%%%%

%%%%%%%%%%%%%%%%%%%%%%%%%%%%%%%%%%%%%%%%%%%%%%%%%%%%%%%%%%%%%%%%%%%%%%%%%%%%%%%%%%%%%%%%
\begin{corollary}\label{cor-trans-query}
If $\deg_G(u)\geq (1 - (\delta/3)\,) \times \deg_G(v)$ for some constant $\delta>0$ 
then 
$\max_{x} \{ \deg_{G_{u,v},1}(x) \}=O(1)$
$($respectively, $\max_{x} \{ \deg_{G_{u,v},2}(x) \}=O(1))$ 
implies 
$\max_{x} \{ \deg_{G_{u,v}',1}(x) \}=O(1)$
$($respectively, $\max_{x} \{ \deg_{G_{u,v}',2}(x) \}=O(1))$, and thus 
each weighted selective degree query
for the weight $1$ $($respectively, for the weight $2)$
can be trivially simulated by $O(1)$ 
weighted neighbor queries 
for the weight $1$ $($respectively, for the weight $2)$
on $G_{u,v}'$.
Thus, 
combining Theorem~\ref{thm-trans-query} with the approximations in Theorem~\ref{thm-simple-up}
gives us algorithms of the following types for the case when $\deg_G(u)\neq\deg_G(v)$:
%%%%%%%%%%%%%%%%%%%%%%%%%%%%%%%%%%%%%%%%%%%%%%%%%%%%%%%%%%%%%%%%%%%%%%%%%%%%%%%%%%%%%%%%%%%%%
\begin{enumerate}[label={\emph{(}{\roman*}\emph{)}}]
%%%%%%%%%%%%%%%%%%%%%%%%%%%%%%%%%%%%%%%%%%%%%%%%%%%%%%%%%%%%%%%%%%%%%%%%%%%%%%%%%%%%%%%%%%%%%
%%%%%%%%%%%%%%%%%%%%%%%%%%%%%%%%%%%%%%%%%%%%%%%%%%%%%%%%%%%%%%%%%%%%%%%%%%%%%%%%%%%%%%%%%%%%%
\item
additive $\left(1+\eps+\delta\right)$-approximation 
using $O(1)$ 
weighted neighbor 
queries\footnote{The constant in $O(1)$ depends on the value of $\frac{1}{1-(\delta/3)}$.}
if 
$\max_{x} \{ \deg_{H,1}(x) \}=O(1)$ and 
$\deg_G(u)\geq (1 - (\delta/3)\,) \times \deg_G(v)$,
%%%%%%%%%%%%%%%%%%%%%%%%%%%%%%%%%%%%%%%%%%%%%%%%%%%%%%%%%%%%%%%%%%%%%%%%%%%%%%%%%%%%%%%%%%%%%
\item
additive $\left(\frac{1}{2}+\eps+\delta\right)$-approximation 
using $O(1)$ 
weighted neighbor 
queries\footnotemark[\value{footnote}]
if 
$\max_{x} \{ \deg_{H,1}(x) \}=O(1)$, 
$\max_{x} \{ \deg_{H,2}(x) \}=O(1)$, and 
$\deg_G(u)\geq (1 - (\delta/3)\,) \times \deg_G(v)$.
%%%%%%%%%%%%%%%%%%%%%%%%%%%%%%%%%%%%%%%%%%%%%%%%%%%%%%%%%%%%%%%%%%%%%%%%%%%%%%%%%%%%%%%%%%%%%
\end{enumerate}
%%%%%%%%%%%%%%%%%%%%%%%%%%%%%%%%%%%%%%%%%%%%%%%%%%%%%%%%%%%%%%%%%%%%%%%%%%%%%%%%%%%%%%%%%%%%%
\end{corollary}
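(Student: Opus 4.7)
The plan is to verify the corollary by chaining together three short arguments, each essentially elementary given the machinery of Theorems~\ref{thm-ineqdeg}, \ref{thm-simple-up}, and~\ref{thm-trans-query}. First I would establish the degree preservation claim, then use it to trade any weighted selective degree query for $O(1)$ weighted neighbor queries, and finally combine Theorem~\ref{thm-simple-up} with Theorem~\ref{thm-trans-query} to obtain the stated additive approximations using only $O(1)$ weighted neighbor queries on $G_{u,v}$.

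For the degree preservation, I would trace the construction of $G_{u,v}'$ from the proof of Theorem~\ref{thm-ineqdeg}. A node $y_\ell\in R_{u,v}^{G'}=R_{u,v}^G$ has $a\cdot\deg_{G_{u,v},s}(y_\ell)$ weight-$s$ neighbors in $G_{u,v}'$ for $s\in\{1,2\}$, since every weight-$s$ neighbor of $y_\ell$ in $G_{u,v}$ is duplicated $a$ times and the special nodes $r_1,\dots,r_b$ contribute only weight-$3$ edges. Under the hypothesis $\deg_G(u)\geq(1-\delta/3)\deg_G(v)$, the integer $a=\lfloor(\deg_G(v)+1)/(\deg_G(u)+1)\rfloor$ is bounded by $\lfloor 1/(1-\delta/3)\rfloor=O(1)$, so this degree stays $O(1)$. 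A duplicate $x_i^j$ inherits $\deg_{G_{u,v},s}(x_i)=O(1)$ verbatim, and each $r_i$ has weight-$s$ degree $0$ for $s\in\{1,2\}$, giving $\max_x\deg_{G_{u,v}',s}(x)=O(1)$. With every relevant weight-$s$ degree bounded by a constant in both $G_{u,v}$ and $G_{u,v}'$, any weighted selective degree query $(x,s)$ can then be simulated by issuing $O(1)$ weighted neighbor queries $(x,s)$ and counting non-empty responses until the oracle reports no further neighbors.

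To assemble the final algorithm I would take $\bbbeq$ to be the algorithm of Theorem~\ref{thm-simple-up}(a) or~\ref{thm-simple-up}(b), each of which uses $O(1)$ weighted neighbor queries on its equal-degree input; the degree preservation step justifies invoking Theorem~\ref{thm-simple-up} with input $G_{u,v}'$. Plugging this $\bbbeq$ into Theorem~\ref{thm-trans-query}(c) yields an algorithm $\bbbless$ that simulates each neighbor query of $\bbbeq$ on $G_{u,v}'$ by at most one weighted selective degree query and one weighted neighbor query on $G_{u,v}$, inflating the additive error by $\delta$. The corollary's own hypothesis $\deg_{G_{u,v},s}(x)=O(1)$ then lets me replace each induced selective degree query on $G_{u,v}$ by $O(1)$ weighted neighbor queries on $G_{u,v}$, leaving $O(1)$ weighted neighbor queries overall and upgrading the additive ratios of Theorem~\ref{thm-simple-up} from $1+\eps$ to $1+\eps+\delta$ for~(i) and from $\nicefrac{1}{2}+\eps$ to $\nicefrac{1}{2}+\eps+\delta$ for~(ii). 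The main obstacle is really just bookkeeping: the hidden constant in $O(1)$ depends polynomially on both the fixed degree bound $d$ and the multiplicative factor $a\leq 1/(1-\delta/3)$ inherited from the fine-grained reduction, which is precisely the dependence on $\frac{1}{1-\delta/3}$ flagged in the footnote of the corollary.
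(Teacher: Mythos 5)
Your proposal is correct and fleshes out the paper's own inline argument in essentially the same way: bound $a \leq \lfloor 1/(1-\delta/3) \rfloor = O(1)$ under the degree hypothesis, so that weight-$s$ degrees are inflated by at most the factor $a$ in passing from $G_{u,v}$ to $G_{u,v}'$; then invoke Theorem~\ref{thm-simple-up} as the algorithm $\bbbeq$ of Theorem~\ref{thm-trans-query}, and replace each induced weighted selective degree query by $O(1)$ weighted neighbor queries using the bounded degree. You also correctly place the selective degree queries that must be absorbed on $G_{u,v}$ (where $\bbbless$ issues them in the localized reduction) rather than on $G_{u,v}'$, which clarifies the paper's slightly compressed wording.
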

%%%%%%%%%%%%%%%%%%%%%%%%%%%%%%%%%%%%%%%%%%%%%%%%%%%%%%%%%%%%%%%%%%%%%%%%%%%%%%%%%%%%%%%%

%%%%%%%%%%%%%%%%%%%%%%%%%%%%%%%%%%%%%%%%%%%%%%%%%%%%%%%%%%%%%%%%%%%%%%%%%%%%%%%%%%%%%%%%
\begin{proof}
We will \emph{reuse} the notations and the reduction used in the proof of
Theorem~\ref{thm-ineqdeg}; in particular in those notations the graph $H$ is also the graph $G_{u,v}$.
Our algorithm 
$\bbbless$ has a list of nodes in the graph $G_{u,v}'$ and also the numbers $a$ and $b$.
We show next how the value of a query $q_i'$ on $G_{u,v}'$ can be obtained from
the values of a collection $\cQ_i$ of (at most two) queries 
on $G_{u,v}$ by 
$\bbbless$.
%%%%%%%%%%%%%%%%%%%%%%%%%%%%%%%%%%%%%%%%%%%%%%%%%%%%%%%%%%%%%%%%%%%%%%%%%%%%%%%%%%%%%%%%%%%%%
\begin{enumerate}[label=$\triangleright$,leftmargin=*]
%%%%%%%%%%%%%%%%%%%%%%%%%%%%%%%%%%%%%%%%%%%%%%%%%%%%%%%%%%%%%%%%%%%%%%%%%%%%%%%%%%%%%%%%%%%%%
\item
\textbf{Case 1: 
$\pmb{q_i'}$ is a 
weighted node-pair query}.
If $q_i'$ is of the form 
$(x_i^j,y_\ell)$ 
then $\cQ_i=\{ (x_i,y_\ell) \}$ and 
$\bbbless$
returns the value of the query 
$(x_i,y_\ell)$ 
on $G_{u,v}$
as the value of $q_i'$. 
If
$q_i'$ is of the form $(r_i,y_\ell)$ then 
$\cQ_i=\emptyset$ and 
$\bbbless$
returns $3$
as the value of $q_i'$. 
%%%%%%%%%%%%%%%%%%%%%%%%%%%%%%%%%%%%%%%%%%%%%%%%%%%%%%%%%%%%%%%%%%%%%%%%%%%%%%%%%%%%%%%%%%%%%
%%%%%%%%%%%%%%%%%%%%%%%%%%%%%%%%%%%%%%%%%%%%%%%%%%%%%%%%%%%%%%%%%%%%%%%%%%%%%%%%%%%%%%%%%%%%%
\item
\textbf{Case 2: 
$\pmb{q_i'}$ is a 
weighted selective degree query}.
Let $s$ be a number from the set $\{1,2,3\}$.
%%%%%%%%%%%%%%%%%%%%%%%%%%%%%%%%%%%%%%%%%%%%%%%%%%%%%%%%%%%%%%%%%%%%%%%%%%%%%%%%%%%%%%%%%%%%%
\begin{itemize}
\item
If $q_i'$ is of the form 
$(x_i^j,s)$
then $\cQ_i=\{ (x_i,s) \}$ 
and 
$\bbbless$
returns the value of the
weighted selective degree query $(x_i,s)$
on $G_{u,v}$
as the value of $q_i'$. 
%%%%%%%%%%%%%%%%%%%%%%%%%%%%%%%%%%%%%%%%%%%%%%%%%%%%%%%%%%%%%%%%%%%%%%%%%%%%%%%%%%%%%%%%%%%%%
\item
If $q_i'$ is of the form 
$(r_i,s)$
then $\cQ_i=\emptyset$ and 
$\bbbless$
returns 
$3\deg_G(v)+3$ if $s=3$
and 
$0$
otherwise
as the value of $q_i'$. 
%%%%%%%%%%%%%%%%%%%%%%%%%%%%%%%%%%%%%%%%%%%%%%%%%%%%%%%%%%%%%%%%%%%%%%%%%%%%%%%%%%%%%%%%%%%%%
\item
If $q_i'$ is of the form 
$(y_\ell,s)$ 
then $\cQ_i=\{ (y_\ell,s) \}$, and 
$\bbbless$
returns the following 
as the value of $q_i'$:
%%%%%%%%%%%%%%%%%%%%%%%%%%%%%%%%%%%%%%%%%%%%%%%%%%%%%%%%%%%%%%%%%%%%%%%%%%%%%%%%%%%%%%%%%%%%%
\begin{itemize}
\item
the value of the weighted selective degree query 
$(y_\ell,s)$ 
on $G_{u,v}$
times $a$
if $s\in\{1,2\}$, and 
%%%%%%%%%%%%%%%%%%%%%%%%%%%%%%%%%%%%%%%%%%%%%%%%%%%%%%%%%%%%%%%%%%%%%%%%%%%%%%%%%%%%%%%%%%%%%
\item
the value of the weighted selective degree query 
$(y_\ell,s)$ 
on $G_{u,v}$
times $a$
plus $b$
otherwise.
%%%%%%%%%%%%%%%%%%%%%%%%%%%%%%%%%%%%%%%%%%%%%%%%%%%%%%%%%%%%%%%%%%%%%%%%%%%%%%%%%%%%%%%%%%%%%
\end{itemize}
%%%%%%%%%%%%%%%%%%%%%%%%%%%%%%%%%%%%%%%%%%%%%%%%%%%%%%%%%%%%%%%%%%%%%%%%%%%%%%%%%%%%%%%%%%%%%
\end{itemize}
%%%%%%%%%%%%%%%%%%%%%%%%%%%%%%%%%%%%%%%%%%%%%%%%%%%%%%%%%%%%%%%%%%%%%%%%%%%%%%%%%%%%%%%%%%%%%
%%%%%%%%%%%%%%%%%%%%%%%%%%%%%%%%%%%%%%%%%%%%%%%%%%%%%%%%%%%%%%%%%%%%%%%%%%%%%%%%%%%%%%%%%%%%%
\item
\textbf{Case 3: 
$\pmb{q_i'}$ is a 
weighted neighbor query}.
Let $s$ be a number from the set $\{1,2,3\}$.
%%%%%%%%%%%%%%%%%%%%%%%%%%%%%%%%%%%%%%%%%%%%%%%%%%%%%%%%%%%%%%%%%%%%%%%%%%%%%%%%%%%%%%%%%%%%%
\begin{enumerate}[label=$\triangleright$,leftmargin=*]
%%%%%%%%%%%%%%%%%%%%%%%%%%%%%%%%%%%%%%%%%%%%%%%%%%%%%%%%%%%%%%%%%%%%%%%%%%%%%%%%%%%%%%%%%%%%%
\item
\textbf{Case 3.1: 
$\pmb{q_i'}$ is of the form 
$\pmb{(x_i^j,s)}$}. 
%%%
The following example illustrates the subtlety of this case. Suppose that $x_1$ 
is connected to four nodes $y_1,y_2,y_3,y_4$ via edges of weight $s$ in $G_{u,v}$. 
Then each of the nodes $x_1^1,\dots,x_1^a$ is connected to $y_1,y_2,y_3,y_4$ via 
edges of weight $s$ in $G_{u,v}'$. 
%%%%%%%%%%%%%%%%%%%%%%%%%%%%%%%%%%%%%%%%%%%%%%%%%%%%%%%%%%%%%%%%%%%%%%%%%%%%%%%%%%%%%%%%%%%%%
%%%%%%%%%%%%%%%%%%%%%%%%%%%%%%%%%%%%%%%%%%%%%%%%%%%%%%%%%%%%%%%%%%%%%%%%%%%%%%%%%%%%%%%%%%%%%
%%%%%%%%%%%%%%%%%%%%%%%%%%%%%%%%%%%%%%%%%%%%%%%%%%%%%%%%%%%%%%%%%%%%%%%%%%%%%%%%%%%%%%%%%%%%%
\begin{itemize}
\item
As a first attempt, one may simulate the answer to the query 
$(x_1^j,s)$
by performing a query 
$(x_1,s)$ on 
$G_{u,v}$. However, this will not provide new nodes with the correct probabilities required for
random uniform selection among not-yet-explored nodes. For example, suppose that 
$\bbbless$
already made the query 
$(x_1^1,s)$ giving the node $y_1$. 
If now 
$\bbbless$
makes another query 
$(x_1^2,s)$ then such a simulation will return 
a node uniformly randomly from the set of nodes 
$\{y_2,y_3,y_4\}$ but the correct simulation would have been to select 
a node uniformly randomly from the set of nodes 
$\{y_1,y_2,y_3,y_4\}$.
Moreover, if 
$\bbbless$
has already made the queries 
$(x_1^1,s), (x_1^2,s), (x_1^3,s), (x_1^4,s)$
using such a simulation then this simulation of a new query 
$(x_1^5,s)$
will simply return the special symbol.
%%%%%%%%%%%%%%%%%%%%%%%%%%%%%%%%%%%%%%%%%%%%%%%%%%%%%%%%%%%%%%%%%%%%%%%%%%%%%%%%%%%%%%%%%%%%%
\item
As a second attempt, to simulate the answer to a query 
$(x_1^j,s)$
one may first check if the answer to a query 
$(x_1^{j'},s)$
for some $j'\neq j$ is already available, and if so simply return that answer. 
But, in this case, the answers to the queries 
$(x_1^j,s)$
and 
$(x_1^{j'},s)$
will \emph{not} be statistically independent.
\end{itemize}
%%%%%%%%%%%%%%%%%%%%%%%%%%%%%%%%%%%%%%%%%%%%%%%%%%%%%%%%%%%%%%%%%%%%%%%%%%%%%%%%%%%%%%%%%%%%%
%%%%%%%%%%%%%%%%%%%%%%%%%%%%%%%%%%%%%%%%%%%%%%%%%%%%%%%%%%%%%%%%%%%%%%%%%%%%%%%%%%%%%%%%%%%%%
To address 
these and other 
subtleties
we design 
Algorithm 
$\bbbless$
to handle all queries of the form 
$(x_i^j,s)$
\textbf{for each specific} $i$ and $s$
in the following manner. 
Let $\cS_{x_i,s}$ be the set of (not initially known to 
$\bbbless$) 
$\sigma_{i,s}=|\cS_{x_i,s}| $ nodes connected to $x_i$ in $G_{u,v}$ via edges of weight $s$.
%%%%%%%%%%%%%%%%%%%%%%%%%%%%%%%%%%%%%%%%%%%%%%%%%%%%%%%%%%%%%%%%%%%%%%%%%%%%%%%%%%%%%%%%%%%%%
%%%%%%%%%%%%%%%%%%%%%%%%%%%%%%%%%%%%%%%%%%%%%%%%%%%%%%%%%%%%%%%%%%%%%%%%%%%%%%%%%%%%%%%%%%%%%
%%%%%%%%%%%%%%%%%%%%%%%%%%%%%%%%%%%%%%%%%%%%%%%%%%%%%%%%%%%%%%%%%%%%%%%%%%%%%%%%%%%%%%%%%%%%%
\begin{description}
\item[(\emph{i})]
If \emph{not} already done before, we make one new weighted selective degree query 
$(x_i,s)$ 
on $G_{u,v}$ 
giving us the value of 
$\sigma_{i,s}$ (if the value of $\sigma_{i,s}$ \emph{is} already available we simply use it without making a query).
%%%%%%%%%%%%%%%%%%%%%%%%%%%%%%%%%%%%%%%%%%%%%%%%%%%%%%%%%%%%%%%%%%%%%%%%%%%%%%%%%%%%%%%%%%%%%
\item[(\emph{ii})]
For each $x_i^j$, we keep a count 
$\kappa_{x_i^j,s}$
of how many times the query 
$(x_i^j,s)$
has been asked involving the node 
$x_i^j$ 
before the current query 
and store the answers to these queries in a set 
$\cT_{x_i^j,s}$.
We also maintain $\cT_{i,s}= \cup_{j=1}^a \cT_{x_i^j,s}$
and 
$\kappa_{i,s}= |\cT_{i,s}|$.
Note the following:
%%%%%%%%%%%%%%%%%%%%%%%%%%%%%%%%%%%%%%%%%%%%%%%%%%%%%%%%%%%%%%%%%%%%%%%%%%%%%%%%%%%%%%%%%%%%%
\begin{itemize}
\item
If 
$
{
\kappa_{x_i^j,s} < \sigma_{i,s} 
}
$
then 
performing a new weighted neighbor query 
$\pmb{(x_i^j,s)}$ on $\pmb{G_{u,v}'}$
\emph{must} return
a node uniformly at random from 
the set of nodes 
$
{\Lambda_{x_i^j,s} = 
\cS_{x_i,s}
\setminus 
\cT_{x_i^j,s}
}$
with probability
${\nicefrac{1}{ \lambda_{x_i^j,s} }}$
where
$\lambda_{x_i^j,s} = |\Lambda_{x_i^j,s}|=
\sigma_{i,s} - \kappa_{x_i^j,s}
$.
%%%%%%%%%%%%%%%%%%%%%%%%%%%%%%%%%%%%%%%%%%%%%%%%%%%%%%%%%%%%%%%%%%%%%%%%%%%%%%%%%%%%%%%%%%%%%
\item
If
$
{
\kappa_{i,s} < \sigma_{i,s} 
}
$
then
performing a new weighted neighbor query 
${(x_i,s)}$ on $\pmb{G_{u,v}}$
returns
a node uniformly at random from 
the set of nodes 
$
{
\Lambda_{i,s} = 
\cS_{x_i,s}\setminus \cT_{i,s}
}
$
with probability
$
\nicefrac {1} {
\lambda_{i,s}
}
$
where 
$\lambda_{i,s} = | \Lambda_{i,s} | = 
\sigma_{i,s} 
-
\kappa_{i,s}  
$.
%%%%%%%%%%%%%%%%%%%%%%%%%%%%%%%%%%%%%%%%%%%%%%%%%%%%%%%%%%%%%%%%%%%%%%%%%%%%%%%%%%%%%%%%%%%%%
\item
Note that we know all the elements of 
$\cT_{i,s}$;  
in particular, this means that \textbf{we \emph{can} sample a node from a subset of 
$\pmb{\cT_{i,s}}$ uniformly at random}.
%%%%%%%%%%%%%%%%%%%%%%%%%%%%%%%%%%%%%%%%%%%%%%%%%%%%%%%%%%%%%%%%%%%%%%%%%%%%%%%%%%%%%%%%%%%%%
\end{itemize}
%%%%%%%%%%%%%%%%%%%%%%%%%%%%%%%%%%%%%%%%%%%%%%%%%%%%%%%%%%%%%%%%%%%%%%%%%%%%%%%%%%%%%%%%%%%%%
%%%%%%%%%%%%%%%%%%%%%%%%%%%%%%%%%%%%%%%%%%%%%%%%%%%%%%%%%%%%%%%%%%%%%%%%%%%%%%%%%%%%%%%%%%%%%
\item[(\emph{iii})]
For a query 
$(x_i^j,s)$, 
we have the following cases.
%%%%%%%%%%%%%%%%%%%%%%%%%%%%%%%%%%%%%%%%%%%%%%%%%%%%%%%%%%%%%%%%%%%%%%%%%%%%%%%%%%%%%%%%%%%%%
%%%%%%%%%%%%%%%%%%%%%%%%%%%%%%%%%%%%%%%%%%%%%%%%%%%%%%%%%%%%%%%%%%%%%%%%%%%%%%%%%%%%%%%%%%%%%
%%%%%%%%%%%%%%%%%%%%%%%%%%%%%%%%%%%%%%%%%%%%%%%%%%%%%%%%%%%%%%%%%%%%%%%%%%%%%%%%%%%%%%%%%%%%%
\begin{enumerate}[label=$\blacktriangleright$,leftmargin=*]
%%%%%%%%%%%%%%%%%%%%%%%%%%%%%%%%%%%%%%%%%%%%%%%%%%%%%%%%%%%%%%%%%%%%%%%%%%%%%%%%%%%%%%%%%%%%%
\item
\textbf{Case I: $\pmb{\kappa_{i,s}= \sigma_{i,s}}$}. 
In this case 
$\cT_{i,s}= \cS_{x_i,s}$.
%%%%%%%%%%%%%%%%%%%%%%%%%%%%%%%%%%%%%%%%%%%%%%%%%%%%%%%%%%%%%%%%%%%%%%%%%%%%%%%%%%%%%%%%%%%%%
\begin{enumerate}[label=$\blacktriangleright$,leftmargin=*]
\item
\textbf{Case I-a: $\pmb{\kappa_{x_i^j,s}<\kappa_{i,s}}$}. 
We select a node uniformly at random from the set 
$
\cT_{i,s} \setminus \cT_{x_i^j,s}
=
\cS_{x_i,s} \setminus \cT_{x_i^j,s}
$
and return it as the answer to the query.
%%%%%%%%%%%%%%%%%%%%%%%%%%%%%%%%%%%%%%%%%%%%%%%%%%%%%%%%%%%%%%%%%%%%%%%%%%%%%%%%%%%%%%%%%%%%%
\item
\textbf{Case I-b: $\pmb{\kappa_{x_i^j,s}=\kappa_{i,s}}$}. 
We return an invalid entry 
as the answer to the query.
\end{enumerate}
%%%%%%%%%%%%%%%%%%%%%%%%%%%%%%%%%%%%%%%%%%%%%%%%%%%%%%%%%%%%%%%%%%%%%%%%%%%%%%%%%%%%%%%%%%%%%
%%%%%%%%%%%%%%%%%%%%%%%%%%%%%%%%%%%%%%%%%%%%%%%%%%%%%%%%%%%%%%%%%%%%%%%%%%%%%%%%%%%%%%%%%%%%%
\item
\textbf{Case II: $\pmb{\kappa_{i,s}<\sigma_{i,s}}$}. 
We make a new query 
$(x_i,s)$ on $G_{u,v}$
giving us a node $y_p\in \Lambda_{i,s}=\cS_{x_i,s}\setminus \cT_{i,s}$
with the property that 
$
\Pr [ y_p\in \Lambda_{i,s} \text{ is returned}\!]
=
\frac{1}{
\lambda_{i,s}
}
$.
%%%%%%%%%%%%%%%%%%%%%%%%%%%%%%%%%%%%%%%%%%%%%%%%%%%%%%%%%%%%%%%%%%%%%%%%%%%%%%%%%%%%%%%%%%%%%
%%%%%%%%%%%%%%%%%%%%%%%%%%%%%%%%%%%%%%%%%%%%%%%%%%%%%%%%%%%%%%%%%%%%%%%%%%%%%%%%%%%%%%%%%%%%%
\begin{enumerate}[label=$\blacktriangleright$,leftmargin=*]
\item
\textbf{Case II-a: $\pmb{\kappa_{x_i^j,s} = \kappa_{i,s}}$}.
For this case,
$
\cT_{i,s}
=
\cT_{x_i^j,s}
$
and 
$\lambda_{i,s}=\Lambda_{x_i^j,s}$.
We return the 
node $y_p$ 
as the answer to the query and 
update all relevant sets and counters appropriately.
%%%%%%%%%%%%%%%%%%%%%%%%%%%%%%%%%%%%%%%%%%%%%%%%%%%%%%%%%%%%%%%%%%%%%%%%%%%%%%%%%%%%%%%%%%%%%
\item
\textbf{Case II-b: $\pmb{\kappa_{x_i^j,s} < \kappa_{i,s}}$}.
%%%
For this case
$\cT_{x_i^j,s} \subset \cT_{i,s} \subset \cS_{x_i,s}$,
$\lambda_{i,s}= |\cS_{x_i,s}\setminus \cT_{i,s}|>0$, and 
$
\lambda_{x_i^j,s} = | \cS_{x_i,s} \setminus \cT_{x_i^j,s} | > \lambda_{i,s}
$.
We sample the nodes in 
$
\{ y_p \} \bigcup \big( \cT_{i,s} \setminus \cT_{x_i^j,s} \big)
$
based on the following probability distribution 
and update all relevant sets and counters appropriately:
%%%%%%%%%%%%%%%%%%%%%%%%%%%%%%%%%%%%%%%%%%%%%%%%%%%%%%%%%%%%%%%%%%%%%%%%%%%%%%%%%%%%%%%%%%%%%
\begin{gather*}
\textstyle
\Pr [y_p \text{ is selected}\!] = \frac{\lambda_{i,s}}{\lambda_{x_i^j,s}}
%%%
\\
%%%
\textstyle
\forall \, y_\ell\in \cT_{i,s} \setminus \cT_{x_i^j,s} \,:\,
\Pr[ y_\ell \text{ is selected}\!] = \frac{1}{\lambda_{x_i^j,s}}
\end{gather*}
%%%%%%%%%%%%%%%%%%%%%%%%%%%%%%%%%%%%%%%%%%%%%%%%%%%%%%%%%%%%%%%%%%%%%%%%%%%%%%%%%%%%%%%%%%%%%
Thus, 
the answer to the query
$(x_i^j,s)$
is selected \emph{uniformly at random} from the set 
$\Lambda_{x_i^j,s}=\cS_{x_i,s}\setminus \cT_{x_i^j,s}$ since 
%%%%%%%%%%%%%%%%%%%%%%%%%%%%%%%%%%%%%%%%%%%%%%%%%%%%%%%%%%%%%%%%%%%%%%%%%%%%%%%%%%%%%%%%%%%%%
\begin{gather*}
\begin{array}{r l}
\textstyle
\forall\, y_\ell \in  \cS_{x_i,s}\setminus \cT_{i,s} \, : &
\textstyle
\Pr[ y_\ell \text{ is selected}\!] = \frac{1}{\lambda_{i,s}}\times\frac{\lambda_{i,s}}{\lambda_{x_i^j,s}} =
		 = \frac{1}{\lambda_{x_i^j,s} }
\\
\textstyle
\forall \, y_\ell\in \cT_{i,s} \setminus \cT_{x_i^j,s} \, : & 
\textstyle
\Pr[ y_\ell \text{ is selected}\!] 
		 = \frac{1}{\lambda_{x_i^j,s} }
\end{array}
\end{gather*}
%%%%%%%%%%%%%%%%%%%%%%%%%%%%%%%%%%%%%%%%%%%%%%%%%%%%%%%%%%%%%%%%%%%%%%%%%%%%%%%%%%%%%%%%%%%%%
\end{enumerate}
%%%%%%%%%%%%%%%%%%%%%%%%%%%%%%%%%%%%%%%%%%%%%%%%%%%%%%%%%%%%%%%%%%%%%%%%%%%%%%%%%%%%%%%%%%%%%
%%%%%%%%%%%%%%%%%%%%%%%%%%%%%%%%%%%%%%%%%%%%%%%%%%%%%%%%%%%%%%%%%%%%%%%%%%%%%%%%%%%%%%%%%%%%%
\end{enumerate}
%%%%%%%%%%%%%%%%%%%%%%%%%%%%%%%%%%%%%%%%%%%%%%%%%%%%%%%%%%%%%%%%%%%%%%%%%%%%%%%%%%%%%%%%%%%%%
%%%%%%%%%%%%%%%%%%%%%%%%%%%%%%%%%%%%%%%%%%%%%%%%%%%%%%%%%%%%%%%%%%%%%%%%%%%%%%%%%%%%%%%%%%%%%
\end{description}
%%%%%%%%%%%%%%%%%%%%%%%%%%%%%%%%%%%%%%%%%%%%%%%%%%%%%%%%%%%%%%%%%%%%%%%%%%%%%%%%%%%%%%%%%%%%%
%%%%%%%%%%%%%%%%%%%%%%%%%%%%%%%%%%%%%%%%%%%%%%%%%%%%%%%%%%%%%%%%%%%%%%%%%%%%%%%%%%%%%%%%%%%%%
%%%%%%%%%%%%%%%%%%%%%%%%%%%%%%%%%%%%%%%%%%%%%%%%%%%%%%%%%%%%%%%%%%%%%%%%%%%%%%%%%%%%%%%%%%%%%
\item
\textbf{Case 3.2: 
$\pmb{q_i'}$ is of the form 
$\pmb{(r_i,s)}$}.
We keep a count 
$\nu(r_i)$
of how many times the query 
$(r_i,3)$
has been asked involving the node 
$r_i$
before the current query, 
and store the answers to these queries in the set 
$\cS_{r_i}$.
%%%%
If $s\neq 3$ 
or 
$\nu(r_i)=\deg_G(v)+1$
we return the special symbol.
Otherwise, 
we return a node 
selected uniformly at random from the set of nodes 
$\{ y_1,\dots,y_{\deg_G(v)+1} \} 
\setminus 
\cS_{r_i}
$ 
as the answer 
and update all relevant sets and counters appropriately.
%%%%%%%%%%%%%%%%%%%%%%%%%%%%%%%%%%%%%%%%%%%%%%%%%%%%%%%%%%%%%%%%%%%%%%%%%%%%%%%%%%%%%%%%%%%%%
%%
\item
\textbf{Case 3.3: 
$\pmb{q_i'}$ is of the form 
$\pmb{(y_\ell,s)}$}.
This case is similar in spirit to  
Case 3.1.
%%%
We show how 
to handle all queries of the form 
$(y_\ell,s)$
\textbf{for each specific} $\ell$ and $s$.
%%%%%%%%%%%%%%%%%%%%%%%%%%%%%%%%%%%%%%%%%%%%%%%%%%%%%%%%%%%%%%%%%%%%%%%%%%%%%%%%%%%%%%%%%%%%%
%%%%%%%%%%%%%%%%%%%%%%%%%%%%%%%%%%%%%%%%%%%%%%%%%%%%%%%%%%%%%%%%%%%%%%%%%%%%%%%%%%%%%%%%%%%%%
\begin{description}
\item[(\emph{i})]
Assume without loss of generality that 
$y_\ell$ 
is connected,
via edges of weight $s$,
to (not initially known to $\bbbless$)
a set
$\cS_{\nu_1}=\{x_1,\dots,x_{\nu_1}\}\subseteq \{x_1,\dots,x_{\deg_G(u)+1}\}$
of $\nu_1=|\cS_{\nu_1}|$ nodes. 
%%%%
If \emph{not} already done before, we make one new weighted selective degree query 
$(y_\ell,s)$ 
on $G_{u,v}$ 
giving us the value of 
$\nu_1$
(if $\nu_1$ \emph{is} already known we simply use it without making a query).
%%%%%%%%%%%%%%%%%%%%%%%%%%%%%%%%%%%%%%%%%%%%%%%%%%%%%%%%%%%%%%%%%%%%%%%%%%%%%%%%%%%%%%%%%%%%%
\item[(\emph{ii})]
Define the set $\cS_{\nu_2}$ 
of $\nu_2=|\cS_{\nu_2}|\in\{0,b\}$ nodes
as 
$\cS_{\nu_2}=\{r_1,\dots,r_b\}$ if $s=3$ and $\cS_{\nu_2}=\emptyset$ otherwise.
Note that we know the value of $\nu_2$ since we know the value of $s$.
%%%%%%%%%%%%%%%%%%%%%%%%%%%%%%%%%%%%%%%%%%%%%%%%%%%%%%%%%%%%%%%%%%%%%%%%%%%%%%%%%%%%%%%%%%%%%
\item[(\emph{iii})]
We keep a count 
$\kappa$
of how many times the query 
$(y_\ell,s)$
has been asked involving the node 
$y_\ell$ 
before the current query, and let 
$\cT_\kappa$ be the set of those $\kappa=|\cT_\kappa|$
nodes \textbf{of} $\pmb{G_{u,v}'}$ that have been returned because of these prior queries.
%%%%%
\textbf{Note that if 
$\pmb{\kappa<a\,\nu_1+\nu_2}$ 
then 
performing a new weighted neighbor query 
$\pmb{(y_\ell,s)}$ on $\pmb{G_{u,v}'}$
\emph{must} return
a node uniformly at random from 
the set of nodes $\pmb{\Lambda_\kappa = 
\big ( \cup_{i=1}^{\nu_1} \cup_{j=1}^a \{x_i^j\} \cup 
\cS_{\nu_2}
\big) 
\setminus \cT_\kappa
}$
with probability $\pmb{\nicefrac{1}{\lambda_\kappa}}$ where $\pmb{\lambda_\kappa = |\Lambda_\kappa| = 
(a\,\nu_1+\nu_2)-\kappa}$}.
%%%%%%%%%%%%%%%%%%%%%%%%%%%%%%%%%%%%%%%%%%%%%%%%%%%%%%%%%%%%%%%%%%%%%%%%%%%%%%%%%%%%%%%%%%%%%
%%%%%%%%%%%%%%%%%%%%%%%%%%%%%%%%%%%%%%%%%%%%%%%%%%%%%%%%%%%%%%%%%%%%%%%%%%%%%%%%%%%%%%%%%%%%%
\item[(\emph{iv})]
Assume without loss of generality that 
$\cS_{\nu_1}'=\{x_1,x_2,\dots,x_{\nu_1'}\}\subseteq \cS_{\nu_1}$
be the set of $\nu_1'=|\cS_{\nu_1}'|\leq \min \{\kappa, \nu_1 \} \}$ nodes 
\textbf{in} $\pmb{G_{u,v}}$ that have been 
returned as a result of the queries on $G_{u,v}$ 
due to the simulation of prior 
$\kappa$
queries on $G_{u,v}'$.
Note that 
\textbf{if $\pmb{\nu_1'<\nu_1}$ then 
performing a new weighted neighbor query 
$\pmb{(y_\ell,s)}$ on $\pmb{G_{u,v}}$
returns a \emph{new} node 
uniformly at random
from the set of nodes 
$\pmb{\Phi=\cS_{\nu_1} \setminus \cS_{\nu_1}'}$
with probability $\pmb{\nicefrac{1}{\varphi}}$ where 
$\pmb{\varphi=|\Phi|=\nu_1-\nu_1'}$}.
%%%
%%%%%%%%%%%%%%%%%%%%%%%%%%%%%%%%%%%%%%%%%%%%%%%%%%%%%%%%%%%%%%%%%%%%%%%%%%%%%%%%%%%%%%%%%%%%%
\item[(\emph{v})]
Define the subset $\Lambda_\kappa'\subseteq\Lambda_\kappa$ of nodes of $G_{u,v}'$ 
as 
%%%%%%%%%%%%%%%%%%%%%%%%%%%%%%%%%%%%%%%%%%%%%%%%%%%%%%%%%%%%%%%%%%%%%%%%%%%%%%%%%%%%%%%%%%%%%
$
\Lambda_\kappa' = \big( \cup_{i=1}^{\nu_1'} \cup_{j=1}^a \{x_i^j\} \cup \cS_{\nu_2} \big) \setminus \cT_\kappa
$.
%%%%%%%%%%%%%%%%%%%%%%%%%%%%%%%%%%%%%%%%%%%%%%%%%%%%%%%%%%%%%%%%%%%%%%%%%%%%%%%%%%%%%%%%%%%%%
Note that we know all the elements of $\Lambda_\kappa'$ and 
$
\lambda_\kappa' = |\Lambda_\kappa'| = (a \nu_1' + \nu_2 ) - \kappa 
$.
In particular, this means that \textbf{we \emph{can} sample a node from $\pmb{\Lambda_\kappa'}$ uniformly at random}.
%%%%%%%%%%%%%%%%%%%%%%%%%%%%%%%%%%%%%%%%%%%%%%%%%%%%%%%%%%%%%%%%%%%%%%%%%%%%%%%%%%%%%%%%%%%%%
\item[(\emph{vi})]
For a new query 
$(y_\ell,s)$, 
we have the following cases.
%%%%%%%%%%%%%%%%%%%%%%%%%%%%%%%%%%%%%%%%%%%%%%%%%%%%%%%%%%%%%%%%%%%%%%%%%%%%%%%%%%%%%%%%%%%%%
%%%%%%%%%%%%%%%%%%%%%%%%%%%%%%%%%%%%%%%%%%%%%%%%%%%%%%%%%%%%%%%%%%%%%%%%%%%%%%%%%%%%%%%%%%%%%
\begin{enumerate}[label=$\blacktriangleright$,leftmargin=*]
%%%%%%%%%%%%%%%%%%%%%%%%%%%%%%%%%%%%%%%%%%%%%%%%%%%%%%%%%%%%%%%%%%%%%%%%%%%%%%%%%%%%%%%%%%%%%
\item
\textbf{Case I: $\pmb{\kappa\geq\nu_1}$}. 
In this case, $\nu_1'=\nu_1$, $\Lambda_\kappa'=\Lambda_\kappa$ and $\lambda_\kappa'=\lambda_\kappa$.
%%%
We select as our answer to the query a node uniformly at random from 
$\Lambda_\kappa'$,
and update all relevant sets and counters appropriately.
%%%%%%%%%%%%%%%%%%%%%%%%%%%%%%%%%%%%%%%%%%%%%%%%%%%%%%%%%%%%%%%%%%%%%%%%%%%%%%%%%%%%%%%%%%%%%
\item
\textbf{Case II: $\pmb{\kappa<\nu_1}$}. 
In this case, 
$\nu_1'<\nu_1$,
and 
$\varphi>0$. We simulate the query as follows.
%%%%%%%%%%%%%%%%%%%%%%%%%%%%%%%%%%%%%%%%%%%%%%%%%%%%%%%%%%%%%%%%%%%%%%%%%%%%%%%%%%%%%%%%%%%%%
\begin{itemize}
\item
We make a new query 
$(y_\ell,s)$ on $G_{u,v}$
giving us a node
$x_{p}\in\Phi$ for $p\in\{\nu_1'+1,\dots,\nu_1\}$ 
with probability $\nicefrac{1}{\varphi}$.
We select $j\in\{1,\dots,a\}$ uniformly at random giving us a node $x_p^j$.
%%%%%%%%%%%%%%%%%%%%%%%%%%%%%%%%%%%%%%%%%%%%%%%%%%%%%%%%%%%%%%%%%%%%%%%%%%%%%%%%%%%%%%%%%%%%%
\item
We \emph{sample} a node from $\{x_{p}^j\}\cup\Lambda_\kappa'$ based on the following probability distribution
and update all relevant sets and counters appropriately:
%%%%%%%%%%%%%%%%%%%%%%%%%%%%%%%%%%%%%%%%%%%%%%%%%%%%%%%%%%%%%%%%%%%%%%%%%%%%%%%%%%%%%%%%%%%%%
\begin{gather*}
\textstyle
\Pr[ x_p^j \text{ is selected } \! ] = \frac{a \varphi}{ \lambda_\kappa}
\\
\textstyle
\forall \, x_i^j \in \Lambda_\kappa' \,:\,
\Pr[ x_i^j \text{ is selected } \! ] = \frac{1}{\lambda_\kappa}
\end{gather*}
%%%%%%%%%%%%%%%%%%%%%%%%%%%%%%%%%%%%%%%%%%%%%%%%%%%%%%%%%%%%%%%%%%%%%%%%%%%%%%%%%%%%%%%%%%%%%
Note that 
$
\Pr[ x_i^j\in  \Lambda_\kappa \setminus \Lambda_\kappa'  \text{ is selected } \! ] 
=
\frac{1}{a \varphi} \times \frac{a \varphi}{ \lambda_\kappa}
=
\frac{1}{ \lambda_\kappa}
$, 
as desired.
To verify that all the probabilities add up to $1$, note that 
$
\Pr[ x_p^j  \text{ is selected } \! ] 
+
\sum_{x_i^j \in \Lambda_\kappa'}
\Pr[ x_i^j \in \Lambda_\kappa' \text{ is selected } \! ] 
=
\frac { a (\nu_1 - \nu_1') } { a\nu_1 + \nu_2 - \kappa }
+
(a\nu_1'+\nu_2-\kappa) \times \frac{1} { a\nu_1 + \nu_2 - \kappa}
=1
$.
\end{itemize}
%%%%%%%%%%%%%%%%%%%%%%%%%%%%%%%%%%%%%%%%%%%%%%%%%%%%%%%%%%%%%%%%%%%%%%%%%%%%%%%%%%%%%%%%%%%%%
\end{enumerate}
%%%%%%%%%%%%%%%%%%%%%%%%%%%%%%%%%%%%%%%%%%%%%%%%%%%%%%%%%%%%%%%%%%%%%%%%%%%%%%%%%%%%%%%%%%%%%
%%%%%%%%%%%%%%%%%%%%%%%%%%%%%%%%%%%%%%%%%%%%%%%%%%%%%%%%%%%%%%%%%%%%%%%%%%%%%%%%%%%%%%%%%%%%%
%%%%%%%%%%%%%%%%%%%%%%%%%%%%%%%%%%%%%%%%%%%%%%%%%%%%%%%%%%%%%%%%%%%%%%%%%%%%%%%%%%%%%%%%%%%%%
\end{description}
%%%%%%%%%%%%%%%%%%%%%%%%%%%%%%%%%%%%%%%%%%%%%%%%%%%%%%%%%%%%%%%%%%%%%%%%%%%%%%%%%%%%%%%%%%%%%
%%%%%%%%%%%%%%%%%%%%%%%%%%%%%%%%%%%%%%%%%%%%%%%%%%%%%%%%%%%%%%%%%%%%%%%%%%%%%%%%%%%%%%%%%%%%%
\end{enumerate}
%%%%%%%%%%%%%%%%%%%%%%%%%%%%%%%%%%%%%%%%%%%%%%%%%%%%%%%%%%%%%%%%%%%%%%%%%%%%%%%%%%%%%%%%%%%%%
\end{enumerate}
%%%%%%%%%%%%%%%%%%%%%%%%%%%%%%%%%%%%%%%%%%%%%%%%%%%%%%%%%%%%%%%%%%%%%%%%%%%%%%%%%%%%%%%%%%%%%
\end{proof}

\newcommand{\bbbe}{\mathfrak{B}}
\newcommand{\rr}{\mathfrak{r}}

%%%%%%%%%%%%%%%%%%%%%%%%%%%%%%%%%%%%%%%%%%%%%%%%%%%%%%%%%%%%%%%%%%%%%%%%%%%%%%%%%%%%%%%%
\section{Computing $\mathfrak{C}_{G}(v)$ and $\mathfrak{C}_{\mathrm{avg}}(G)$
using ``black box'' additive approximation algorithms for $\mathfrak{C}_{G}(e)$}
\label{sec-ricci-nodes}
%%%%%%%%%%%%%%%%%%%%%%%%%%%%%%%%%%%%%%%%%%%%%%%%%%%%%%%%%%%%%%%%%%%%%%%%%%%%%%%%%%%%%%%%

In this section we provide efficient local algorithms to compute 
$\mathfrak{C}_{G}(v)$ and 
$\mathfrak{C}_{\mathrm{avg}}(G)$.
The following assumptions are used by our algorithms:
%%%%%%%%%%%%%%%%%%%%%%%%%%%%%%%%%%%%%%%%%%%%%%%%%%%%%%%%%%%%%%%%%%%%%%%%%%%%%%%%%%%%%%%%%%%%%
\begin{enumerate}[label=$\triangleright$]
\item
For a fixed $\rr$,
we have an efficient local algorithm $\bbbe$ for an additive $\rr$-approximation, say 
$\mathfrak{C}_{G}^{\,\rr}(e)$, 
of $\mathfrak{C}_{G}(e)$ for an edge $e$.
%%%%%%%%%%%%%%%%%%%%%%%%%%%%%%%%%%%%%%%%%%%%%%%%%%%%%%%%%%%%%%%%%%%%%%%%%%%%%%%%%%%%%%%%%%%%%
\item
We have access to the \emph{neighbor query model} mentioned in Section~\ref{sec-query-defn}.
\end{enumerate}
%%%%%%%%%%%%%%%%%%%%%%%%%%%%%%%%%%%%%%%%%%%%%%%%%%%%%%%%%%%%%%%%%%%%%%%%%%%%%%%%%%%%%%%%%%%%%

%%%%%%%%%%%%%%%%%%%%%%%%%%%%%%%%%%%%%%%%%%%%%%%%%%%%%%%%%%%%%%%%%%%%%%%%%%%%%%%%%%%%%%%%
\begin{lemma}\label{lem-ave}
With probability at least $\nicefrac{2}{3}$ the following two claims hold.
%%%%%%%%%%%%%%%%%%%%%%%%%%%%%%%%%%%%%%%%%%%%%%%%%%%%%%%%%%%%%%%%%%%%%%%%%%%%%%%%%%%%%%%%%%%%%
\begin{enumerate}[label=\emph{(\emph{\alph*})}]
%%%%%%%%%%%%%%%%%%%%%%%%%%%%%%%%%%%%%%%%%%%%%%%%%%%%%%%%%%%%%%%%%%%%%%%%%%%%%%%%%%%%%%%%%%%%%
\item
We can compute an additive $2\rr$-approximation of 
$\mathfrak{C}_{G}(v)$
using $O(1/\rr^2)$ neighbor queries and 
$O(1/\rr^2)$ 
invocations of algorithm $\bbbe$
on the edges incident on $v$,
and 
%%%%%%%%%%%%%%%%%%%%%%%%%%%%%%%%%%%%%%%%%%%%%%%%%%%%%%%%%%%%%%%%%%%%%%%%%%%%%%%%%%%%%%%%
\item
If the degrees of all the nodes of $G$ are know then 
we can compute an additive $2\rr$-approximation of 
$\mathfrak{C}_{\mathrm{avg}}(G)$
using $O(1/\rr^2)$ neighbor queries and 
$O(1/\rr^2)$ 
invocations of algorithm $\bbbe$
over all edges in $G$.
\end{enumerate}
%%%%%%%%%%%%%%%%%%%%%%%%%%%%%%%%%%%%%%%%%%%%%%%%%%%%%%%%%%%%%%%%%%%%%%%%%%%%%%%%%%%%%%%%
\end{lemma}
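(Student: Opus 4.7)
The plan is to apply a sample-and-approximate scheme whose analysis rests on Hoeffding's inequality and the boundedness of Ollivier--Ricci edge curvatures. Recall from Section~\ref{sec-defn-olli} that $-2\le\mathfrak{C}_G(e)\le 1$ for every edge $e$, so any $\mathfrak{r}$-additive approximation produced by $\bbbe$ lies in a bounded interval of width $3+2\mathfrak{r}=O(1)$. This is what makes a constant-size sample suffice.

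For part~(a), I would sample $t=\Theta(1/\mathfrak{r}^2)$ edges $e_1,\dots,e_t$ incident on $v$ uniformly and independently at random; each sample costs one neighbor query at $v$. For each sampled $e_i$ I would invoke algorithm $\bbbe$ once to obtain $X_i=\mathfrak{C}_G^{\,\mathfrak{r}}(e_i)$, and output $\widehat{\mathfrak{C}}(v)=\frac{1}{t}\sum_{i=1}^t X_i$. Conditioned on the sampled edge, $|X_i-\mathfrak{C}_G(e_i)|\le\mathfrak{r}$, so
\[
\Ave{X_i}\;\in\;\Big[\mathfrak{C}_G(v)-\mathfrak{r},\;\mathfrak{C}_G(v)+\mathfrak{r}\Big].
\]
Since each $X_i\in[-2-\mathfrak{r},\,1+\mathfrak{r}]$, Hoeffding's inequality gives $\Pr\big[\,|\widehat{\mathfrak{C}}(v)-\Ave{X_i}|>\mathfrak{r}\,\big]\le 2\exp(-c\,t\,\mathfrak{r}^2)$ for some absolute constant $c>0$; choosing the hidden constant in $t$ sufficiently large makes this at most $\tfrac{1}{3}$. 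Combining the two $\mathfrak{r}$ error contributions via the triangle inequality yields an additive $2\mathfrak{r}$-approximation with probability at least $\tfrac{2}{3}$.

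For part~(b), the only change is how we draw a uniformly random edge of $G$. Since all degrees are known, $|E|=\tfrac{1}{2}\sum_{v}\deg_G(v)$ is known as well, and one can sample a node $v$ with probability $\deg_G(v)/(2|E|)$ and then issue a single neighbor query at $v$ to return a uniformly random neighbor $u$; the resulting (undirected) edge $\{u,v\}$ is uniform over $E$. Applying the same $t=\Theta(1/\mathfrak{r}^2)$-sample estimator to the random variables $X_i=\mathfrak{C}_G^{\,\mathfrak{r}}(e_i)$, Hoeffding's inequality combined with the $\mathfrak{r}$-additive guarantee of $\bbbe$ again yields an additive $2\mathfrak{r}$-approximation of $\mathfrak{C}_{\mathrm{avg}}(G)$ with probability at least $\tfrac{2}{3}$.

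The only subtlety — and the main thing worth double-checking — is that the $t$ invocations of $\bbbe$ can be treated as producing independent bounded random variables suitable for Hoeffding; this holds because fresh internal randomness can be used at each call and the edges are sampled independently, so conditioning on the multiset of sampled edges makes the $X_i$'s conditionally independent while keeping $\Ave{X_i\mid e_i}$ within $\mathfrak{r}$ of $\mathfrak{C}_G(e_i)$. Everything else is a standard concentration argument, and the overall query/invocation budget is $O(1/\mathfrak{r}^2)$ as claimed.
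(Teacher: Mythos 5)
Your proposal follows essentially the same route as the paper's own proof: sample $\Theta(1/\rr^2)$ incident edges (or uniform edges of $G$ in part (b)), average the $\rr$-approximations returned by $\bbbe$, and invoke Hoeffding's inequality on the bounded summands together with a triangle-inequality decomposition of the error into a $\rr$-bias from $\bbbe$ plus a $\rr$-deviation controlled by concentration. The one small discrepancy worth noting is that the paper's neighbor query model returns a random \emph{not-yet-explored} neighbor, so the paper's $k'$ samples are drawn without replacement (and the paper explicitly caps $k'=\min\{k,\deg_G(v)\}$ for the small-degree case), whereas you describe i.i.d.\ with-replacement sampling; this is immaterial since Hoeffding's bound also holds for sampling without replacement, but your write-up should either make this observation or explain how to simulate with-replacement draws from the given query model. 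Your explicit widening of the range to $[-2-\rr,\,1+\rr]$ is actually a cleaner accounting than the paper's.
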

%%%%%%%%%%%%%%%%%%%%%%%%%%%%%%%%%%%%%%%%%%%%%%%%%%%%%%%%%%%%%%%%%%%%%%%%%%%%%%%%%%%%%%%%

%%%%%%%%%%%%%%%%%%%%%%%%%%%%%%%%%%%%%%%%%%%%%%%%%%%%%%%%%%%%%%%%%%%%%%%%%%%%%%%%%%%%%%%%
\begin{proof}~\\

\vspace*{-0.1in}
\noindent
\textbf{(\emph{a})}
Let $k$ be a parameter to be specified later. 
We use $k'=\min\{k,\deg_G(v)\}$ neighbor queries to get $k'$ nodes adjacent to $v$, say $u_1,\dots,u_{k'}$, 
compute 
$
\mathfrak{C}_{G}^{\,\rr}(v,u_1), 
\dots,
\mathfrak{C}_{G}^{\,\rr}(v,u_{k'}) 
$
using algorithm $\bbbe$,
and return 
$
\widetilde{\mathfrak{C}_{G}}(v)= 
\frac{1}{k'}\sum_{j=1}^{k'} 
\mathfrak{C}_{G}^{\,\rr}(v,u_j) 
$
as our answer.

If $k>\deg_G(v)$ then 
$\mathfrak{C}_{G}(v)= 
\frac{1}{k'}\sum_{j=1}^{k'} 
\mathfrak{C}_{G}(v,u_j) 
$ 
and thus 
$\widetilde{\mathfrak{C}_{G}}(v)$ is in fact an additive $\rr$-approximation of 
$\mathfrak{C}_{G}(v)$. Otherwise, assume that 
$k\leq \deg_G(v)$ and therefore $k'=k$.
For any number $x$, we use the notation $x\boxplus \rho$
to indicate a number $y$ that satisfies $x\leq y\leq x+\rho$.
Observe that
%%%%%%%%%%%%%%%%%%%%%%%%%%%%%%%%%%%%%%%%%%%%%%%%%%%%%%%%%%%%%%%%%%%%%%%%%%%%%%%%%%%%%%%%
\begin{multline*}
\Ave{\widetilde{\mathfrak{C}_{G}}(v)}
=
\frac{1}{k}\sum_{j=1}^{k} 
\Ave{\mathfrak{C}_{G}^{\,\rr}(v,u_j)} 
=
\frac{1}{k}\sum_{j=1}^{k} 
\sum_{u \in \nbr_G(u)} (\mathfrak{C}_{G}((u,v))  \boxplus \rr )\times \frac{1}{\deg_G(v)}
\\
=
\frac{1}{k}\sum_{j=1}^{k} 
( \mathfrak{C}_{G}(v) \boxplus \rr )
=
\mathfrak{C}_{G}(v) \boxplus \rr
\end{multline*}
%%%%%%%%%%%%%%%%%%%%%%%%%%%%%%%%%%%%%%%%%%%%%%%%%%%%%%%%%%%%%%%%%%%%%%%%%%%%%%%%%%%%%%%%
Since 
the $\frac{1}{k}\mathfrak{C}_{G}^{\,\rr}(v,u_j)$'s 
are mutually independent for $j=1,\dots,k$, and each 
$\frac{1}{k}\mathfrak{C}_{G}^{\,\rr}(v,u_j)$
lies in the interval 
$[-2/k,1/k]$ (\emph{cf}.\ see Section~\ref{sec-simple-bo}), 
applying Hoeffding's inequality~\cite[Theorem 2]{hoeffding63} we get
%%%%%%%%%%%%%%%%%%%%%%%%%%%%%%%%%%%%%%%%%%%%%%%%%%%%%%%%%%%%%%%%%%%%%%%%%%%%%%%%%%%%%%%%
\begin{gather*}
\Pr[\widetilde{\mathfrak{C}_{G}}(v) > \mathfrak{C}_{G}(v) + 2\rr]
\leq
\Pr[\widetilde{\mathfrak{C}_{G}}(v) >  \eX{\widetilde{\mathfrak{C}_{G}}(v)} + \rr]
\\
\textstyle
<
\exp \left( - \frac{2\rr^2} { \sum^k_{i=1} (2/k - (-1/k) )^2 } \right)
=
\exp \left(  - \frac{2}{9} k^2 \rr^2 \right)
\end{gather*}
%%%%%%%%%%%%%%%%%%%%%%%%%%%%%%%%%%%%%%%%%%%%%%%%%%%%%%%%%%%%%%%%%%%%%%%%%%%%%%%%%%%%%%%%
Setting $k=\Theta(\rr^{-2})$
we get  
$\Pr[\widetilde{\mathfrak{C}_{G}}(v) > \mathfrak{C}_{G}(v) + 2\rr] < \nicefrac{1}{3}$.

\medskip
\noindent
\textbf{(\emph{b})}
The algorithm and its proof is very similar to those in 
\textbf{(\emph{a})}.
For this case, we need to randomly sample 
$k'=\min\{ O(1/\rr^2), ,|E|\}$ 
edges $e_1,\dots,e_{k'}$ from $E$, compute 
$
\mathfrak{C}_{G}^{\,\rr}(e_1), 
\dots,
\mathfrak{C}_{G}^{\,\rr}(e_{k'}) 
$
using algorithm $\bbbe$,
and return 
$
\frac{1}{k'}\sum_{j=1}^{k'} 
\mathfrak{C}_{G}^{\,\rr}(e_j) 
$
as our answer.
The only remaining part of the proof is to show how to sample an edge uniformly at random 
from the set of edges $E$ of $G$. Since the degrees of all nodes are known, the following procedure 
can be used. We first select a node $x\in V$ with probability 
$
\frac { \deg_G(x) } { \sum_{z\in V} \deg_G(z) }
$, then we select a random neighbor of $x$, say $y$, using one 
neighbor query, and finally we select the edge $\{x,y\}$. 
The proof is completed by observing that 
%%%%%%%%%%%%%%%%%%%%%%%%%%%%%%%%%%%%%%%%%%%%%%%%%%%%%%%%%%%%%%%%%%%%%%%%%%%%%%%%%%%%%%%%
\begin{multline*}
\textstyle
\Pr[\{u,v\}\in E \text{ is selected} ]
\\
\textstyle
=
\Pr [ x\in V \text{ is selected} ] \times \Pr [ y\in\nbr_G(x) \text{ is selected} ]
\\
\textstyle
\hspace*{0.3in}
+
\Pr [ y\in V \text{ is selected} ] \times \Pr [ x\in\nbr_G(y) \text{ is selected} ]
\\
\textstyle
=
\frac { \deg_G(x) } { \sum_{z\in V} \deg_G(z) } \times \frac {1} { \deg_G(x) } + 
\frac { \deg_G(y) } { \sum_{z\in V} \deg_G(z) } \times \frac {1} { \deg_G(y) }
=
\frac{1}{|E|}
\end{multline*}
%%%%%%%%%%%%%%%%%%%%%%%%%%%%%%%%%%%%%%%%%%%%%%%%%%%%%%%%%%%%%%%%%%%%%%%%%%%%%%%%%%%%%%%%
\end{proof}
%%%%%%%%%%%%%%%%%%%%%%%%%%%%%%%%%%%%%%%%%%%%%%%%%%%%%%%%%%%%%%%%%%%%%%%%%%%%%%%%%%%%%%%%

%%%%%%%%%%%%%%%%%%%%%%%%%%%%%%%%%%%%%%%%%%%%%%%%%%%%%%%%%%%%%%%%%%%%%%%%%%%%%%%%%%%%%%%%
\section{Concluding remarks}
\label{sec-conclude}
%%%%%%%%%%%%%%%%%%%%%%%%%%%%%%%%%%%%%%%%%%%%%%%%%%%%%%%%%%%%%%%%%%%%%%%%%%%%%%%%%%%%%%%%

We hope that this paper will stimulate further attention from computer scientists
concerning the exciting interplay between notions of 
curvatures from network and non-network domains.
An obvious candidate for future research is improvement of the query complexities for local algorithms for
computing the Ollivier-Ricci curvature for networks.
%%%
Another possible future research direction is to investigate computational complexity issues of other discretizations of Ricci curvatures. 
For example, another discretization of Ricci curvature for networks proposed by 
Ollivier and Villani~\cite{OllV12}
is guided by the observation that the 
infinite-dimensional version of the
well-known {Brunn-Minkowski inequality} over $\R^n$~\cite{Ga02} 
can be {tightened} in the presence of 
a positive curvature for a smooth Riemannian manifold~\cite{cms01,cms06}.
To our knowledge, these discretizations have largely escaped computational complexity considerations.

%% The Appendices part is started with the command \appendix;
%% appendix sections are then done as normal sections
\appendix

%%%%%%%%%%%%%%%%%%%%%%%%%%%%%%%%%%%%%%%%%%%%%%%%%%%%%%%%%%%%%%%%%%%%%%%%%%%%%%%%%%%%%%%%%%%%%
\section{A self-contained proof of Fact~\ref{fact2}}
%%%%%%%%%%%%%%%%%%%%%%%%%%%%%%%%%%%%%%%%%%%%%%%%%%%%%%%%%%%%%%%%%%%%%%%%%%%%%%%%%%%%%%%%%%%%%

%%%%%%%%%%%%%%%%%%%%%%%%%%%%%%%%%%%%%%%%%%%%%%%%%%%%%%%%%%%%%%%%%%%%%%%%%%%%%%%%%%%%%%%%%%%%%
Let $\deg_G(u)=\deg_G(v)=\alpha$.
Build a directed single-source single-sink flow network~\cite{PS82} 
$G_{u,v}^f$ from $G_{u,v}$ in the following manner: add a new source node $s$ and a new sink node $t$, 
add an arc (directed edge) from $s$ to every node of $L_{u,v}^G$ of weight zero and capacity $1$, 
add an arc from every node of $R_{u,v}^G$ to $t$ of weight zero and capacity $1$, 
orient every edge $\{{u'},{v'}\}$ of $G_{u,v}$ from ${u'}$ to ${v'}$ and set its capacity to $1$. 
Since $|L_{u,v}^G|=|R_{u,v}^G|=\alpha+1$, 
we have $\PP_{u}^G(u')=\PP_{v}^G(v')=\frac{1}{\alpha+1}$
for all 
$u'\in \nbr^G(u)\cup \{u\}$
and 
$v'\in \nbr^G(v)\cup \{v\}$.
Thus, since $G_{u,v}$ is a complete bipartite graph, 
by a simple scaling it follows that 
$
\text{\emd}_{\!\!\!G_{u,v}} (\PP_u^G,\PP_v^G)
=\frac{\cM}{\alpha+1}$ 
where $\cM$ is the total weight of a minimum-weight maximum $s$-$t$ flow 
on $G_{u,v}^f$. 
Since the node-arc incidence matrix of a directed graph is totally unimodular, 
the flow value of every arc of any extreme-point optimal solution 
of the minimum-weight maximum $s$-$t$ flow 
on $G_{u,v}^f$ is integral and therefore $0$ or $1$ (see Theorem $13.3$ and its corollary in~\cite{PS82}). 
This integrality of flow values and the fact that 
$G_{u,v}$ is a complete bipartite graph
imply $\cM$ is also the total weight of a minimum-weight \emph{perfect} matching of 
$G_{u,v}$.

We now show that there is such a minimum-weight perfect matching that uses all the zero-weight edges
$\{{u'},{u'}\}$
for all 
$u'\in \{u,v\} \cup \big( \,\nbr_G(u)\cap \nbr_G(v)  \,\big)$.
For a contradiction, suppose that the edge 
$\{{u'},{u'}\}$
is not used for some 
$u'\in \{u,v\} \cup ( \nbr_G(u)\cap \nbr_G(v)  ) \}$.
Since our solution is a perfect matching, the nodes 
${u'}\in L_{u,v}^G$ and ${u'}\in R_{u,v}^G$
must be matched to some other nodes, say to nodes 
${v''}\in R_{u,v}^G$
and 
$u''\in L_{u,v}^G$, respectively.
Then, if we instead use the edges 
$\{{u'},{u'}\}$
and 
$\{{u''},{v''}\}$
then using 
the triangle inequality 
it follows that 
the total weight of this modified perfect matching is no more than that of the original perfect matching since:
%%%%%%%%%%%%%%%%%%%%%%%%%%%%%%%%%%%%%%%%%%%%%%%%%%%%%%%%%%%%%%%%%%%%%%%%%%%%%%%%%%%%%%%%%%%%%
\begin{multline*}
w_{u,v}^G({u'},{u'}) + w_{u,v}^G({u''},{v''}) = 
w_{u,v}^G({u''},{v''}) \leq  
w_{u,v}^G({u''},{u'}) + 
w_{u,v}^G({u'},{u'}) + 
w_{u,v}^G({u'},{v''})
\\
=
w_{u,v}^G({u''},{u'}) + 
w_{u,v}^G({u'},{v''})
\end{multline*}
%%%%%%%%%%%%%%%%%%%%%%%%%%%%%%%%%%%%%%%%%%%%%%%%%%%%%%%%%%%%%%%%%%%%%%%%%%%%%%%%%%%%%%%%%%%%%
%%%%%%%%%%%%%%%%%%%%%%%%%%%%%%%%%%%%%%%%%%%%%%%%%%%%%%%%%%%%%%%%%%%%%%%%%%%%%%%%%%%%%%%%%%%%%

%%\bibliographystyle{elsarticle-num} 
%%\bibliography{references-journ}

\begin{thebibliography}{10}
\expandafter\ifx\csname url\endcsname\relax
  \def\url#1{\texttt{#1}}\fi
\expandafter\ifx\csname urlprefix\endcsname\relax\def\urlprefix{URL }\fi
\expandafter\ifx\csname href\endcsname\relax
  \def\href#1#2{#2} \def\path#1{#1}\fi

\bibitem{book99}
M.~R. Bridson, A.~H\"{a}fliger, Metric Spaces of Non-Positive Curvature, 1st
  Edition, Springer-Verlag Berlin Heidelberg, 1999.
\newblock \href {https://doi.org/10.1007/978-3-662-12494-9}
  {\path{doi:10.1007/978-3-662-12494-9}}.

\bibitem{Berger12}
M.~Berger, A Panoramic View of Riemannian Geometry, 1st Edition,
  Springer-Verlag Berlin Heidelberg, 2003.
\newblock \href {https://doi.org/10.1007/978-3-642-18245-7}
  {\path{doi:10.1007/978-3-642-18245-7}}.

\bibitem{ADM14}
R.~{Albert}, B.~{DasGupta}, N.~Mobasheri,
  \href{https://link.aps.org/doi/10.1103/PhysRevE.89.032811}{Topological
  implications of negative curvature for biological and social networks},
  Physical Review E 89 (2014) 032811.
\newblock \href {https://doi.org/10.1103/PhysRevE.89.032811}
  {\path{doi:10.1103/PhysRevE.89.032811}}.
\newline\urlprefix\url{https://link.aps.org/doi/10.1103/PhysRevE.89.032811}

\bibitem{CATAD21}
T.~Chatterjee, R.~{Albert}, S.~Thapliyal, N.~Azarhooshang, B.~{DasGupta},
  Detecting network anomalies using forman-ricci curvature and a case study for
  human brain networks, Scientific Reports 11 (2021).
\newblock \href {https://doi.org/10.1038/s41598-021-87587-z}
  {\path{doi:10.1038/s41598-021-87587-z}}.

\bibitem{JLBB11}
E.~Jonckheere, M.~Lou, F.~Bonahon, Y.~Baryshnikov,
  \href{https://doi.org/10.1080/15427951.2010.554320}{Euclidean versus
  hyperbolic congestion in idealized versus experimental networks}, Internet
  Mathematics 71 (2011) 1--27.
\newblock \href {https://doi.org/10.1080/15427951.2010.554320}
  {\path{doi:10.1080/15427951.2010.554320}}.
\newline\urlprefix\url{https://doi.org/10.1080/15427951.2010.554320}

\bibitem{SJB21}
J.~Sia, E.~Jonckheere, P.~Bogdan, Ollivier-ricci curvature-based method to
  community detection in complex networks, Scientific Reports 9 (2019) 9800.
\newblock \href {https://doi.org/10.1038/s41598-019-46079-x}
  {\path{doi:10.1038/s41598-019-46079-x}}.

\bibitem{Ricci}
A.~K. Simhal, K.~L.~H. Carpenter, S.~Nadeem, J.~Kurtzberg, A.~Song,
  A.~Tannenbaum, G.~Sapiro, G.~Dawson, Measuring robustness of brain networks
  in autism spectrum disorder with \mbox{R}icci curvature, Scientific Reports
  10 (2020) 10819.
\newblock \href {https://doi.org/10.1038/s41598-020-67474-9}
  {\path{doi:10.1038/s41598-020-67474-9}}.

\bibitem{Elumalai2021}
P.~Elumalai, Y.~Yadav, N.~Williams, E.~Saucan, J.~Jost, A.~Samal,
  \href{https://www.biorxiv.org/content/early/2021/12/21/2021.11.28.470231}{Graph
  ricci curvatures reveal atypical functional connectivity in autism spectrum
  disorder}, bioRxiv (2021).
\newblock \href {https://doi.org/10.1101/2021.11.28.470231}
  {\path{doi:10.1101/2021.11.28.470231}}.
\newline\urlprefix\url{https://www.biorxiv.org/content/early/2021/12/21/2021.11.28.470231}

\bibitem{CL03}
B.~Chow, F.~Luo, Combinatorial ricci flows on surfaces, Journal of Differential
  Geometry 63~(1) (2003) 97--129.
\newblock \href {https://doi.org/10.4310/jdg/1080835659}
  {\path{doi:10.4310/jdg/1080835659}}.

\bibitem{Oll11}
Y.~Ollivier, \href{https://hal.archives-ouvertes.fr/hal-00858008}{A visual
  introduction to \mbox{R}iemannian curvatures and some discrete
  generalizations}, in: G.~Dafni, R.~J. McCann, A.~Stancu (Eds.), Analysis and
  Geometry of Metric Measure Spaces: Lecture Notes of the 50th S{\'e}minaire de
  Math{\'e}matiques Sup{\'e}rieures (SMS), Montr\'{e}al, 2011, Vol.~56,
  American Mathematical Society, Providence, RI, USA, 2013, pp. 197--219.
\newblock \href {https://doi.org/10.1090/crmp/056/08}
  {\path{doi:10.1090/crmp/056/08}}.
\newline\urlprefix\url{https://hal.archives-ouvertes.fr/hal-00858008}

\bibitem{Oll09}
Y.~Ollivier, Ricci curvature of markov chains on metric spaces, Journal of
  Functional Analysis 256 (2009) 810--864.
\newblock \href {https://doi.org/10.1016/j.jfa.2008.11.001}
  {\path{doi:10.1016/j.jfa.2008.11.001}}.

\bibitem{Oll10}
Y.~Ollivier, A survey of ricci curvature for metric spaces and markov chains,
  in: M.~Kotani, M.~Hino, T.~Kumagai (Eds.), Advanced Studies in Pure
  Mathematics, Vol.~57, Mathematical Society of Japan, 2010, pp. 343--381.
\newblock \href {https://doi.org/10.2969/aspm/05710343}
  {\path{doi:10.2969/aspm/05710343}}.

\bibitem{Oll07}
Y.~Ollivier,
  \href{https://www.sciencedirect.com/science/article/pii/S1631073X07004414}{Ricci
  curvature of metric spaces}, Comptes Rendus Mathematique 345~(11) (2007)
  643--646.
\newblock \href {https://doi.org/10.1016/j.crma.2007.10.041}
  {\path{doi:10.1016/j.crma.2007.10.041}}.
\newline\urlprefix\url{https://www.sciencedirect.com/science/article/pii/S1631073X07004414}

\bibitem{DJY20}
B.~{DasGupta}, M.~V. Janardhanan, F.~Yahyanejad, Why did the shape of your
  network change? (on detecting network anomalies via non-local curvatures),
  Algorithmica 82~(7) (2020) 1741--1783.
\newblock \href {https://doi.org/10.1007/s00453-019-00665-7}
  {\path{doi:10.1007/s00453-019-00665-7}}.

\bibitem{DKMF18}
B.~{DasGupta}, M.~Karpinski, N.~Mobasheri, F.~Yahyanejad, Effect of
  gromov-hyperbolicity parameter on cuts and expansions in graphs and some
  algorithmic implications, Algorithmica 80~(2) (2018) 772--800.
\newblock \href {https://doi.org/10.1007/s00453-017-0291-7}
  {\path{doi:10.1007/s00453-017-0291-7}}.

\bibitem{a1}
I.~Benjamini, Expanders are not hyperbolic, Israel Journal of Mathematics 108
  (1998) 33--36.
\newblock \href {https://doi.org/10.1007/BF02783040}
  {\path{doi:10.1007/BF02783040}}.

\bibitem{CCDDMV18}
J.~Chalopin, V.~Chepoi, F.~F. Dragan, G.~Ducoffe, A.~M. A., Y.~Vax\`{e}s, Fast
  approximation and exact computation of negative curvature parameters of
  graphs., Discrete and Computational Geometry 65 (2021) 856--892.
\newblock \href {https://doi.org/10.1007/s00454-019-00107-9}
  {\path{doi:10.1007/s00454-019-00107-9}}.

\bibitem{ipl15}
H.~Fournier, A.~Ismail, A.~Vigneron,
  \href{https://doi.org/10.1016/j.ipl.2015.02.002}{Computing the gromov
  hyperbolicity of a discrete metric space}, Information Processing Letters
  115~(6) (2015) 576--579.
\newblock \href {https://doi.org/10.1016/j.ipl.2015.02.002}
  {\path{doi:10.1016/j.ipl.2015.02.002}}.
\newline\urlprefix\url{https://doi.org/10.1016/j.ipl.2015.02.002}

\bibitem{F03}
R.~Forman, Bochner's method for cell complexes and combinatorial ricci
  curvature, Discrete and Computational Geometry 29~(3) (2003) 323--374.
\newblock \href {https://doi.org/10.1007/s00454-002-0743-x}
  {\path{doi:10.1007/s00454-002-0743-x}}.

\bibitem{Sree1}
R.~P. Sreejith, K.~Mohanraj, J.~Jost, E.~Saucan, A.~Samal, Forman curvature for
  complex networks, Journal of Statistical Mechanics: Theory and Experiment
  2016~(6) (2016) 063206.
\newblock \href {https://doi.org/10.1088/1742-5468/2016/06/063206}
  {\path{doi:10.1088/1742-5468/2016/06/063206}}.

\bibitem{Sree2}
R.~P. Sreejith, J.~Jost, E.~Saucan, A.~Samal,
  \href{https://www.sciencedirect.com/science/article/pii/S0960077917302102}{Systematic
  evaluation of a new combinatorial curvature for complex networks}, Chaos,
  Solitons and Fractals 101 (2017) 50--67.
\newblock \href {https://doi.org/10.1016/j.chaos.2017.05.021}
  {\path{doi:10.1016/j.chaos.2017.05.021}}.
\newline\urlprefix\url{https://www.sciencedirect.com/science/article/pii/S0960077917302102}

\bibitem{Weber17}
M.~Weber, E.~Saucan, J.~Jost, Characterizing complex networks with forman-ricci
  curvature and associated geometric flows, Journal of Complex Networks 5~(4)
  (2017) 527--550.
\newblock \href {https://doi.org/10.1093/comnet/cnw030}
  {\path{doi:10.1093/comnet/cnw030}}.

\bibitem{Samal18}
A.~Samal, R.~P. Sreejith, J.~Gu, S.~Liu, E.~Saucan, J.~Jost, Comparative
  analysis of two discretizations of ricci curvature for complex networks,
  Scientific Reports 8 (2018) 8650.
\newblock \href {https://doi.org/10.1038/s41598-018-27001-3}
  {\path{doi:10.1038/s41598-018-27001-3}}.

\bibitem{G87}
M.~Gromov, Hyperbolic groups, in: S.~M. Gersten (Ed.), Essays in Group Theory,
  Vol.~8, Springer, New York, NY, 1987, pp. 75--263.
\newblock \href {https://doi.org/10.1007/978-1-4613-9586-7\_3}
  {\path{doi:10.1007/978-1-4613-9586-7\_3}}.

\bibitem{CDEHV08}
V.~Chepoi, F.~Dragan, B.~Estellon, M.~Habib, Y.~Vax\`{e}s,
  \href{https://doi.org/10.1145/1377676.1377687}{Diameters, centers, and
  approximating trees of delta-hyperbolicgeodesic spaces and graphs}, in:
  Proceedings of the Twenty-Fourth Annual Symposium on Computational Geometry,
  SCG '08, Association for Computing Machinery, New York, NY, USA, 2008, pp.
  59--68.
\newblock \href {https://doi.org/10.1145/1377676.1377687}
  {\path{doi:10.1145/1377676.1377687}}.
\newline\urlprefix\url{https://doi.org/10.1145/1377676.1377687}

\bibitem{PKBV10}
F.~Papadopoulos, D.~Krioukov, M.~Boguna, A.~Vahdat, Greedy forwarding in
  dynamic scale-free networks embedded in hyperbolic metric spaces, in: 2010
  Proceedings IEEE INFOCOM, 2010, pp. 1--9.
\newblock \href {https://doi.org/10.1109/INFCOM.2010.5462131}
  {\path{doi:10.1109/INFCOM.2010.5462131}}.

\bibitem{YYI2012}
Y.~Yoshida, M.~Yamamoto, H.~Ito, {Improved Constant-Time Approximation
  Algorithms for Maximum Matchings and Other Optimization Problems}, SIAM
  Journal on Computing 41~(4) (2012) 1074--1093.
\newblock \href {https://doi.org/10.1137/110828691}
  {\path{doi:10.1137/110828691}}.

\bibitem{LS15}
Y.~T. Lee, A.~Sidford, Efficient inverse maintenance and faster algorithms for
  linear programming, in: 2015 IEEE 56th Annual Symposium on Foundations of
  Computer Science, 2015, pp. 230--249.
\newblock \href {https://doi.org/10.1109/FOCS.2015.23}
  {\path{doi:10.1109/FOCS.2015.23}}.

\bibitem{Q18}
K.~Quanrud,
  \href{http://drops.dagstuhl.de/opus/volltexte/2018/10032}{{Approximating
  Optimal Transport With Linear Programs}}, in: J.~T. Fineman, M.~Mitzenmacher
  (Eds.), 2nd Symposium on Simplicity in Algorithms (SOSA 2019), Vol.~69 of
  OpenAccess Series in Informatics (OASIcs), Schloss Dagstuhl--Leibniz-Zentrum
  fuer Informatik, Dagstuhl, Germany, 2018, pp. 6:1---6:9.
\newblock \href {https://doi.org/10.4230/OASIcs.SOSA.2019.6}
  {\path{doi:10.4230/OASIcs.SOSA.2019.6}}.
\newline\urlprefix\url{http://drops.dagstuhl.de/opus/volltexte/2018/10032}

\bibitem{DGK18}
P.~Dvurechensky, A.~Gasnikov, A.~Kroshnin,
  \href{https://proceedings.mlr.press/v80/dvurechensky18a.html}{Computational
  optimal transport: Complexity by accelerated gradient descent is better than
  by sinkhorn's algorithm}, in: J.~Dy, A.~Kraus (Eds.), Proceedings of the 35th
  International Conference on Machine Learning, Vol.~80 of Proceedings of
  Machine Learning Research, PMLR, 2018, pp. 1367---1376.
\newline\urlprefix\url{https://proceedings.mlr.press/v80/dvurechensky18a.html}

\bibitem{ASD20}
N.~Azarhooshang, P.~Sengupta, B.~{DasGupta}, A review of and some results for
  ollivier-ricci network curvature, Mathematics 8~(1416) (2020).
\newblock \href {https://doi.org/10.3390/math8091416}
  {\path{doi:10.3390/math8091416}}.

\bibitem{PC19}
G.~Peyr\'{e}, M.~Cuturi,
  \href{http://dx.doi.org/10.1561/2200000073}{Computational optimal transport:
  With applications to data science}, Foundations and Trends in Machine
  Learning 11~(5--6) (2019) 355--607.
\newblock \href {https://doi.org/10.1561/2200000073}
  {\path{doi:10.1561/2200000073}}.
\newline\urlprefix\url{http://dx.doi.org/10.1561/2200000073}

\bibitem{GS02}
A.~L. Gibbs, F.~E. Su, \href{http://www.jstor.org/stable/1403865}{On choosing
  and bounding probability metrics}, International Statistical Review / Revue
  Internationale de Statistique 70~(3) (2002) 419--435.
\newblock \href {https://doi.org/10.2307/1403865} {\path{doi:10.2307/1403865}}.
\newline\urlprefix\url{http://www.jstor.org/stable/1403865}

\bibitem{VVW19}
V.~V. Williams, On some fine-grained questions in algorithms and complexity,
  in: Proceedings of the International Congress of Mathematicians (ICM 2018),
  2019, pp. 3447--3487.
\newblock \href {https://doi.org/10.1142/9789813272880\_0188}
  {\path{doi:10.1142/9789813272880\_0188}}.

\bibitem{AGW15}
A.~Abboud, F.~Grandoni, V.~V. Williams, Subcubic equivalences between graph
  centrality problems, apsp and diameter, in: Proceedings of the Twenty-Sixth
  Annual ACM-SIAM Symposium on Discrete Algorithms, SODA '15, Society for
  Industrial and Applied Mathematics, USA, 2015, pp. 1681--1697.

\bibitem{Pat10}
M.~Patrascu, \href{https://doi.org/10.1145/1806689.1806772}{Towards polynomial
  lower bounds for dynamic problems}, in: Proceedings of the Forty-Second ACM
  Symposium on Theory of Computing, STOC '10, Association for Computing
  Machinery, New York, NY, USA, 2010, pp. 603--610.
\newblock \href {https://doi.org/10.1145/1806689.1806772}
  {\path{doi:10.1145/1806689.1806772}}.
\newline\urlprefix\url{https://doi.org/10.1145/1806689.1806772}

\bibitem{Lee02}
L.~Lee, \href{https://doi.org/10.1145/505241.505242}{Fast context-free grammar
  parsing requires fast boolean matrix multiplication}, Journal of the ACM
  49~(1) (2002) 1--15.
\newblock \href {https://doi.org/10.1145/505241.505242}
  {\path{doi:10.1145/505241.505242}}.
\newline\urlprefix\url{https://doi.org/10.1145/505241.505242}

\bibitem{PARNAS2007}
M.~Parnas, D.~Ron,
  \href{https://www.sciencedirect.com/science/article/pii/S0304397507003696}{Approximating
  the minimum vertex cover in sublinear time and a connection to distributed
  algorithms}, Theoretical Computer Science 381~(1) (2007) 183--196.
\newblock \href {https://doi.org/https://doi.org/10.1016/j.tcs.2007.04.040}
  {\path{doi:https://doi.org/10.1016/j.tcs.2007.04.040}}.
\newline\urlprefix\url{https://www.sciencedirect.com/science/article/pii/S0304397507003696}

\bibitem{ORR12}
K.~Onak, D.~Ron, M.~Rosen, R.~Rubinfeld, A near-optimal sublinear-time
  algorithm for approximating the minimum vertex cover size, in: Proceedings of
  the twenty-third annual ACM-SIAM symposium on Discrete Algorithms, SIAM,
  2012, pp. 1123--1131.

\bibitem{ba2009sublinear}
K.~D. Ba, H.~L. Nguyen, H.~N. Nguyen, R.~Rubinfeld, Sublinear time algorithms
  for earth mover's distance, Theory of Computing Systems 48~(2) (2011)
  428--442.
\newblock \href {https://doi.org/10.1007/s00224-010-9265-8}
  {\path{doi:10.1007/s00224-010-9265-8}}.

\bibitem{MS13}
A.~McGregor, D.~Stubbs, Sketching earth-mover distance on graph metrics, in:
  P.~Raghavendra, S.~Raskhodnikova, K.~Jansen, J.~D.~P. Rolim (Eds.),
  Approximation, Randomization, and Combinatorial Optimization. Algorithms and
  Techniques, Lecture Notes in Computer Science, Vol. 8096, Springer, Berlin,
  Heidelberg, 2013, pp. 274--286.
\newblock \href {https://doi.org/10.1007/978-3-642-40328-6\_20}
  {\path{doi:10.1007/978-3-642-40328-6\_20}}.

\bibitem{Yao77}
A.~C.-C. Yao, Probabilistic computations: Toward a unified measure of
  complexity, in: 18th Annual Symposium on Foundations of Computer Science,
  1977, pp. 222--227.
\newblock \href {https://doi.org/10.1109/SFCS.1977.24}
  {\path{doi:10.1109/SFCS.1977.24}}.

\bibitem{hoeffding63}
W.~Hoeffding, \href{http://www.jstor.org/stable/2282952}{Probability
  inequalities for sums of bounded random variables}, Journal of the American
  Statistical Association 58~(301) (1963) 13--30.
\newline\urlprefix\url{http://www.jstor.org/stable/2282952}

\bibitem{OllV12}
Y.~Ollivier, C.~Villani, \href{https://doi.org/10.1137/11085966X}{A curved
  brunn--minkowski inequality on the discrete hypercube, or: What is the ricci
  curvature of the discrete hypercube?}, SIAM Journal on Discrete Mathematics
  26~(3) (2012) 983--996.
\newblock \href {http://arxiv.org/abs/https://doi.org/10.1137/11085966X}
  {\path{arXiv:https://doi.org/10.1137/11085966X}}, \href
  {https://doi.org/10.1137/11085966X} {\path{doi:10.1137/11085966X}}.
\newline\urlprefix\url{https://doi.org/10.1137/11085966X}

\bibitem{Ga02}
R.~J. Gardner, The {B}runn-{M}inkowski inequality, Bulletin of American
  Mathematical Society 39~(3) (2002) 355--405.
\newblock \href {https://doi.org/10.1090/S0273-0979-02-00941-2}
  {\path{doi:10.1090/S0273-0979-02-00941-2}}.

\bibitem{cms01}
D.~Cordero-Erausquin, R.~J. McCann, M.~Schmuckenschl\"ager, A riemannian
  interpolation inequality {\`{a}} la borell, brascamp and lieb, Inventiones
  Mathematicae 146 (2001) 219--257.
\newblock \href {https://doi.org/10.1007/s002220100160}
  {\path{doi:10.1007/s002220100160}}.

\bibitem{cms06}
D.~Cordero-Erausquin, R.~J. McCann, M.~Schmuckenschl\"ager,
  \href{https://afst.centre-mersenne.org/articles/10.5802/afst.1132/}{Pr\'ekopa{\textendash}leindler
  type inequalities on {Riemannian} manifolds, {Jacobi} fields, and optimal
  transport}, Annales de la Facult\'e des sciences de Toulouse :
  Math\'ematiques Ser. 6, 15~(4) (2006) 613--635.
\newblock \href {https://doi.org/10.5802/afst.1132}
  {\path{doi:10.5802/afst.1132}}.
\newline\urlprefix\url{https://afst.centre-mersenne.org/articles/10.5802/afst.1132/}

\bibitem{PS82}
C.~H. Papadimitriou, K.~Steiglitz, Combinatorial optimization: algorithms and
  complexity, Prentice-Hall, Inc., NJ, USA, 1982.

\end{thebibliography}

\end{document}